\tikzset{join/.code=\tikzset{after node path={%
\ifx\tikzchainprevious\pgfutil@empty\else(\tikzchainprevious)%
edge[every join]#1(\tikzchaincurrent)\fi}}}
\tikzset{>=stealth',every on chain/.append style={join},
         every join/.style={->}}
\numberwithin{equation}{section}
\numberwithin{figure}{section}
       \newtheorem{theorem}{Theorem}[section]
       \newtheorem{proposition}[theorem]{Proposition}
       \newtheorem{lemma}[theorem]{Lemma}
\theoremstyle{definition}
       \newtheorem{definition}[theorem]{Definition}
\newcommand{\W}[2]{{H^{#1,#2}}}
\newcommand{\p}{\partial}
\newcommand{\Hol}{\mathcal{H}_{r,k}}
\newcommand{\pk}{\mathbb{CP}^{k}}
\newcommand{\nuk}{\nu_{k+1}(s)}
\newcommand{\nukzero}{\nu_{k+1,0}(s)}
\begin{document}
\title[The Gromov Limit for Vortex Moduli Spaces]{The Gromov Limit for Vortex Moduli Spaces}
\author{Gabriele La Nave}\address{University of Illinois, Urbana-Champaign, IL, USA\\lanave@illinois.edu } \author{Chih-Chung Liu}\address{Department of Mathematics, National Cheng-Kung University, Tainan, Taiwan\\cliu@mail.ncku.edu.tw}
\maketitle

\begin{abstract}
We generalize the descriptions of vortex moduli spaces in \cite{Br} to more than one section with adiabatic constant $s$. The moduli space is topologically independent of $s$ but is not compact with respect to $C^\infty$ topology. Following \cite{PW}, we construct a Gromov limit for vortices of fixed energy, an attempt to compactify the moduli space.
\end{abstract}

\section{Introduction}
The study of vortex equations finds its origin in Ginzburg-Landau's descriptions of the field configurations of superconducting material (cf. \cite{JT}). There, the energy functional is given in the form of Yang-Mills-Higg functional, which depends on the electromagnetic potential $D$ and the wave function $\phi$ of Cooper pairs of electrons. Stable configurations are governed by minimizing the enregy functional, and the minimizing equations are known as the vortex equations.

The theory is mathematically modelled by equations on Hermitian vector bundles $(E,H)$ of degree $r$ over closed K\"ahler manifolds $M$, where the variables consist of $D \in \mathcal{A}(H)$, an $H$-unitary connection, and global smooth section $\phi$. The K\"ahler form $\omega$ of $M$ is normalized so that $Vol_\omega(M)=1$. The Yang-Mills-Higgs energy functional contains two extra terms from classical Yang-Mills functional, arisen from sections:

\begin{equation}
YMH_{1,1}(D,\phi):=\left|\left|F_{D}\right|\right|_{L^{2}}^{2}+\left|\left|D\phi\right|\right|_{L^{2}}^{2}+\frac{1}{4}\left|\left|\phi\otimes\phi^{*_H}-\tau \right|\right|_{L^{2}}^{2},\label{YMH classical}
\end{equation}

\noindent where $F_D$ is the curvature form of connection $D$ and $\tau$ is a real parameter. The minimizing equations can be deduced by Bogomol'yi arguments (\cite{Br}):

\begin{equation}
\begin{cases}
F_{D}^{(0,2)}=0\\
D^{(0,1)}\phi=0\\
\sqrt{-1}\Lambda F_{D}+\frac{1}{2}(\phi\otimes\phi^{*_H}-\tau)=0.
\end{cases}\label{s-vortex classical}
\end{equation}

\noindent Namely, among pairs $(D,\phi)$ such that $D$ is integrable and $\phi$ is $D$-holomorphic, the last equation imposes a relation on mean curvature and the norm of the sections. As a standard principle in gauge theoretic equations, the existence of solutions is equivalent to a $\phi$ and $\tau$ dependent stability of the line bundle $E$ (cf. \cite{Br},\cite{Br1}).  For line bundles $E=L$, the absence of proper subsheaves turns the stability condition dependent only on the  parameter $\tau$. In \cite{Br} and \cite{Br1}, it is proved that the necessary condition for vortex to exist, followed by integrating the third equation of \eqref{s-vortex classical},

\begin{equation}
\tau \geq 4 \pi r
\label{stability condition classical}
\end{equation}

\noindent is also sufficient. Solutions to \eqref{s-vortex classical} are clearly invariant under standard unitary gauge actions. Within the stable range, the moduli space of solutions to \eqref{s-vortex classical}, or the gauge classes of \emph{vortices}, has been explicitly described in \cite{Br}. The moduli space of vortices to \eqref{s-vortex classical} is precisely $Div_+^r M$, the space of degree $r$ effective divisors on $M$, and is topologically independent of $\tau$.

Generalizations of the classical results for the case of line bundles have been made in \cite{Ba} and \cite{L}. We consider Yang-Mills-Higgs functional defined on $k+1$ sections with a scaling parameter $s$. The parameter $\tau$ can be absorbed into $s$ (cf. \cite{L}) and we rewrite

\begin{equation}
YMH_{k+1,s}(D,\phi):=\frac{1}{s^{2}}\left|\left|F_{D}\right|\right|_{L^{2}}^{2}+\sum_{i=0}^k\left|\left|D\phi_i\right|\right|_{L^{2}}^{2}+\frac{s^{2}}{4}\left|\left|\sum_{i=0}^k|\phi_i|^2_H-1 \right|\right|_{L^{2}}^{2},\label{YMH}
\end{equation}

\noindent where $\phi$ is an abbreviation for $(\phi_i)_{i=0}^k$. The corresponding vortex equations are

\begin{equation}
\begin{cases}
F_{D}^{(0,2)}=0\\
D^{(0,1)}\phi_i=0\;\;\forall i\\
\sqrt{-1}\Lambda F_{D}+\frac{s^{2}}{2}(\sum_{i=0}^k|\phi_i|^2_H-1)=0.
\end{cases}\label{s-vortex}
\end{equation}

\noindent The stability condition is then

\begin{equation}
s^2 \geq 4\pi r.
\label{stability condition}
\end{equation}

\noindent Solutions to \eqref{s-vortex} are again invariant under unitary gauge group $\mathcal{G}$. We then define, for $s$ in the stable range, the gauge class of solutions

\begin{definition}
\[\nu_{k+1}(s):=\{(D,\phi)\in\mathcal{A}(H)\times\Omega^0(L)\times\cdots\times\Omega^0(L)\;|\;\eqref{s-vortex} \; hold\}/\mathcal{G}.\]
\label{definition of s vortex moduli spaces}
\end{definition}

\noindent These spaces are topologically independent of $s$ and fibers over a space with explicit description. Detailed descriptions are provided in section 3. Each class $[D_s,\phi_s]\in\nuk$ represents a unique holomorphic structure for the line bundle $L$. Nevertheless, the topological structures of the line bundle $L$, determined by the mean value of the corresponding curvatures $F_{D_s}$, remain undisturbed until the adiabatic limit $s=\infty$. As noted in \cite{Ba}, the formal limit of \eqref{s-vortex} as $s\to\infty$ is:

\begin{equation}
\begin{cases}
F_{D}^{(0,2)}=0\\
D^{(0,1)}\phi=0\\
\sum_{i=0}^k|\phi_i|^2_H-1=0,
\end{cases}\label{infinity-vortex}
\end{equation}

\noindent signaling some topological distinctions from that of \eqref{s-vortex}. In particular, for the case of one section $k=0$ the third equation above requires the global section to be non-vanishing, which only exists on trivial line bundles. Moreover, the norms of the sections are no longer constrained by curvature. These inconsistencies signal bubbling phenomenon of vortex moduli spaces along some variation of vortices. The main theme of this paper is to study such situations and provide explicit description of the bubble formation. Moreover, we construct a reasonable limiting object for $[D_s,\phi_s]$ without topological loss.

The bubbling phenomenon depends crucially on the dynamics of solutions to \eqref{s-vortex} and occurs when singularities form in the limiting solution. As will be shown in section 5, singularities of limiting solutions are due to accumulations of common zeros, or base points, of $\phi_i$'s at the boundary. In particular, consider the generic open subset

\begin{definition}
\[\nukzero :=\{[D,\phi_0,\ldots,\phi_k]\in\nuk\;|\;\cap_i \phi_i^{-1}(0) = \emptyset\}.\]
\label{vortices without common zero}
\end{definition}

\noindent Analytic results from \cite{L} show that no bubbling phenomenon occurs when the entire convergence takes place in $\nukzero$. We are therefore mainly interested in families of vortices for which new common zeros form at infinity. That is, when the sequence in $\nukzero$ converges to a boundary point. The dense open subset above may be identified diffeomorphically by a family of holomorphic maps from $M$ to $\pk$ (cf \cite{Gri}). The Dirichlet energies of these associated maps are precisely the degree of the bundle. For closed Riemann surfaces $M=\Sigma$, these are holomorphic curves with bounded energies and we may then apply results from \cite{Gro} and \cite{PW} to establish their convergence behaviours of Gromov type. The limiting behaviours and objects known as the "bubble trees" are compatible with vortex moduli spaces. The descriptions of bubble trees require some amount of work. For a fixed line bundle $L$ over a closed Riemann surfaces $\Sigma$, we first prove the following theorem.

\begin{theorem}[Formal Removal of Singularities]
Fix a Hermitian line bundle $(L,H)$ over $\Sigma$. Given a sequence of vortices $\{[D_s,\phi_s]\}\subset \nukzero$ approaching the boundary of $\nuk$, there exists a finite set of points $\{p_1,\ldots,p_N\}\subset \Sigma$, integers $\{a_1,\ldots,a_N\}\subset \mathbb{N}$ such that $\sum_ja_j \leq r$, and vortices $[D_s^\prime,\phi_s^\prime]$ with smooth (subsequential) limit $[D_0,\phi_0]$ on line bundle

\[L_0:=L \otimes_j \mathcal{O}(-a_j p_j),\]

\noindent such that

\begin{itemize}
\item $[D_s^\prime,\phi_s^\prime] = [D_s,\phi_s]$ on $\Sigma \backslash \{p_1,\ldots,p_N\}$ (via the isomorphism $L_0 \simeq L$ on $\Sigma \backslash \{p_1,\ldots,p_N\}$).
\item $D_s^\prime$ and $\phi_s^\prime$ satisfy the vortex equation
\begin{equation}
\begin{cases}
  D_s^{\prime (0,1)} \phi_{s,i}^\prime = 0 \;\;\forall i \\
  \sqrt{-1}\Lambda F_{D_s^\prime} + \frac{s^2}{2}\left(\sum_{i=0}^k|\phi_{s,i}^\prime|^2_H-1\right)=0.   \\
  \end{cases}
  \label{s vortex prime intro}
\end{equation}
on $L_0 \to \Sigma$.
\item $[D_0,\phi_0]$ satisfies
\begin{equation}
\begin{cases}
  D_0^{(0,1)} \phi_{0,i} = 0 \;\;\forall i \\
  \sum_{i=0}^k|\phi_{0,i}|^2_H-1=0.   \\
  \end{cases}
\label{infity vortex after removal of singularity intro}
\end{equation}
on $L_0 \to \Sigma$.
\end{itemize}
\label{Removal of Singularities intro}

\end{theorem}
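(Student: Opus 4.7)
The plan is to translate the problem into one about holomorphic maps $\Sigma \to \pk$, apply Gromov's compactness theorem there, and transport the resulting bubble-tree data back to the vortex moduli space.

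First, I would invoke the identification (described in Section 3, following \cite{Gri}) of $\nukzero$ with the space of degree-$r$ holomorphic maps $f_s:\Sigma \to \pk$, sending $[D_s,\phi_s]$ to $f_s = [\phi_{s,0}:\cdots:\phi_{s,k}]$; this is well defined precisely because the sections share no common zeros. The Dirichlet energy of each $f_s$ equals $r$ by the K\"ahler identity for holomorphic maps, so the sequence has uniformly bounded energy. By Gromov compactness \cite{Gro,PW}, after passing to a subsequence, $f_s$ converges to a bubble tree consisting of a base map $f_0:\Sigma \to \pk$ and bubbles attached at finitely many points $\{p_1,\ldots,p_N\}\subset \Sigma$, with $C^\infty_{\mathrm{loc}}$ convergence on $\Sigma\setminus\{p_j\}$. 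Letting $a_j\in\mathbb{N}$ denote the total degree of the bubble(s) attached at $p_j$, conservation of degree gives $\deg f_0 = r-\sum_j a_j \geq 0$, so $\sum a_j \leq r$. The base map $f_0$ pulls $\mathcal{O}_{\pk}(1)$ back to $L_0 = L \otimes_j \mathcal{O}(-a_jp_j)$, and the coordinate pullbacks $\phi_{0,i} := f_0^* Z_i$ are global holomorphic sections of $L_0$ without common zeros, furnishing the candidate limit $[D_0,\phi_0]$.

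Next I would construct $[D_s',\phi_s']$ on $L_0$ by transporting $(D_s,\phi_s)$ across the canonical isomorphism $L_0|_{\Sigma\setminus\{p_j\}} \simeq L|_{\Sigma\setminus\{p_j\}}$ and choosing a gauge representative so that the transported data extends smoothly across each $p_j$. The twist by $\mathcal{O}(-a_jp_j)$ is precisely what absorbs the order-$a_j$ degeneration of the sections predicted by the Gromov analysis at $p_j$, so that \eqref{s vortex prime intro} holds on all of $\Sigma$ and $[D_s',\phi_s'] \to [D_0,\phi_0]$ smoothly off the bubble points.

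Finally, passing \eqref{s vortex prime intro} to the limit $s \to \infty$ yields \eqref{infity vortex after removal of singularity intro}. The two holomorphic equations survive directly under $C^\infty_{\mathrm{loc}}$ convergence, while the moment-map equation reduces to $\sum|\phi_{0,i}|^2_H - 1 = 0$ once one verifies that $\sqrt{-1}\Lambda F_{D_s'}/s^2 \to 0$, which follows from uniform curvature estimates for vortices on $L_0$ (whose degree still lies within the stable range, so the analytic input of \cite{L} applies). The main obstacle is aligning the analytic side (bubble energies extracted from Gromov's theorem, measured in units of Dirichlet energy) with the algebraic side (the vanishing orders by which the transported sections $\phi_{s,i}$ must vanish at $p_j$ in order to extend smoothly as sections of $L_0$): both must produce the same integers $a_j$ for the construction of $[D_s',\phi_s']$ as a genuine smooth vortex on $L_0$ to succeed, and verifying this compatibility is the key step on which the theorem rests.
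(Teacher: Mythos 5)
Your overall architecture matches the paper's: pass to the associated holomorphic maps $f_s=[\phi_{s,0}:\cdots:\phi_{s,k}]$, strip off the part of the data that concentrates at finitely many points, and use the uniform estimates of \cite{L} (Theorem \ref{Main Theorem L}) to pass to the limit and kill the curvature term after dividing by $s^2/2$. The genuine difference is where the data $(p_j,a_j)$ comes from, and this is also where your proposal has a gap. You extract $p_j$ and $a_j$ analytically, as bubble points and bubble energies from Gromov compactness, and then must show that the sections $\phi_{s,i}$ degenerate at $p_j$ to exactly the order $a_j$ so that they can be reinterpreted as smooth holomorphic sections of $L_0=L\otimes_j\mathcal{O}(-a_jp_j)$. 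You flag this compatibility as ``the key step on which the theorem rests'' but do not prove it, and the mechanism you propose --- ``choosing a gauge representative so that the transported data extends smoothly across each $p_j$'' --- cannot supply it: unitary or complex gauge transformations never change the vanishing order of a section, so no gauge choice makes $\phi_{s,i}$ extend as a nonvanishing section of $L_0$ near $p_j$ unless the order-$a_j$ vanishing is already there.

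The paper avoids this problem by running the argument in the opposite direction. Using Theorem \ref{generalized moduli space}, each vortex projects to a tuple of linearly equivalent divisors $(E_{s,0},\dots,E_{s,k})$ in the compact space $\mathbb{E}_{k+1}$; the points $p_j$ and multiplicities $a_j$ are \emph{defined} as the limits and multiplicities of the coalescing common zeros in the decomposition \eqref{divisor sum}. The modified sections are then produced explicitly as $\phi_{s,i}'=\phi_{s,i}\otimes_j\psi_j$ with $\psi_j$ the defining meromorphic section of $\mathcal{O}(-a_jp_j)$, so they tautologically live in $H^0(\Sigma,L_0)$, have no coalescing common zeros, and define degree $r-\sum_ja_j$ maps $f_s'$ whose associated vortices (via the Kazdan--Warner construction $\Phi_s$) are the required $[D_s',\phi_s']$. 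The statement that the energy density of $f_s$ concentrates with weight $a_j$ at $p_j$ then becomes a local computation (equation \eqref{energy density in coordinate}) rather than the input of the proof. If you want to keep your Gromov-compactness route, you would need to add the argument identifying the quantized bubble energies with the intersection multiplicities of the divisors $\phi_{s,i}^{-1}(0)$ --- essentially re-proving the local computation the paper does --- at which point the divisor-first approach is the shorter path.
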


The extended line bundle $L_0$ is of degree $r - \sum_j a_j$. The reduction of degree suggests concentration of energy of vortices near singularities. Appropriately rescaling nearby coordinates by some factor $t_j(s)$, we may smooth out the energy density and define vortices on $\mathbb{C}$, which is identified by $\mathbb{S}^2 \backslash \{p^+\}$, where $p^+$ is the north pole, via stereographic projection. The limiting objects are determined by the rates of energy blow ups.

\begin{theorem}[Renormalization]
For each $p_j$ in Theorem \ref{Removal of Singularities}, there exists $\epsilon > 0$ so that the geodesic disc $B(p_j,\epsilon)$ is conformally equivalent to $B_s \subset \mathbb{S}^2$, an increasing family of domains with $\cup_s B_s = \mathbb{S}^2 \backslash \{p^+\}$, and  the followings hold:

\begin{itemize}
\item The pullbacked vortices $[D_s^*,\phi_s^*]$ on $B_s$, satisfying
\begin{equation}
\begin{cases}
  D_s^{*0,1} \phi_{s,i}^* = 0 \\
  \sqrt{-1}\Lambda_s^* F_{D_s^*}+\frac{s^2}{2t_j(s)^2}(\sum_{i=0}^k|\phi_{s,i}^*|^2_H-1)=0\\
  \end{cases}
  \label{pullbacked s vortex intro}
\end{equation}
\noindent on pullbacked line bundle $L_s$ over $B_s$, coincide with the vortices defined by pullbacked holomorphic functions $\tilde{f}_s: B_s \to \pk$.

\item Exactly one of the followings holds true:

\begin{enumerate}[label=(\alph*)]
\item There exists a $C^1_{loc}$-convergent subsequence of $\{[D_s^*,\phi_s^*]\}$ whose limit $[D_j,\phi_j]$ satisfies

\begin{equation}
\begin{cases}
  D_j \phi_{j,i} = 0 \\
 \sum_{i=0}^k|\phi_{j,i}|^2_H-1=0\\
  \end{cases}
  \label{pullbacked infinity vortex holomorphic sphere intro}
\end{equation}

defined on the entire $\mathbb{S}^2$. That is, a holomorphic sphere in $\pk$ bubbles off.

\item There exsits points $\{p_j^1,\ldots,p_j^{N_j}\}\subset \mathbb{S}^2$, nonnegative integers $a_j^0,a_j^1,\ldots,a_j^{N_j}$, and a $C^1_{loc}$-convergent subsequence of $\{[D_s^*,\phi_s^*]\}$ on $\mathbb{S}^2 \backslash \{p_j^1,\ldots,p_j^{N_j},p^+\}$, whose limit $[D_j,\phi_j]$ satisfies
\begin{equation}
\begin{cases}
  D_j \phi_{j,i} = 0 \\
  \sqrt{-1}\Lambda^* F_{D_j}+\frac{1}{2}(\sum_{i=0}^k|\phi_{j,i}|^2_H-1)=0\\
  \end{cases}
  \label{pullbacked infinity vortex intro}
\end{equation}
\noindent on a degree $a_j^0$ line bundle $L_j$ over $\mathbb{S}^2 \backslash \{p_j^1,\ldots,p_j^{N_j},p^+\}$. Moreover, $(L_j,D_j,\phi_j)$ is the $C^1$ limit of $(L_s^*,D_s^*,\phi_s^*)$.

\end{enumerate}

\item On $\mathbb{S}^2$, $\sqrt{-1}\Lambda^*F_{D_j}$ is a distribution given by a smooth function plus $\sum_{l=1}^{N_j} a_j^l \delta (p_j^l)$.

\end{itemize}
\label{renormalization intro}
\end{theorem}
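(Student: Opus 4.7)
The plan is to imitate the Gromov renormalization at each singularity $p_j$ while keeping track of the Bradlow-type holomorphic-map description of $\nukzero$ from Section 3. Work in a holomorphic chart $(U_j, z)$ around $p_j$ with $z(p_j)=0$ and $B(p_j,2\epsilon)\subset U_j$. On $\nukzero$, each class $[D_s,\phi_s]$ is represented by a holomorphic map $\tilde f_s:\Sigma\to \pk$ of Dirichlet energy bounded by $\deg L$. Because $p_j$ is a bubble point, after extraction of a subsequence $c_s:=\sup_{B(p_j,\epsilon)}|d\tilde f_s|\to\infty$, realised at some $\zeta_s\to p_j$. Set $t_j(s):=c_s$ and define the conformal rescaling $\psi_s(w):=\zeta_s+t_j(s)^{-1}w$; the domains $B_s:=\psi_s^{-1}(B(p_j,\epsilon))$ form an increasing exhaustion of $\mathbb{C}\cong\mathbb{S}^2\setminus\{p^+\}$. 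Pullback produces $(L_s,D_s^*,\phi_s^*)$: conformality of the $\bar\partial$-equation preserves the first equation of \eqref{s-vortex}, while the third picks up the conformal factor $t_j(s)^{-2}$, giving precisely \eqref{pullbacked s vortex intro}. The pullback $\tilde f_s\circ\psi_s$ is holomorphic with $|d(\tilde f_s\circ\psi_s)(0)|=1$, establishing the holomorphic-map representation on $B_s$.

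Next I would apply Gromov--Parker--Wolfson compactness \cite{PW} to the family $\tilde f_s\circ\psi_s:B_s\to\pk$. After a further extraction there is a finite (possibly empty) set $\{p_j^1,\ldots,p_j^{N_j}\}\subset\mathbb{S}^2$ such that $\tilde f_s\circ\psi_s\to\tilde f_j$ in $C^1_{loc}(\mathbb{S}^2\setminus\{p_j^1,\ldots,p_j^{N_j},p^+\})$, with $\tilde f_j$ non-constant and of finite energy. The gauge-theoretic estimates of \cite{L} upgrade this to $C^1_{loc}$ convergence $(D_s^*,\phi_s^*)\to(D_j,\phi_j)$ on the complement of the bubble set. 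The dichotomy between (a) and (b) is then governed by the scalar $s/t_j(s)$: dividing \eqref{pullbacked s vortex intro} through by $s^2/t_j(s)^2$ and passing to the limit, the regime $s/t_j(s)\to\infty$ forces $\sum_i|\phi_{j,i}|^2_H=1$ and produces scenario (a), whereas $s/t_j(s)\to 1$ preserves the vortex equation at unit scale and yields scenario (b); the choice of $t_j(s)$ as the maximal gradient rate of $\tilde f_s$ at the concentration point $p_j$ determines which regime occurs.

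For case (a), Parker--Wolfson removal of singularity at $p^+$ for the finite-energy holomorphic map $\tilde f_j:\mathbb{C}\to\pk$ extends it across $p^+$ to all of $\mathbb{S}^2$, yielding the infinity-vortex \eqref{pullbacked infinity vortex holomorphic sphere intro}. In case (b) the set $\{p_j^l\}$ is non-empty; I would reapply Theorem \ref{Removal of Singularities intro} in the rescaled frame at each $p_j^l$ to produce integers $a_j^l$, twists of the bundle by $\mathcal{O}(-a_j^l p_j^l)$, and an extended line bundle $L_j$ of degree $a_j^0$ over $\mathbb{S}^2\setminus\{p^+\}$ on which \eqref{pullbacked infinity vortex intro} holds. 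The distributional identity for $\sqrt{-1}\Lambda^*F_{D_j}$ then follows from Chern--Weil bookkeeping: the $C^1_{loc}$ limit contributes a smooth function off $\{p_j^l\}$, while testing the rescaled third equation against a bump supported near $p_j^l$ and using $\int_{B(p_j^l,\delta)}\sqrt{-1}\Lambda^*F_{D_s^*}\to a_j^l$ identifies the atomic component as $\sum_l a_j^l\,\delta(p_j^l)$.

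The principal obstacle is the coordination of two compactness machines: Gromov's for holomorphic maps into $\pk$ and the gauge-theoretic one for $(D_s^*,\phi_s^*)$. One must show that the (a)/(b) cleaving is driven entirely by the single scalar $s/t_j(s)$, and that the rescaling rate $t_j(s)$ can be chosen so that exactly one of the two outcomes holds along a subsequence at each $p_j$. This requires uniform control of $|d\tilde f_s|$ near $p_j$, the pointwise bound $|\phi_{s,i}|^2_H\leq 1$ coming from a Bradlow-type maximum principle applied to the third equation of \eqref{s-vortex}, and conformal invariance of the Dirichlet energy to exclude intermediate limits.
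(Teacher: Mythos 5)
Your overall architecture is right --- rescale at $p_j$, identify the pulled-back pair with the vortex built from the pulled-back map, and let the ratio $s/t_j(s)$ govern the dichotomy --- but there are two concrete gaps relative to what the theorem actually asserts.

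First, your choice of scale $t_j(s):=\sup_{B(p_j,\epsilon)}|d\tilde f_s|$ is the Sacks--Uhlenbeck (maximal-gradient) rescaling, which the paper deliberately avoids. The paper uses the Parker--Wolfson renormalization $R_{\epsilon,s}$ in which $t_j(s)$ is fixed by the requirement that a constant amount of energy $C_0<\tfrac{B_0}{2}$ sits outside the southern hemisphere. The point is that the maximal-gradient scale can be too large: a positive amount of energy can then be pushed into the connecting tube at $p^+$, and the limiting curvature distribution acquires an extra atom $\tau_p\,\delta(p^+)$. That would falsify the third bullet of the theorem (which allows atoms only at the $p_j^l$) and break the degree bookkeeping $a_j^0=a_j-\sum_l a_j^l$ underlying case (b). Killing $\tau_p$ requires the isoperimetric/connecting-tube argument (Lemma 5.3 of \cite{PW}), which is only available for the energy-normalized scale. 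With your scale you must either prove $\tau_p=0$ separately or switch to the paper's normalization.

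Second, your dichotomy is incomplete: the ratio $s/t_j(s)$ has three possible limiting regimes, $0$, a finite positive number, and $\infty$, and you only treat the last two. Without excluding the fast blow-up regime $s/t_j(s)\to 0$, the claim that \emph{exactly one} of (a) or (b) occurs is not established. The paper disposes of this case by the asphericality argument of Gaio--Salamon \cite{GS}: after dividing the rescaled third equation by $t_j(s)^2$ the curvature constraint survives but the section constraint degenerates, and the would-be limit is a nonconstant holomorphic sphere in the affine fiber $\mathbb{C}^{k+1}$, which cannot exist. You need this (or an equivalent exclusion). A smaller omission: your first bullet asserts that the pulled-back pair coincides with the vortex built from $\tilde f_s$, but the substance of that claim is that the two candidate special metrics are both produced by the \emph{same} Kazdan--Warner equation, whose solution is unique; conformality of $\bar\partial$ alone does not give this.
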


\noindent The principle to obtain the results above is to associate each generic vortex with a holomorphic map from $\Sigma$ to $\pk$ (see section 3) and apply analytic results from \cite{PW}. The limiting vortex $[D_j,\phi_j]$ above is given by a limiting map $\tilde{f}_{p_j}$ as well. Standard Morrey estimate and bootstrapping arguments allow one to extend $\tilde{f}_{p_j}$ holomorphically to the entire $\mathbb{S}^2$. Analogous extension is possible for vortices but requires certain adjustments.

\begin{theorem}[Removal of Singularities for Limiting Vortices]
Continuing with the setting of Theorem \ref{Removal of Singularities intro} and Theorem \ref{renormalization intro}, the conformal transformations may be modified so that the limiting vortex $[D_j,\phi_j]$ may be extended, after an appropriate gauge transformation, across $p^+$. The extended pair gives rise to a vortex defined on a nontrivial line bundle $L_j$ over $\mathbb{S}^2$ with degree $\leq a_j$. The metric on $\mathbb{S}^2$ may come with a conic singularity.
\label{extension of vortices intro}
\end{theorem}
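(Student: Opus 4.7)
The plan is to lift the removal-of-singularities problem at $p^+$ to the level of holomorphic maps, perform the extension there where the argument is classical, and then descend back to the vortex side. By the correspondence from section 3, the pulled-back vortex $[D_s^*, \phi_s^*]$ on $B_s$ corresponds to a holomorphic map $\tilde{f}_s : B_s \to \mathbb{CP}^k$, and its $C^1_{loc}$-limit in case (b) of Theorem \ref{renormalization intro} gives a holomorphic map $\tilde{f}_{p_j} : \mathbb{S}^2 \setminus \{p_j^1, \ldots, p_j^{N_j}, p^+\} \to \mathbb{CP}^k$. Its Dirichlet energy is bounded by the total energy concentrating in the bubble at $p_j$, so the same Morrey estimate and bootstrapping argument already invoked at the interior points $p_j^l$ extends $\tilde{f}_{p_j}$ across $p^+$ to a holomorphic map of $\mathbb{S}^2 \setminus \{p_j^1, \ldots, p_j^{N_j}\}$ into $\mathbb{CP}^k$.

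With the extended map in hand I would recover the vortex by pulling back the Chern data from $\mathbb{CP}^k$: set $L_j := \tilde{f}_{p_j}^{\,*}\,\mathcal{O}_{\mathbb{CP}^k}(1)$, take $D_j$ to be the pullback of the Chern connection of the Fubini--Study metric, and let $\phi_{j,i}$ be the pullbacks of the homogeneous coordinate sections. The holomorphicity of $\phi_{j,i}$ with respect to $D_j$ is then automatic. In the trivialization inherited from $B_s \subset \mathbb{C}$, the connection form blows up at $p^+$ whenever $L_j$ is nontrivial; this is precisely what the appropriate unitary gauge transformation in the statement compensates for, since composing with the transition function of $L_j$ turns the apparently singular connection into a smooth one in a chart around $p^+$. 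The degree of $L_j$ equals the degree of the extended $\tilde{f}_{p_j}$, which is bounded by the multiplicity $a_j$ assigned at $p_j$ in Theorem \ref{Removal of Singularities intro}, giving $\deg L_j \leq a_j$.

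The main obstacle is the second equation of \eqref{pullbacked infinity vortex intro}, whose operator $\Lambda^*$ depends on the conformal factor of the identification $\psi_s : B(p_j, \epsilon) \to B_s$. The rescaling by $t_j(s)$ stretches the natural flat coordinate so that the limiting geometry is flat $\mathbb{C}$ rather than the round $\mathbb{S}^2$, and these two cannot be made conformally equivalent on all of $\mathbb{S}^2$ without a point singularity. My proposal is to replace $\psi_s$ outside a fixed large disc by a correction that agrees with a stereographic-type compactification near $p^+$; on compact subsets the modified metric coincides with the rescaled flat metric, preserving the $C^1_{loc}$ limit and the validity of \eqref{pullbacked infinity vortex intro} there, but the modification contributes a power-law conformal factor at $p^+$ that manifests as a conic singularity in the completed metric. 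The hard part will then be showing that the curvature equation holds in the distributional sense at the conic point, with $\sqrt{-1}\Lambda^* F_{D_j}$ picking up the delta mass at $p^+$ dictated by the degree of $L_j$; I would verify this by testing against smooth forms supported away from $p^+$, using integration by parts to identify the delta contribution, and passing to the limit via the uniform energy bound inherited from the bubble analysis.
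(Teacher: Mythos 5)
Your opening move---extend the holomorphic map $\tilde{f}_{p_j}$ across $p^+$ by Morrey estimates and bootstrapping---agrees with the paper's starting point, and the paper explicitly concedes that step is easy. The gap is in what comes next. Declaring $D_j$ to be the pullback of the Fubini--Study Chern connection under the extended map does not produce the limiting vortex: in case (b) of Theorem \ref{renormalization intro} the limit is an \emph{affine} vortex, whose connection is the Chern connection of the special Hermitian metric obtained as a limit of Kazdan--Warner gauges, and it is \emph{that} connection which must be carried across $p^+$. Relatedly, the gauge transformation in the statement is not the transition function of $L_j$ compensating for a topological twist; the paper has to invoke the Uhlenbeck--Smith removal-of-singularities machinery, whose hypotheses are a holonomy decay condition (Smith's condition $H$) together with $L^p$ integrability of the curvature, and then construct by hand (solving the ODE \eqref{ODE for decaying gauge} for a $U(1)$-gauge $\gamma$) a trivialization in which the angular component of the connection form decays fast enough to satisfy the conic variant $H_\beta$.

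You have also misdiagnosed why the conic singularity appears. Flat $\mathbb{C}$ \emph{is} conformally equivalent to $\mathbb{S}^2\setminus\{p^+\}$ via stereographic projection, so there is no obstruction of the kind you describe in matching the rescaled flat metric to the round one; modifying $\psi_s$ near $p^+$ to "force" a power-law conformal factor is solving a non-problem. The real obstruction is analytic: Ziltener's estimate (Proposition \ref{z}) gives $e(A,\phi)\leq C_\epsilon|z|^{-4+\epsilon}$ for the affine vortex, which after pulling back to the sphere yields $|F_j|\leq C|\xi|^{-2-\epsilon/2}$ near $p^+$---this fails to be $L^1$ for the round area form, so Theorem M simply does not apply. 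The cone angle $\beta>\epsilon/2$ is chosen precisely so that the conic measure $\rho^{2\beta-2}\,\rho\,d\rho\,d\theta$ restores integrability, $F_j\in L^1_\beta$, after which the $H_\beta$ condition and the conic regularity theory (Donaldson, Theorem \ref{donaldson-regularity}) give Theorem $M_\beta$ and hence the extension. Your closing plan of "testing against smooth forms away from $p^+$" to identify a delta mass would establish the distributional statement about $\sqrt{-1}\Lambda^*F_{D_j}$ at the interior bubble points $p_j^l$, but it does not substitute for the gauge-theoretic extension at $p^+$ itself, which is where all the work in the paper's proof lives.
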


\noindent To this end we have associated a "bubble", meant to smoothen the energy spike, to each point where energy density blows up (the "bubble point"). The description is not yet satisfactory because the inequality at the end of the theorem above may be strict. That is, the bubble does not necessarily retain all the energy near $p_j$. Following \cite{PW}, we may modify the renormalization process to accurately account for all the energy concentrated near each $p_j$ by a finite sequence of bubbles. At the end, we obtain a bubble tree $T$, a wedge sum of $\Sigma$ and $\mathbb{S}^2$'s. The vortices $[D_s,\phi_s]$ "Gromov converges" to a vortex $[\mathcal{D},\varPhi]$ on $T$. The vague terms here will be precisely defined in the due course.

\begin{theorem}[Bubble Tree]
The vortices $V_s:=\{[D_s,\phi_s]\}$ on a degree $r$ line bundle $L$ over $\Sigma$ Gromov converge to a vortex $V:=[\mathcal{D},\varPhi]$ over a degree $r$ line bundle $\mathcal{L}$ over a bubble tree $T$ defined by

\begin{equation}
  T:= T_0 \vee T_1 \vee \cdots  \vee T_{N_V},
  \label{bubble tree wedges}
\end{equation}

\noindent where $T_0 = \Sigma$. For $n\geq 1$, each $T_n$ is a disjoint union of 2-spheres with either round metric $g_{\mathbb{S}^2}$ or conic metric $g_\beta$.
\label{bubble tree intro}
\end{theorem}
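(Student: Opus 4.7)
The plan is to construct the bubble tree $T$ inductively by applying the three preceding theorems in sequence. First, with $T_0:=\Sigma$ as the root, I would invoke Theorem \ref{Removal of Singularities intro} to extract the finite bubble set $\{p_1,\ldots,p_N\}\subset\Sigma$, the multiplicities $a_1,\ldots,a_N$ with $\sum_j a_j\leq r$, and the smooth limit vortex $[D_0,\phi_0]$ on $L_0=L\otimes_j\mathcal{O}(-a_jp_j)$ of degree $r-\sum_j a_j$. This furnishes the base component $\mathcal{L}|_{T_0}=L_0$ together with the pair $(\mathcal{D}|_{T_0},\varPhi|_{T_0})=(D_0,\phi_0)$, accounting for the portion of the total degree $r$ that survives on $\Sigma$.

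At each bubble point $p_j$ I would then apply Theorem \ref{renormalization intro} to the appropriately rescaled vortices. In case (a), the rescaling produces a holomorphic sphere in $\pk$ on all of $\mathbb{S}^2$ with the round metric $g_{\mathbb{S}^2}$, which becomes a terminal leaf of the tree. In case (b), the rescaling produces a vortex on $\mathbb{S}^2\setminus\{p_j^1,\ldots,p_j^{N_j},p^+\}$, and Theorem \ref{extension of vortices intro} extends it across $p^+$, after an appropriate gauge change, to yield a vortex on a nontrivial line bundle of positive integer degree $\leq a_j$ over $\mathbb{S}^2$ endowed with the possibly conic metric $g_\beta$. The remaining bubble points $p_j^l$ on this sphere inherit the same dichotomy, and I attach a further sphere at each by wedge sum and iterate; the collection of spheres grafted at the single point $p_j$ becomes the component $T_j$.

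The principal obstacle is to show that this iteration terminates, so that the wedge \eqref{bubble tree wedges} is finite. This is an energy/degree quantization argument: every bubble carries a strictly positive integer amount of degree — a nonconstant holomorphic sphere in $\pk$ has positive degree in case (a), and in case (b) the bubble vortex lives on a nontrivial line bundle over $\mathbb{S}^2$ by Theorem \ref{extension of vortices intro}, hence again has positive integer degree. Applied recursively at each $p_j^l$, Theorem \ref{Removal of Singularities intro} (adapted to the rescaled vortex, as in Theorem \ref{renormalization intro}) shows that the total degree absorbed into the subtree rooted at $p_j$ is bounded by the fixed integer $a_j$. Therefore that subtree contains at most $a_j$ spheres, and summing over $j$ forces $N_V\leq\sum_j a_j\leq r$, so the construction halts after finitely many steps.

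Finally, I would verify the Gromov convergence and the global degree count. On $\Sigma\setminus\{p_1,\ldots,p_N\}$ the vortices $V_s$ converge in $C^1_{loc}$ to $[D_0,\phi_0]$ by Theorem \ref{Removal of Singularities intro}, and on each bubble sphere the corresponding rescaled vortices converge in $C^1_{loc}$ to the bubble vortex by Theorems \ref{renormalization intro} and \ref{extension of vortices intro}; these partial convergences are compatible at the wedge points because each bubble is produced by a conformal rescaling centered at its attachment point, so the asymptotic behavior on one component matches the limiting profile seen by the next rescaling. The degree identity $\deg\mathcal{L}=(r-\sum_j a_j)+\sum_{n\geq 1}\deg\mathcal{L}|_{T_n}=r$ then holds by construction, yielding the claimed Gromov limit vortex $[\mathcal{D},\varPhi]$ on $\mathcal{L}\to T$.
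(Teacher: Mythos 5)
Your overall architecture — induct on the three preceding theorems, attach a sphere (round or conic) at each bubble point, terminate by quantization, and patch the componentwise $C^1_{loc}$ limits into a Gromov limit — is the same as the paper's. There is, however, a genuine gap in your termination argument. You assert that \emph{every} bubble carries a strictly positive integer degree, citing the "nontrivial line bundle'' clause of Theorem \ref{extension of vortices intro}, and conclude that the subtree rooted at $p_j$ has at most $a_j$ spheres. But the renormalization theorem explicitly allows $a_j^0$ to be a \emph{nonnegative} integer: the limiting bubble map $\tilde{f}_{p_j}$ may be constant and the limiting bubble vortex may have degree zero, with all of the energy $a_j$ escaping into the new bubble points $p_j^1,\ldots,p_j^{N_j}$. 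These are the \emph{ghost bubbles} of Lemma \ref{ghost bubble}, and they carry no degree themselves, so your count $N_V \leq \sum_j a_j$ does not follow; a priori one could imagine an infinite chain of degree-zero spheres each passing all of its energy to a single child.

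The paper closes this loophole with two ingredients you did not use: Theorem \ref{minimum energy requirement} (every nonconstant holomorphic sphere carries energy at least $B_0$, and the renormalization is rigged so that at least $C_0 < B_0/2$ of energy is retained on the northern hemisphere at each stage), and Lemma \ref{ghost bubble} (a ghost bubble necessarily spawns at least \emph{two} new bubble points, so the tree genuinely branches there). Together these force each renormalization to strictly decrease the energy available to any single branch by a definite amount, so the tower has finitely many levels. Your degree-quantization idea does bound the number of \emph{non-ghost} bubbles by $r$, which is the right first step, but you must add the ghost-bubble dichotomy to bound the tree as a whole. A secondary, more minor point: the paper's notion of Gromov convergence is defined via the prolongations $\mathcal{P}_{\epsilon_s}$ built from the Extension Lemma \ref{extension lemma}, not merely via compatible $C^1_{loc}$ limits on the open components; your "matching asymptotics at the wedge points'' is the right intuition but is not by itself the definition being verified.
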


\section{Background and Established Results}
This section briefly summarizes results from \cite{Br} and \cite{L} on the geometric descriptions of some special $\nuk$'s. Analytic techniques developed in \cite{Br} assume $k,s=1$, but they are by no means special to this particular values. We will therefore cite those results with general $k\in\mathbb{N}$ and $s$ in the stable range. Readers familiar with these work may skip to the next section.

Let $(M,\omega)$ be a closed K\"ahler manifold of unit volume. $(L,H)$ is a Hermitian line bundle of degree $r$ over it. Denote by $\mathcal{A}(H)$ the space of $H$-unitary connections and $\Omega^0(L)$ the space of smooth global sections. The symmetries of $(L,H)$ considered are denoted by $\mathcal{G}_{\mathbb{C}}$ and $\mathcal{G}$ , called the complex gauge group and unitary gauge groups, respectively. Both groups act on $\mathcal{A}(H)$, $\Omega^0(L)$, the metrics, complex structures, and curvature forms of $L$ in the standard ways (see \cite{Br} or \cite{K}). The necessity of the stability condition $s^2 \geq 4\pi r$ implies immediately that $\nuk = \emptyset$ for all $s$ with $s^2 < 4\pi r$. For the critical value $s^2=4\pi r$, the third equation of \eqref{s-vortex} requires that all sections $\phi_i$ to be trivial, and therefore $\nuk$ is precisely $\mathcal{A}(H) / \mathcal{G}_{\mathbb{C}}$, or the space of holomorphic structures of $L$. For $M=\Sigma$, the space corresponds to the Jacobian torus of degree $r$, $Jac^r \Sigma$.

Analytic discussion enters when $s^2 > 4\pi r$. Similar to the search of Hermitian-Einstein connections, solving the tensorial vortex equation modulo unitary gauge group is equivalent to searching for special Hermitian metric modulo complex gauge group. The restatement of the problem by variation of metrics invites classical analytic tools from \cite{kw} to enter the central argument.

We briefly summarize the correspondence of the two aspects. For a Hermitian line bundle $(L,H)$, it is a classical fact that the space of unitary connection $\mathcal{A}(H)$ and the space $\mathcal{C}$ of holomorphic structures, or the collection of $\mathbb{C}$-linear operators

\[\bar{\p}_L:\Omega^0(L) \to \Omega^{0,1}(L)\]

\noindent satisfying Leibiniz rule and $\bar{\p}_L \circ \bar{\p}_L=0$, identify each other. Fix $k+1$ global sections $\phi=(\phi_i)_{i=0}^k$. The original tensorial approach to solve \eqref{s-vortex} is then equivalent to finding a holomorphic structure $\bar{\p}_L$, that makes all $\phi_i$ holomorphic, so that the corresponding unitary connection $D$ and curvature $F_D$ satisfy the equation. This approach, however, is rather abstract. The alternative, or the scalar approach, picks an arbitrary pair in the space of \emph{holomorphic pair}

\begin{equation}
\mathcal{N}_{k+1} := \{(\bar{\p}_L,\phi)\in \mathcal{C} \times \Omega^0(L) \times \cdots \times \Omega^0(L)\;|\;\bar{\p}_L(\phi_i)=0 \;\forall i\}.
\label{holomorphic pair}
\end{equation}

\noindent  We then look for a special $H_s \in \mathcal{H}$, the space of Hermitian structure, whose corresponding connection, and therefore curvature form $F_{D_s}$, together with the given sections satisfy the third equations of \eqref{s-vortex}:

\begin{equation}
\sqrt{-1}\Lambda F_{D_s} + \frac{s^2}{2}(\sum_{i=0}^k|\phi_i|^2_{H_s}-1)=0.
\label{scalar s-vortex}
\end{equation}

\noindent For a line bundle $L$, the complex gauge group $\mathcal{G}_{\mathbb{C}}$ acts transitively on $\mathcal{H}$. In particular, the special metric $H_s$ and the background metric $H$ are related by

\[H_s = e^{2u_s} H,\]

\noindent where $u_s$ is a real smooth function on $M$. The corresponding curvature $F_{D_s}$ is then related to the background curvature $F_H$ by

\begin{equation}
\sqrt{-1}\Lambda F_{D_s}=\sqrt{-1}\Lambda F_H - \Delta u_s,
\label{variation of curvatures}
\end{equation}

\noindent where $\Delta$ is the positive definite Laplacian determined by the K\"ahler form $\omega$. Let $c_1 = 2\pi r$, and $c(s)=2c_1 - \frac{s^2}{2}$, which is negative for $s$ in the stable range. Also let $\psi$ be the unique solution to the Poisson equation

\begin{equation}
\Delta \psi = \sqrt{-1} \Lambda F_H -c_1.
\label{equation for psi}
\end{equation}

\noindent It can be readily verified that solving \eqref{scalar s-vortex} above is equivalent to solving the following {\em Kazdan-Warner equation}

\begin{equation}
\Delta \varphi_s+\frac{s^2}{2}h e^{\varphi_s} - c(s)=0,
\label{Kazdan-Warner equation}
\end{equation}

\noindent where $\varphi_s=2(u_s-\psi)$ and the norm function

\begin{equation}
h=-e^{2\psi}\sum_{i=0}^k |\phi_i|_H^2
\label{norm function}
\end{equation}

\noindent is non-positive and vanishes precisely at the common zeros of $\phi_i$'s, an effective divisor $\mathcal{E}$ of degree $\leq r$. For these choices of $c(s)$ and $h$, techniques developed in \cite{kw} guarantee a unique smooth solution $\varphi_s$ for each finite $s$ in the stable range. The unique existence is proved by the standard arguments of upper and lower solutions of elliptic operators and applications of maximum principles.

The analytic results imply that given a holomorphic pair $(\bar{\p}_L,\phi)\in\mathcal{N}_{k+1}$ in \eqref{holomorphic pair}, the special metric $H_s$ to solve \eqref{scalar s-vortex} is uniquely determined if $s^2 > 4\pi r$. One may readily recognizes the gauge ambiguities and conclude that the space of $\mathcal{G}$-classes of solutions to \eqref{s-vortex} corresponds bijectively to $\mathcal{N}_{k+1} / \mathcal{G}_{\mathbb{C}}$. For $k=0$, this space is further identified, up to a $\mathcal{G}_{\mathbb{C}}$ action, with the space $Div_+^r (M)$ of effective divisor of degree $r$ (cf. \cite{Br}). Indeed, an effective divisor of degree $r$ determines a holomorphic line bundle $L=\mathcal{O}(\mathcal{E})$ with holomorphic structure $\bar{\p}$, and all global sections vanishing along $\mathcal{E}$ are in one complex gauge orbit. We have,

\begin{theorem}[Description of $\nu_1(s)$ for $M$]

\[\nu_1(s)=
\begin{cases}
\emptyset &; s^2 <  \pi r \\
\mathcal{A}(H)/\mathcal{G}_{\mathbb{C}} &; s^2 = 4 \pi r \\
Div_+^r(M)&; s^2 > 4 \pi r. \\
\end{cases} \]
\label{moduli spaces for one section manifold}
\end{theorem}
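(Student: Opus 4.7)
The proof plan divides into three regimes according to the sign of $s^2-4\pi r$, all of which are detected by integrating the third equation of \eqref{s-vortex} (for $k=0$) against $\omega^n$. Using $Vol_\omega(M)=1$ and the Chern--Weil normalization $\int_M \sqrt{-1}\Lambda F_D\, \omega^n = 2\pi r$ (which is the content of the definition $c_1=2\pi r$ appearing just before \eqref{equation for psi}), any vortex $(D,\phi)$ must satisfy
\[
\|\phi\|_{L^2}^2 \;=\; 1 - \frac{4\pi r}{s^2}.
\]
This single identity kills the sub-stable range: if $s^2<4\pi r$ the right-hand side is negative and $\nu_1(s)=\emptyset$. (The ``$\pi r$'' printed in the theorem statement is evidently a typographical slip for ``$4\pi r$''.)

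At the critical value $s^2=4\pi r$ the same identity forces $\phi\equiv 0$, so the system collapses to the integrability condition $F_D^{(0,2)}=0$ together with the Hermitian--Einstein equation $\sqrt{-1}\Lambda F_D=2\pi r$. Writing a candidate metric as $H_s=e^{2u}H$, \eqref{variation of curvatures} turns the Einstein equation into the Poisson equation $\Delta u = \sqrt{-1}\Lambda F_H - 2\pi r$, whose right-hand side has zero mean by Chern--Weil and which therefore admits a solution unique up to an additive constant. Consequently every $\mathcal{G}_{\mathbb{C}}$-orbit of holomorphic structures on $L$ contains a unique $\mathcal{G}$-orbit of Einstein connections, yielding the bijection $\nu_1(s)\cong \mathcal{A}(H)/\mathcal{G}_{\mathbb{C}}$.

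For the stable range $s^2>4\pi r$ I pass to the scalar reformulation reviewed in the excerpt. Given any holomorphic pair $(\bar{\p}_L,\phi)\in\mathcal{N}_1$, the metric $H_s$ realizing \eqref{scalar s-vortex} is constructed by solving the Kazdan--Warner equation \eqref{Kazdan-Warner equation}, in which $c(s)<0$ and $h\leq 0$; the upper/lower-solution machinery of \cite{kw} provides a unique smooth $\varphi_s$ as long as $h\not\equiv 0$. The degenerate case $\phi\equiv 0$ is ruled out because it reduces \eqref{Kazdan-Warner equation} to $\Delta\varphi_s=c(s)<0$, which is inconsistent on a closed manifold. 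This sets up a bijection $\nu_1(s)\cong \mathcal{N}_1/\mathcal{G}_{\mathbb{C}}$.

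It remains to identify $\mathcal{N}_1/\mathcal{G}_{\mathbb{C}}$ with $Div_+^r(M)$ for $k=0$, and this is the step I expect to carry the real geometric content. The forward map sends $(\bar{\p}_L,\phi)$ to the zero divisor $(\phi)_0$, of degree $r$; its inverse sends an effective divisor $\mathcal{E}$ of degree $r$ to the pair $(\mathcal{O}(\mathcal{E}),s_\mathcal{E})$ transported to $L$ via any smooth bundle isomorphism $\mathcal{O}(\mathcal{E})\simeq L$. The delicate point is transitivity of the complex gauge action on holomorphic pairs with common zero divisor: two nontrivial holomorphic sections of $L$ with the same zero divisor have a local ratio that is holomorphic and nowhere vanishing on the complement of their common vanishing locus and extends smoothly across it by matching multiplicities, producing a global $\mathbb{C}^*$-valued smooth function on $M$ and hence a complex gauge transformation intertwining the two holomorphic structures. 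This transitivity statement is precisely the geometric input of Bradlow's argument in \cite{Br}, which I would cite rather than reprove.
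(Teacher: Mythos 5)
Your proposal is correct and follows essentially the same route as the paper: necessity of $s^2\geq 4\pi r$ by integrating the third vortex equation, triviality of $\phi$ at the critical value, the Kazdan--Warner scalar reduction for $s^2>4\pi r$, and the identification of $\mathcal{N}_1/\mathcal{G}_{\mathbb{C}}$ with $Div_+^r(M)$ via zero divisors as in \cite{Br}. You supply a few details the paper leaves implicit (the explicit $L^2$ identity, the Poisson equation at the critical value, ruling out $\phi\equiv 0$ in the stable range, and the observation that ``$\pi r$'' in the first case is a typo for ``$4\pi r$''), but the argument is the same.
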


\noindent For a closed Riemann surface $M=\Sigma$, $Div_+^r(\Sigma)$ is precisely the space of unordered $r$ points, or the symmetric space and $\mathcal{A}(H) / \mathcal{G}_{\mathbb{C}}$ is identified with the Jacobian torus of degree $r$. We have

\begin{theorem}[Description of $\nu_1(s)$ for $\Sigma$]

\[\nu_1(s)=
\begin{cases}
\emptyset &; s^2 <  \pi r \\
Jac^r \Sigma &; s^2 = 4 \pi r \\
Sym^r \Sigma&; s^2 > 4 \pi r. \\
\end{cases} \]
\label{moduli spaces for one section}
\end{theorem}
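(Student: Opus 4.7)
The plan is to reduce this theorem to Theorem \ref{moduli spaces for one section manifold} by specializing the general K\"ahler base $M$ to a closed Riemann surface $\Sigma$ and then invoking two classical identifications specific to Riemann surfaces. No further analytic work is required beyond what has already been developed in the general setting; the argument is essentially bookkeeping together with two standard bijections.

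For the empty case $s^2 < 4\pi r$, I would integrate the third equation of \eqref{s-vortex} with $k=0$ against $\omega$. Using $\Vol_\omega(\Sigma)=1$ together with the Chern-Weil identity $\int_\Sigma \sqrt{-1}\Lambda F_D\,\omega = 2\pi r$, this yields
\[
\int_\Sigma |\phi_0|^2_H\,\omega \;=\; 1-\frac{4\pi r}{s^2},
\]
which is negative when $s^2 < 4\pi r$, contradicting non-negativity of the integrand. In the critical case $s^2 = 4\pi r$, the same identity forces $\phi_0 \equiv 0$, so the vortex equations reduce to $\sqrt{-1}\Lambda F_D = 2\pi r$ (with $F_D^{(0,2)}=0$ automatic on a one-dimensional complex base). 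Within each $\mathcal{G}_{\mathbb{C}}$-orbit in $\mathcal{A}(H)$ there is, up to the unitary group $\mathcal{G}$, a unique representative satisfying this constant-trace condition, obtained by solving a linear Poisson equation of the form \eqref{equation for psi}. Hence $\nu_1(4\pi r) \cong \mathcal{A}(H)/\mathcal{G}_{\mathbb{C}}$, which for a degree $r$ line bundle over $\Sigma$ is the Picard variety $Jac^r\Sigma$.

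For the stable case $s^2 > 4\pi r$, Theorem \ref{moduli spaces for one section manifold} gives $\nu_1(s) = Div_+^r(\Sigma)$ directly. The final step is the classical identification $Div_+^r(\Sigma) \cong \symr$: on a Riemann surface, every effective divisor is uniquely a formal sum $\sum n_i p_i$ with $n_i \geq 0$ and $\sum n_i = r$, which is precisely an unordered $r$-tuple of points in $\Sigma$. The only step requiring any care is the Hermite-Einstein bijection used in the critical case; because the underlying PDE is linear on a line bundle, I do not expect it to present a genuine obstacle, and the remainder of the argument is direct specialization.
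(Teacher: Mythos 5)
Your proposal is correct and follows essentially the same route as the paper: the paper obtains the empty and critical cases by integrating the third vortex equation (forcing $\phi_0\equiv 0$ at $s^2=4\pi r$ and a contradiction below it), identifies $\mathcal{A}(H)/\mathcal{G}_{\mathbb{C}}$ with $Jac^r\Sigma$, and deduces the stable case by specializing Theorem \ref{moduli spaces for one section manifold} together with the classical identification $Div_+^r(\Sigma)\cong Sym^r\Sigma$. The only cosmetic discrepancy is the paper's typo $s^2<\pi r$ in the statement, which your argument correctly treats as $s^2<4\pi r$.
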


\noindent $\nuk$ for general $k$ has been described in \cite{BDW} for the case $M=\Sigma$. In the next section, we provide a general description which can be easily specialized to the case of Riemann surfaces. Since we are interested in the adiabatic limit $s \to \infty$, we will from now on assume $s^2 > 4\pi r$, ruling out the first two possibilities in Theorem \ref{moduli spaces for one section manifold} and Theorem \ref{moduli spaces for one section}.

\section{Generalized Vortex Moduli Spaces and Maps to Projective Spaces}

We provide the general descriptions for $\nuk$ here. In the space $\mathcal{N}_1$ in \eqref{holomorphic pair}, we see that after a complex structure $\bar{\p}_L$ is fixed, holomorphic sections are determined by effective divisors of degree $r$ up to $\mathcal{G}_\mathbb{C}$ gauge. For holomorphic tuples with $k+1$ sections, it is natural to analogously identify each vortex by the tuple of $k+1$ divisors defined by each section. However, ambiguities and restrictions arise. An immediate restriction is that all divisors must define isomorphic holomorphic line bundle, as we are fixing one holomorphic structure at a time. In another words, all divisors must be linearly equivalent. Therefore we start with the space

\begin{definition}
\[\mathbb{E}_{k+1}:=\{(E_0,\ldots,E_k) \in (Div_r^+(M))^{\times k+1}\;|\;E_0 \sim \cdots \sim E_k\}.\]
\label{divisor labelling space}
\end{definition}

\noindent This is a closed subset of $(Div_r^+(M))^{\times k+1}$ and therefore compact. It however is not an identification of $\nuk$. An effective divisor $E_i$ determines a global section $\phi_i$ up to an element in $\mathcal{G}_\mathbb{C}$, and the gauge ambiguity for each $i$ need not be unitary gauge equivalent. That is, $\mathbb{E}_{k+1}$ only identifies $\mathcal{N}_{k+1}$ up to a $(\mathcal{G}_\mathbb{C})^{k+1}$ action, which is larger than the diagonal action $\mathcal{G}_\mathbb{C}$ on $\mathcal{N}_{k+1}$ used to define $\nuk$. Therefore, each element in $\mathbb{E}_{k+1}$ determines a vortex element in $\nuk$ up to a $(\mathcal{G}_\mathbb{C})^{k+1} / \mathcal{G}_\mathbb{C}$ orbit. More precisely, we have

\begin{theorem}
The space $\nuk$ fibers over the space $\mathbb{E}_{k+1}$ with toric fiber $(\mathbb{C}^*)^{k+1} / \mathbb{C}^*$.
\label{generalized moduli space}
\end{theorem}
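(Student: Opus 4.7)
The plan is to construct a surjective map $\Phi\colon \nuk \to \mathbb{E}_{k+1}$ by divisor assignment and then identify its fibers, using the gauge-theoretic identification $\nuk \simeq \mathcal{N}_{k+1}/\mathcal{G}_{\mathbb{C}}$ recalled in Section 2.

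First I would define $\Phi$ on representatives by $[\bar{\partial}_L,\phi_0,\ldots,\phi_k]\longmapsto ((\phi_0),\ldots,(\phi_k))$, where $(\phi_i)$ denotes the zero divisor of $\phi_i$. This is well-defined on $\mathcal{G}_{\mathbb{C}}$-orbits because any $g\in\mathcal{G}_{\mathbb{C}}$ is a nowhere-vanishing smooth $\mathbb{C}^{*}$-valued function, so $(g\phi_i)=(\phi_i)$. The image lies in $\mathbb{E}_{k+1}$ because the $\phi_i$'s are simultaneously holomorphic sections of the same $(L,\bar{\partial}_L)$, so the divisors are linearly equivalent. Surjectivity is then immediate: for $(E_0,\ldots,E_k)\in\mathbb{E}_{k+1}$, the common class in $\Pic^r(M)=\mathcal{A}(H)/\mathcal{G}_{\mathbb{C}}$ determines a holomorphic structure $\bar{\partial}_L$ on $L$ together with a distinguished section $\phi_0$ with $(\phi_0)=E_0$, and for each $i\geq 1$ the linear equivalence $E_i\sim E_0$ provides a meromorphic function $f_i$ on $M$ for which $\phi_i:=f_i\phi_0$ is a holomorphic section of $L$ with $(\phi_i)=E_i$.

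The central step is the fiber identification. Fix $(E_0,\ldots,E_k)\in\mathbb{E}_{k+1}$ and a preimage $(\bar{\partial}_L,\phi_0,\ldots,\phi_k)$. For any other preimage $(\bar{\partial}_{L'},\phi'_0,\ldots,\phi'_k)$, the two holomorphic structures lie in the same $\mathcal{G}_{\mathbb{C}}$-orbit (both represent $[E_0]\in\Pic^r(M)$), so we may apply an element of $\mathcal{G}_{\mathbb{C}}$ to normalize $\bar{\partial}_{L'}=\bar{\partial}_L$. The stabilizer of $\bar{\partial}_L$ in $\mathcal{G}_{\mathbb{C}}$ consists of $g$ satisfying $\bar{\partial}g=0$, i.e., globally holomorphic $\mathbb{C}^{*}$-valued functions on $M$, which by compactness and connectedness of $M$ is exactly the diagonal $\mathbb{C}^{*}\subset\mathcal{G}_{\mathbb{C}}$ of constants. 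After normalization, each pair $\phi_i,\phi'_i$ consists of two holomorphic sections of $(L,\bar{\partial}_L)$ with the same divisor $E_i$, so their ratio is a nowhere-vanishing holomorphic function on $M$, hence a constant $\lambda_i\in\mathbb{C}^{*}$. The preimages modulo the normalization are thus parametrized by $(\lambda_0,\ldots,\lambda_k)\in(\mathbb{C}^{*})^{k+1}$, and the residual diagonal $\mathbb{C}^{*}$ acts by $\lambda\cdot(\lambda_0,\ldots,\lambda_k)=(\lambda\lambda_0,\ldots,\lambda\lambda_k)$, producing the fiber $(\mathbb{C}^{*})^{k+1}/\mathbb{C}^{*}$.

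The main obstacle is clean bookkeeping of the two uses of the complex gauge group: all of $\mathcal{G}_{\mathbb{C}}$ is needed to reach a common representative $\bar{\partial}_L$, and exactly the diagonal $\mathbb{C}^{*}$ stabilizer then remains to kill one factor of the scaling freedom on sections, yielding $(\mathbb{C}^{*})^{k+1}/\mathbb{C}^{*}$ rather than $(\mathbb{C}^{*})^{k}$. A further subtlety is that the divisor $(\phi_i)$ requires $\phi_i\not\equiv 0$, so the fibration description applies on the locus where no component section vanishes identically; degenerations where some $\phi_i$ becomes identically zero sit over a strictly smaller stratum and should be handled separately.
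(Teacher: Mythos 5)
Your proposal is correct and follows essentially the same route as the paper: both assign to a vortex the tuple of zero divisors of its sections, observe this is $\mathcal{G}_{\mathbb{C}}$-invariant, and identify the fiber as the scaling ambiguity $(\mathbb{C}^{*})^{k+1}$ of the sections modulo the diagonal constant stabilizer of $\bar{\partial}_L$. The only ingredient the paper makes explicit that you leave implicit is local triviality of the projection, obtained by covering $\mathbb{E}_{k+1}$ by finitely many opens $U_j$ equipped with base points $\beta_j$ avoiding all divisors and trivializing via evaluation $\phi_i \mapsto \phi_i(\beta_j)$; your pointwise fiber identification supplies this immediately, so the argument is complete in substance.
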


\begin{proof}
For each $p \in M$, let $U_p \subset  \mathbb{E}_{k+1}$ be the open subset such that no divisor contains $p$. Since $\mathbb{E}_{k+1}$ is compact, it may be covered by finitely many such open subsets $U_1,\ldots,U_M$, each equipped with a based point  $\beta_j$ away from all divisors in $U_j$. Each $(E_0,\ldots,E_k) \in U_j$ determines a holomorphic structure $\bar{\p}_L$ of $L$ (ie $L\simeq\mathcal{O}(E_0)\simeq\cdots\simeq\mathcal{O}(E_k)$). Every $E_i$  determines a global holomorphic section $\phi_i$ up to a nonzero constant. Therefore, two such sections are equal if and only if their values are equal at the based point $\beta_j$. Explicitly, consider the projection

\[\tilde{\pi}: \mathcal{N}_{k+1} \to \mathbb{E}_{k+1}\]

\noindent defined by
\[\tilde{\pi}(\bar{\p}_L,\phi_0,\ldots,\phi_k)=(\phi_0^{-1}(0),\ldots,\phi_k^{-1}(0)).\]

\noindent This is a fiber bundle with fiber $(\mathbb{C}^*)^{k+1}$, where the local trivialization over $U_j$ is given by

\[\tilde{\rho}_j(\bar{\p}_L,\phi_0,\ldots,\phi_k) = (\phi_0^{-1}(0),\ldots,\phi_k^{-1}(0),\phi_0(\beta_j),\ldots,\phi_k(\beta_j)).\]

\noindent Finally we note that in the identification $\nuk \simeq \mathcal{N}_{k+1}/\mathcal{G}_\mathbb{C}$, the gauge action does not affect the zeros of the sections and therefore the projection $\tilde{\pi}:\mathcal{N}_{k+1}\to \mathbb{E}_{k+1}$ descends to the quotient $\pi: \nuk \to \mathbb{E}_{k+1}$. It follows that $\nuk$ is a bundle over $\mathbb{E}_{k+1}$ with fiber $(\mathbb{C}^*)^{k+1} / \mathbb{C}^*$.

\end{proof}

\noindent The theorem is clearly consistent with \cite{Br}, where $k=0$.

To relate $\nuk$ to the space of holomorphic maps, we consider its open dense subset:

\begin{definition}
\[\nukzero:= \{[D,\phi_0,\ldots,\phi_k]\in\nuk\;|\;\cap_i \phi_i^{-1}(0)=\emptyset\}.\]
\label{definition of nonvanishing vortices}
\end{definition}

\noindent On this subset, we may control the special gauges (i.e. solutions to \eqref{Kazdan-Warner equation}) and guarantee the existence of their smooth limit.  This is due to the fact that the norm function $h$ in \eqref{norm function} is strictly negative, and we may uniformly bound super and sub solutions for the elliptic Kazdan-Warner equations \eqref{Kazdan-Warner equation}. The general analytic statement from \cite{L} is

\begin{theorem} [Asymptotic Behaviours on $\nukzero$]

On a compact Riemannian manifold $M$ without boundary, let $c_1$ be any constant, $c_2$ any positive constant,
and $h$ any negative smooth function. Let $c(s) = c_1 -c_2 s^2$, for each $s$ large enough,
the unique solutions $\varphi_s \in C^\infty(M)$ for the equations

\[
\Delta\varphi_{s}=c(s)-s^2he^{\varphi_s}
\]

\noindent are uniformly bounded in $\W{l}{p}$ for all $l \in \mathbb{N}$ and $p \in [1,\infty]$. Moreover, in the limit
$s \to \infty$, $\varphi_{s}$ converges smoothly (i.e. uniformly in all $H^{l,p}$) to

\[\varphi_\infty = \log\left(\frac{c_2}{-h}\right),\]

\noindent the unique solution to

\[
he^{\varphi_{\infty}}+c_{2}=0.
\]

\label{Main Theorem L}
\end{theorem}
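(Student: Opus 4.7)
The plan is to reduce the PDE to a perturbation around an explicit pointwise limit, establish a uniform $L^\infty$ bound via the comparison principle, and then bootstrap to smooth convergence by elliptic regularity.

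Since $h<0$ everywhere, the function $\varphi_\infty(x):=\log(c_2/(-h(x)))$ is smooth on $M$ and solves the pointwise algebraic equation $he^{\varphi_\infty}+c_2=0$. Writing $\psi_s:=\varphi_s-\varphi_\infty$ and using $-he^{\varphi_\infty}=c_2$, the PDE reduces to
\[
\Delta\psi_s \;=\; G(x) + c_2 s^2\bigl(e^{\psi_s}-1\bigr), \qquad G := c_1-\Delta\varphi_\infty,
\]
with $G$ a fixed smooth function independent of $s$. I would verify that $\varphi_{\pm}:=\varphi_\infty\pm K s^{-2}$ form respectively a super- and a sub-solution of the original equation once $K>\|G\|_{L^\infty}/c_2$: direct Taylor expansion of $e^{\pm K/s^2}$ gives
\[
\Delta\varphi_{+} - \bigl(c(s)-s^2 h e^{\varphi_+}\bigr) \;=\; -\,G - c_2 K + O(s^{-2}) \;\leq\; 0
\]
for $s$ large, and analogously $\geq 0$ for $\varphi_-$. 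Invoking the Kazdan-Warner comparison theorem from \cite{kw} (applied to the sign regime $-h>0$, $c(s)<0$ that is relevant here) yields $\varphi_-\leq\varphi_s\leq\varphi_+$, and therefore $\|\psi_s\|_{L^\infty}\leq K s^{-2}$.

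Once the $L^\infty$ bound is established, the original RHS can be rewritten as $\Delta\varphi_s=c_1+s^2 h(e^{\varphi_\infty}-e^{\varphi_s})$; since $|e^{\varphi_s}-e^{\varphi_\infty}|\lesssim s^{-2}$, the leading $s^2$ terms cancel and $\|\Delta\varphi_s\|_{L^\infty}$ is uniformly bounded. Standard $L^p$ elliptic regularity then gives uniform $H^{2,p}$ bounds, and differentiating the equation repeatedly (using Sobolev embedding to control the exponential nonlinearity at each stage) produces uniform $H^{l,p}$ control for every $l,p$. Rellich compactness extracts smoothly convergent subsequences, and dividing the equation by $s^2$ and letting $s\to\infty$ forces any limit $\varphi^*$ to satisfy $he^{\varphi^*}+c_2=0$, whose unique solution is $\varphi_\infty$; hence the entire family converges smoothly to $\varphi_\infty$.

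The main obstacle is the comparison step: because the nonlinearity $e^\varphi$ is monotone \emph{increasing} in $\varphi$, the classical maximum-principle comparison for Kazdan-Warner equations (typically formulated for decreasing nonlinearities) is not directly applicable, and one must invoke the more delicate version that exploits the strict convexity of the exponential together with the large negative coefficient $c(s)$. Alternatively one could realize $\psi_s$ as the unique fixed point of the contraction $T_s(\psi) := (\Delta-c_2 s^2)^{-1}\bigl[G + c_2 s^2 R(\psi)\bigr]$ (with $R(\psi):=e^\psi-1-\psi$) on a ball of radius $r\sim s^{-2}$ in $C^0(M)$, bypassing the comparison principle entirely; the parameter-dependent resolvent bound $\|(\Delta-c_2 s^2)^{-1}f\|_\infty \lesssim s^{-2}\|f\|_\infty$ for smooth $f$ then delivers the same $O(s^{-2})$ rate, after which the rest of the argument is an entirely routine elliptic bootstrap.
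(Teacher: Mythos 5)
A preliminary remark: the paper does not prove this theorem itself --- it is imported verbatim from \cite{L}, with only the indication that the uniform bounds come from uniformly controlled super- and sub-solutions of the Kazdan--Warner equation. Your barriers $\varphi_\infty\pm Ks^{-2}$ are exactly that device, and your verification that they are super/sub-solutions once $K>\|G\|_{L^\infty}/c_2$ and $s$ is large is correct. Your worry about the comparison step, however, is backwards. For an equation written as $\Delta u=f(x,u)$ the case in which every solution is squeezed between any sub- and supersolution is $\partial f/\partial u\geq 0$; here $f(x,u)=c(s)-s^2he^{u}=c(s)+s^2|h|e^{u}$ has $\partial f/\partial u=s^2|h|e^{u}>0$, so you are in the favourable monotone regime (this is also exactly why the solution is unique), and the classical comparison applies with no ``delicate version'' needed. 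The ``decreasing nonlinearity'' formulation you recall simply corresponds to putting the exponential term on the other side of the equation. Your contraction-mapping alternative is also valid and gives the same $O(s^{-2})$ bound in $C^0$.

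The genuine gap is in the step you dismiss as ``an entirely routine elliptic bootstrap.'' Writing $\psi_s=\varphi_s-\varphi_\infty$, the bound $\|\psi_s\|_{L^\infty}=O(s^{-2})$ does give $\|\Delta\psi_s\|_{L^\infty}=O(1)$ and hence uniform $H^{2,p}$ bounds, as you say. But at the next stage differentiation of $\Delta\psi_s=G+c_2s^2(e^{\psi_s}-1)$ produces the term $c_2s^2e^{\psi_s}\nabla\psi_s$, and closing the estimate requires $\|\nabla\psi_s\|_{L^p}=O(s^{-2})$. Interpolating between $\|\psi_s\|_{L^\infty}=O(s^{-2})$ and $\|\psi_s\|_{H^{2,p}}=O(1)$ only gives $\|\nabla\psi_s\|_{L^p}=O(s^{-1})$, so the naive bootstrap does not close: the large parameter $s^2$ in front of the nonlinearity exactly eats the gain from Sobolev embedding, and this is the one genuinely delicate point of the whole theorem. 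The repair is to upgrade the decay of lower-order norms before differentiating, e.g.\ by subtracting correctors: since
\[
\bigl(\Delta-c_2s^2\bigr)\Bigl(\psi_s+\tfrac{G}{c_2s^2}\Bigr)=O(s^{-2}),
\]
the resolvent bounds $\|(\Delta-c_2s^2)^{-1}\|_{L^\infty\to L^\infty}\lesssim s^{-2}$ and $\|\nabla(\Delta-c_2s^2)^{-1}\|_{L^\infty\to L^\infty}\lesssim s^{-1}$ yield $\|\nabla\psi_s\|_{L^\infty}=O(s^{-2})$, after which the third-order estimate closes; iterating the corrector construction (in effect building the expansion $\varphi_s=\varphi_\infty-\frac{G}{c_2s^2}+O(s^{-4})$) gives the uniform $H^{l,p}$ bounds for all $l$ and the smooth convergence. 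Without some version of this improved-decay argument, your proof establishes the $C^0$ (indeed $C^{1,\alpha}$) convergence but not the full statement.
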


\noindent The theorem in particular rules out the formation of bubble point away from the boundary of $\nuk$.

Topologically, the space $\nukzero$ is identified with $\Hol$, the space of degree $r$ holomorphic maps from $M$ to $\pk$. For every $f \in \Hol$, consider the following background data: On the anti-tautological line bundle $\mathcal{O}(1)$ over $\pk$ and its pullbacked bundle $L = f^*\mathcal{O}(1)$ over $M$, let $\phi:=(\phi_i)_{i=0}^k$ be the global holomorphic (with respect to the pulled back holomorphic structure) sections on $L$ pullbacked from the hyperplane sections $z_0,\ldots,z_k$ on $\mathcal{O}(1)$ via $f$. $\mathcal{O}(1)$ is equipped naturally with the Fubini-Study metric, which is also pulled back to be the background metric $H$ on $L$, and therefore determines an unitary connection $D$. We then gauge transform the initial data $[D,\phi]$ with complex gauge determined by Kazdan-Warner equation \eqref{Kazdan-Warner equation} into $[D_s,\phi_s]$ that solves vortex equations. These sections clearly have no common zeros, and we define

\begin{equation}
\Phi_s : \Hol \to \nukzero
\label{correspondence maps vortices}
\end{equation}

\noindent by

\[\Phi_s(f) = [D_s,\phi_s].\]

\noindent The correspondence is in fact a diffeomorphism with a natural inverse

\[\Phi_s^{-1} : \nukzero \to \Hol\]

\noindent given by

\[\Phi_s^{-1}([D_s,\phi_s])(p) := f_s (p) = [\phi_{s,0}(p):\cdots:\phi_{s,k}(p)].\]

\noindent As a smooth map, $f_s$ is clearly well defined as sections do not vanish simultaneously. Also, a different choice of trivilization amounts to multiplication of all components with a nonzero constant and therefore does not alter the definition. Moreover, each $f_s$ is holomorphic with respect to the complex structure of $\Sigma$ given by $D_s$, which corresponds to conformal deformations of K\"ahler form $\omega$. Therefore the complex structure is independent of $s$ and the map $\Phi_s^{-1}$ is indeed well defined.  See for example, \cite{L} for detailed verifications.

In the context of Theorem \ref{generalized moduli space}, $\nukzero$ is the restriction of $\nuk$ over the generic open subset

\[\mathbb{E}_{k+1,0} := \{(E_0,\ldots,E_k)\in \mathbb{E}_{k+1}\;|\;\cap_i E_i = \emptyset\}.\]

\noindent For Riemann surface $M=\Sigma$, we have an explicit interpretation of Theorem \ref{generalized moduli space} in terms of holomorphic maps, introduced in \cite{BDW}. Indeed, for $M=\Sigma$, $Div_r^+(\Sigma)=Sym^r \Sigma$, the symmetric $r$ product. For an element $(E_0,\ldots,E_k) \in \mathbb{E}_{k+1}$, we consider the divisor formed by their intersection and a holomorphic map from $\Sigma$ to $\pk$ by the Theorem of Abel-Jacobi. Precisely, let $E=\cap_i E_i \in Sym^l \Sigma$, counting multiplicities, for $0\leq l\leq r$ and $E_i' = E_i -E$. For each $i \geq 1$, the divisor $E_i' - E_0'$ has value zero under Abel-Jacobi map and therefore determines a meromorphic function $\phi_i$ on $\Sigma$. These maps have no common zero and define a holomorphic map of degree $r-l$ locally given by
\[f_\phi(z):= [1,\phi_1(z),\ldots,\phi_k(z)].\]

\noindent The map is unique up to a choice of image of the base point, which corresponds to a choice of representative from the fiber $(\mathbb{C}^*)^{k+1}/\mathbb{C}^*$ in Theorem \ref{generalized moduli space}. The representative is determined by Kazdan-Warner equation. We therefore uniquely associate a vortex $[D,\phi] \in \nuk$ with an element $(E,f_\phi) \in Sym^l \times \mathcal{H}_{r-l,k}$. This is the well known identification of $\nuk$ with the \emph{Uhlenbeck Compactification} from \cite{BDW}.

\begin{theorem}[Uhlenbeck Compactification \cite{BDW}]
For all finite $s$ large enough, the space $\nuk$ is homeomorphic to the stratification

\begin{equation}
\overline{\mathcal{H}}_{r,k} := \bigsqcup_{l=0}^r \left(Sym^l \Sigma \times \mathcal{H}_{r-l,k}\right).
\end{equation}

\label{Uhlebceck compactification}
\end{theorem}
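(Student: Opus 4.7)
The plan is to construct an explicit bijection $\Phi\colon\nuk\to\overline{\mathcal{H}}_{r,k}$ that sends each vortex to the pair consisting of its common-zero divisor and its residual holomorphic map, and then to upgrade this bijection to a homeomorphism using Theorem \ref{Main Theorem L} together with the fibration of Theorem \ref{generalized moduli space}. On the generic stratum $l=0$ this is just the diffeomorphism $\nukzero\simeq\mathcal{H}_{r,k}$ provided by $\Phi_s$ in \eqref{correspondence maps vortices}; the task is to extend it consistently across all degrees $0\leq l\leq r$.

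First, I would define $\Phi$. Given $[D,\phi]\in\nuk$, fix a holomorphic representative $(\bar{\p}_L,\phi_0,\ldots,\phi_k)\in\mathcal{N}_{k+1}$. All $\phi_i$ are $\bar{\p}_L$-holomorphic sections of $L$, and their common vanishing divisor $E_\phi:=\gcd_i(\phi_i)$ is a well-defined effective divisor of some degree $l\in\{0,\ldots,r\}$. Letting $\sigma$ be the canonical section of $\mathcal{O}(E_\phi)$, each quotient $\tilde{\phi}_i:=\phi_i/\sigma$ is a global holomorphic section of the degree $r-l$ bundle $L\otimes\mathcal{O}(-E_\phi)$, and the $\tilde{\phi}_i$ share no common zero. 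Hence $f_\phi(p):=[\tilde{\phi}_0(p):\cdots:\tilde{\phi}_k(p)]$ is a well-defined holomorphic map in $\mathcal{H}_{r-l,k}$. The full $\mathcal{G}_{\mathbb{C}}$-action multiplies all $\phi_i$ by the same function, leaving both $E_\phi$ and $f_\phi$ invariant, so $\Phi([D,\phi]):=(E_\phi,f_\phi)\in Sym^l\Sigma\times\mathcal{H}_{r-l,k}$ is well-defined.

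Second, I would construct the inverse. Given $(E,f)\in Sym^l\Sigma\times\mathcal{H}_{r-l,k}$, form the degree $r$ holomorphic line bundle $L_{E,f}:=f^*\mathcal{O}_{\pk}(1)\otimes\mathcal{O}(E)$ with the tensor-product holomorphic structure and sections $\phi_i:=f^*z_i\otimes\sigma_E$, where $\sigma_E$ is the canonical section of $\mathcal{O}(E)$. This produces an element of $\mathcal{N}_{k+1}$ whose common zero divisor is exactly $E$, and Theorem \ref{Main Theorem L} applied to the Kazdan-Warner equation \eqref{Kazdan-Warner equation} with $h$ as in \eqref{norm function} selects a unique Hermitian metric in the $\mathcal{G}_{\mathbb{C}}$-orbit, yielding a well-defined class $\Phi^{-1}(E,f)\in\nuk$. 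The identities $\Phi\circ\Phi^{-1}=\mathrm{Id}$ and $\Phi^{-1}\circ\Phi=\mathrm{Id}$ then follow directly from the explicit formulas.

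Third, I would establish continuity of both $\Phi$ and $\Phi^{-1}$ with respect to the stratified (Uhlenbeck) topology on $\overline{\mathcal{H}}_{r,k}$. Within a fixed stratum, continuity follows from the smooth dependence of the Kazdan-Warner solution $\varphi_s$ on the data $(h,c(s))$ provided by Theorem \ref{Main Theorem L}, combined with the standard continuous dependence of zero divisors on sections in $C^\infty$. I expect the main obstacle to be continuity across strata: for a sequence $[D_n,\phi_n]$ with common divisor of degree $l$ whose additional zeros accumulate at points of $\Sigma$, one must show that the limiting pair $(E,f)$ has divisor component carrying precisely the accumulated multiplicity and map component of correspondingly reduced degree. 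This matching rests on combining the continuity of the base fibration $\pi\colon\nuk\to\mathbb{E}_{k+1}$ from Theorem \ref{generalized moduli space} with a local factorization of nearby holomorphic sections through powers of a local coordinate, so that zeros migrating to a common point are absorbed into the divisor factor $\sigma_E$ while $f_\phi$ degenerates only by a reduction of degree. These local models reproduce the Uhlenbeck stratification on the target exactly.
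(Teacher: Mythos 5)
Your construction is essentially the paper's: the paper likewise identifies a vortex with the pair (common-zero divisor $E=\cap_i E_i$ counted with multiplicity, residual degree-$(r-l)$ map built from the quotiented sections), recovers the fiber data of Theorem \ref{generalized moduli space} from the map, and lets the Kazdan--Warner equation \eqref{Kazdan-Warner equation} pin down the metric representative; the paper then defers the verification that this bijection is a homeomorphism for the sequential Uhlenbeck topology to \cite{BDW}, which is exactly the cross-strata continuity you identify as the crux and sketch. So the proposal matches the paper's route, with your third step filling in (in outline) the part the paper cites rather than proves.
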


\noindent The topology of Uhlenbeck compactification is given sequentially.  $(E_i,f_i) \to (E,f)$ if an only if

\begin{itemize}
\item $f_i \to f$ in $\mathcal{C}_0^\infty(\Sigma - E) $ topology, and
\item $e(f_i) \to e(f)$ in weak* topology.
\end{itemize}

\noindent Here, $e(f)$ is the energy density $|df|^2$ of $f$ with respect to $\omega$ and Fubini-Study metric on $\pk$. The weak convergence above says that for all $g \in \mathcal{C}^\infty(\Sigma)$, we have

\[\int_\Sigma g e(f_i) := \int_\Sigma g f_i^* \omega_{FS} \to \int_\Sigma g f^*\omega_{FS} := \int_\Sigma g e(f)\]

\noindent as $i \to \infty$. The precise homeomorphic correspondence is exhibited therein.

\section{Gromov Compactness}

Having identified $\nukzero$ with $\Hol$, we intend to study convergence behaviours of vortices in terms of maps. We state relevant definitions and theorems from \cite{PW} regarding Gromov compactness of the space of $jJ-$ holomorphic maps in the form applicable to the aimed resultss. From this section on, we discuss the case $(M,\omega) = (\Sigma,\omega)$, a closed K\"ahler Riemann surface with complex (conformal) structure $j$ compatible with $\omega$. Also, fix a symplectic manifold $Z$ with almost complex structure $J$. When no confusion arises, we use the same notation for energy density function

\[f^* \omega_Z := e(f) \omega_\Sigma\]

\noindent and the measure it represents on $C^\infty(\Sigma)$. Proofs from \cite{PW} will be outlined or sketched only when they are relevant to our applications.

We need an important estimate on the energy density function by its integral, a standard result following from Bochner type estimate on $\Delta e(f)$. The inequality allows one to apply the Theorem of Arzela-Ascoli.

\begin{proposition} (Energy Estimate - Theorem 2.3 in \cite{PW})
There exists positive constants $C$ and $\epsilon_0$, depending only on the complex geometry of $\Sigma$, such for all $C^1$ J-holomorphic map $f : \Sigma \to Z$ and geodesic disc $B(2r)$ with radius $2r$, with the property that

\begin{equation}
E(2r):= \int_{B(2r)} e(f) \leq \epsilon_0,
\label{main energy bound}
\end{equation}

\noindent we have

\begin{equation}
\sup_{B(r)} e(f) \leq \frac{C}{r^2} E(2r).
\label{main energy estimate}
\end{equation}
\label{Bochner energy estimate}
\end{proposition}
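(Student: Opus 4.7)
The plan is to derive a Bochner-type differential inequality for the energy density $e(f)$ and then convert the $L^{1}$-smallness hypothesis \eqref{main energy bound} into a pointwise bound via a rescaling-and-bubbling argument, concluding by invoking an energy gap for non-constant $J$-holomorphic spheres.

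\textbf{Step 1 (Bochner inequality).} Using $\bar{\partial}_{J}f=0$, the K\"ahler identities on $(\Sigma,\omega)$, and the Weitzenb\"ock formula applied to the section $df$ of $T^{*}\Sigma\otimes f^{*}TZ$, a direct computation yields
\begin{equation*}
-\Delta e(f)\;\geq\;2|\nabla df|^{2}-C_{1}e(f)-C_{2}e(f)^{2},
\end{equation*}
where the positive constants $C_{1},C_{2}$ depend only on a bound for the Gauss curvature of $\Sigma$, the sectional curvature of $Z$, and on $\nabla J$ (which vanishes in the integrable case relevant here). Discarding the non-negative Hessian term gives the weak inequality $\Delta e(f)\leq C_{1}e(f)+C_{2}e(f)^{2}$, so $e(f)$ is nearly subharmonic whenever it is not too large.

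\textbf{Step 2 (Point selection and rescaling).} Suppose the conclusion fails. By scaling we may reduce to $r=1$ and produce a sequence of $J$-holomorphic maps $f_{n}$ with $\int_{B(2)}e(f_{n})\leq\epsilon_{0}$ but $\sup_{B(1)}e(f_{n})\to\infty$. A standard Hofer-type point-picking lemma yields $x_{n}\in B(1)$ and scales $\sigma_{n}\to 0$ with $c_{n}:=e(f_{n})(x_{n})\to\infty$, $\sigma_{n}^{2}c_{n}\to\infty$, and $\sup_{B(x_{n},\sigma_{n})}e(f_{n})\leq 4c_{n}$. Define $g_{n}(z):=f_{n}\bigl(\exp_{x_{n}}(z/\sqrt{c_{n}})\bigr)$ on the Euclidean disc of radius $\sigma_{n}\sqrt{c_{n}}$. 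Conformal invariance in real dimension two makes each $g_{n}$ a $J$-holomorphic map with $e(g_{n})(0)=1$ and $e(g_{n})\leq 4$ uniformly on an exhausting family of discs. The Bochner inequality of Step 1 promotes this $C^{0}$ bound to uniform $C^{k,\alpha}_{loc}$ estimates by elliptic bootstrapping, so Arzel\`a--Ascoli extracts a $C^{1}_{loc}$-limit $g\colon\CC\to Z$, $J$-holomorphic and non-constant (since $e(g)(0)=1$).

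\textbf{Step 3 (Energy-gap contradiction).} By conformal invariance of the $2$-dimensional Dirichlet energy,
\begin{equation*}
\int_{B(0,R)}e(g_{n})\;=\;\int_{B(x_{n},R/\sqrt{c_{n}})}e(f_{n})\;\leq\;\int_{B(2)}e(f_{n})\;\leq\;\epsilon_{0}
\end{equation*}
for every fixed $R$ and all large $n$. Fatou then yields $\int_{\CC}e(g)\leq\epsilon_{0}$. By the removable-singularity theorem for $J$-holomorphic maps of finite energy, $g$ extends to a non-constant $J$-holomorphic sphere $\tilde{g}\colon\mathbb{S}^{2}\to Z$. Choosing $\epsilon_{0}$ strictly smaller than the minimal energy of any such sphere (for the case $Z=\pk$ with Fubini--Study metric arising later in the paper, this minimum is $\pi$) delivers the required contradiction, and $C$ is produced by tracking the linear dependence in the same rescaling.

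The main obstacle is the removable-singularity invocation in Step 3, which rests on a separate mean-value bound near the puncture at $\infty$. One can avoid this altogether by arguing directly on $\CC$: combining $e(g)(0)=1$ with the Bochner inequality and a Harnack-type inequality gives a reverse mean-value bound $\int_{B(0,\rho)}e(g)\geq c\rho^{2}$ for some universal $\rho,c>0$, which already contradicts $\int_{\CC}e(g)\leq\epsilon_{0}$ once $\epsilon_{0}<c\rho^{2}$. A secondary technical point is the uniformity of $C_{1},C_{2}$ under the conformal rescaling used in Step 2, which requires only that the geodesic disc $B(2r)$ lies inside a chart of bounded geometry on $\Sigma$, automatic once $r$ is below the injectivity radius.
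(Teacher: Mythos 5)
The paper does not actually reprove this statement---it is quoted as Theorem 2.3 of \cite{PW}, with only the one-line indication that it ``follows from a Bochner type estimate on $\Delta e(f)$''---so the comparison is with the standard two-step argument behind that citation. Your Step 1 is that Bochner inequality and is fine; the genuine gap is in Steps 2--3. The contradiction scheme you set up (assume $\int_{B(2)}e(f_n)\leq\epsilon_0$ while $\sup_{B(1)}e(f_n)\to\infty$, rescale, extract a non-constant finite-energy plane, contradict the energy gap) proves only the qualitative statement: there exist $\epsilon_0$ and $C_0$ such that $E(2r)\leq\epsilon_0$ forces $\sup_{B(r)}e(f)\leq C_0/r^2$. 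It cannot produce the asserted \emph{linear} dependence on $E(2r)$ in \eqref{main energy estimate}: if $E(2r)$ is much smaller than $\epsilon_0$, your argument returns the same bound $C_0/r^2$, not $C\,E(2r)/r^2$. The remark that ``$C$ is produced by tracking the linear dependence in the same rescaling'' does not survive scrutiny---the natural attempt to run the contradiction with the normalization $\sup_{B(1)}e(f_n)>n\int_{B(2)}e(f_n)$ fails because the densities may then stay bounded (indeed tend to zero), so there is nothing to rescale, and the normalized functions $e(f_n)/E(f_n)$ are no longer energy densities of holomorphic maps, so conformal invariance is lost. The linear form of the estimate is exactly what is used later in the paper (to get $C^1$ convergence on the ``good discs'' and to identify the defect measure as a finite sum of Dirac masses), so it cannot be weakened.

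The repair is the second half of the standard proof, which is what \cite{PW} actually does: once the contradiction argument (or Hofer's point-picking lemma) gives the a priori pointwise bound $\sup_{B(3r/2)}e(f)\leq C_0/r^2$, feed it back into the Bochner inequality to linearize it, $\Delta e\leq (C_1+C_2\sup e)\,e\leq c\,r^{-2}e$, and then apply the local maximum principle / mean value inequality for subsolutions of a uniformly elliptic equation with controlled coefficient (Moser iteration, or the sub-mean-value property after rescaling $B(2r)$ to unit size) to obtain $\sup_{B(r)}e\leq C r^{-2}\left\Vert e\right\Vert_{L^1(B(3r/2))}\leq C r^{-2}E(2r)$. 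Two smaller points: the constants $\epsilon_0$ and $C$ necessarily also depend on the geometry of $(Z,J,\omega_Z)$, through the curvature terms entering $C_1,C_2$ and through the energy gap $B_0$, not only on the complex geometry of $\Sigma$ (a dependence the statement itself also elides); and your Step 3 should invoke the gap constant $B_0$ of Theorem \ref{minimum energy requirement} for the general target $Z$ rather than the specific value for $\pk$, since the proposition is stated for arbitrary $Z$.
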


The removability of singularity of vortices depends on the following basic theorem on extension $jJ$-holomorphic map across a punctured disc.

\begin{theorem}[Removable Singularity for Maps, Theorem 3.7 of \cite{PW}]
Let $(Z,J)$ be a complex manifold with complex structure $J$ and $(\Sigma,j)$ be a Riemann surface with complex structure $j$. Let $B \backslash \{p\}$ be a punctured disc of $\Sigma$ and

\[f: B \backslash \{p\} \to Z\]

\noindent be a $jJ-$holomorphic map of finite energy. Then $f$ extends to a $jJ-$ holomorphic map

\[\bar{f} : B \to Z.\]

\label{removable singularity for maps}
\end{theorem}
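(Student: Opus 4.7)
The plan is to combine the local mean-value estimate of Proposition \ref{Bochner energy estimate} with a classical length-energy argument to show that $f$ extends continuously across $p$, and then use integrability of $J$ together with Riemann's scalar removable singularity theorem to upgrade continuity to $jJ$-holomorphicity.

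First I would work in a local holomorphic coordinate $z$ on $B$ with $p = 0$ and use finiteness of the total energy $E = \int_{B \setminus \{0\}} e(f)$ to pick $r_0 > 0$ with $E(B(0, r_0)) < \epsilon_0$, where $\epsilon_0$ is the threshold from Proposition \ref{Bochner energy estimate}. For any $q$ with $0 < |q| < r_0/2$, the disc $B(q, |q|/2)$ lies in $B \setminus \{0\}$ and has energy below $\epsilon_0$, so the mean-value estimate yields
\[
e(f)(q) \leq \frac{4C}{|q|^2}\, E\bigl(B(0, 2|q|)\bigr).
\]
Absolute continuity of the integral gives $E(B(0, 2|q|)) \to 0$ as $|q| \to 0$, so one obtains the key pointwise decay $|df|(z)\cdot|z| = o(1)$.

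Next I would extend $f$ continuously to $p$. Let $\gamma_r = \{|z| = r\}$ and $L(r) = \mathrm{length}(f(\gamma_r))$. Cauchy--Schwarz gives
\[
L(r)^2 \leq 2\pi r^2 \int_0^{2\pi} |df|^2(re^{i\theta})\, d\theta,
\]
and integrating against $dr/r$ over $(0, r_0)$ produces $\int_0^{r_0} L(r)^2\, dr/r \leq 2\pi E < \infty$. Since $dr/r$ has infinite total mass at $0$, each dyadic interval $(R/2, R)$ close to $0$ must contain radii where $L(r)$ is arbitrarily small; one may therefore select a sequence $r_n \downarrow 0$ with $r_n/r_{n+1}$ bounded and $L(r_n) \to 0$. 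Combined with the pointwise bound $|df|(z) \leq o(1)/|z|$, the oscillation of $f$ on each annulus $\{r_{n+1} \leq |z| \leq r_n\}$ (connecting points via an arc on $\gamma_{r_n}$ and radial segments) tends to zero, so $f(z)$ converges to some $w_0 \in Z$ as $z \to 0$, giving a continuous extension $\bar f$.

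Finally, integrability of $J$ allows me to finish quickly: in local holomorphic coordinates $(w^1, \ldots, w^n)$ on $Z$ around $w_0$, the equation $\bar\partial_J f = 0$ reduces to the scalar Cauchy--Riemann system for the components $f^a = w^a \circ f$, which are holomorphic on the punctured disc and bounded there by continuity of $\bar f$. Riemann's classical removable singularity theorem extends each $f^a$ holomorphically across $p$, so $\bar f : B \to Z$ is the required $jJ$-holomorphic extension. The main obstacle is the continuity step: one must jointly exploit the pointwise decay $|df|(z)\cdot|z| \to 0$ and the $L^2(dr/r)$-summability of the circle-lengths to prevent oscillation of $f$ from accumulating at the puncture; once continuity is established, integrability of $J$ reduces the problem to the familiar scalar statement.
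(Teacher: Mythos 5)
Your first step (the pointwise decay $|df|(z)\,|z|=o(1)$ obtained from Proposition \ref{Bochner energy estimate}) and your last step (Riemann's removable singularity theorem applied to the coordinate components, which is legitimate here precisely because $Z$ is an integrable complex manifold) are both fine. The gap is at the place you yourself flag as the main obstacle: the continuity argument does not close. Knowing that the oscillation of $f$ on each annulus $A_n=\{r_{n+1}\le|z|\le r_n\}$ tends to zero does \emph{not} imply that $f(z)$ converges as $z\to0$; two points in widely separated annuli must be joined by a chain passing through all intermediate annuli, so one needs the oscillations to be summable, not merely null. Your two ingredients genuinely fail to force this: the function $u(z)=\sin\bigl(\log\log(1/|z|)\bigr)$ satisfies $|du|(z)\le\eta(|z|)/|z|$ with $\eta\to0$ and has zero oscillation on every circle $|z|=r$, yet has no limit at $0$. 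Equivalently, the total radial variation is controlled only by $\int_0^{r_0}\sqrt{E(B(0,2t))}\,\frac{dt}{t}$, which need not be finite when all one knows is $E(B(0,2t))\to0$.

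What is missing is a quantitative decay rate for the energy, and this is exactly the role of the isoperimetric inequality for holomorphic maps (Proposition 3.4 of \cite{PW}), which is the ingredient the paper itself cites alongside Morrey's Lemma: for small $r$ one has $E(B(0,r))\le C\,L(r)^2$, and combining this with $\frac{d}{dr}E(B(0,r))\ge\frac{1}{2\pi r}L(r)^2$ gives the differential inequality $E'\ge\frac{1}{2\pi Cr}E$, hence $E(B(0,r))\le E(B(0,r_0))\,(r/r_0)^{2\mu}$ for some $\mu>0$. Feeding this power decay back into your mean-value estimate yields $|df|(z)\le C|z|^{\mu-1}$, which \emph{is} radially integrable and gives H\"older continuity of $f$ at $p$; from there your reduction to the scalar Riemann removable singularity theorem goes through. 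So the architecture of your proof is right, but the continuity step must be routed through the isoperimetric inequality rather than through the qualitative facts $|df|(z)\,|z|=o(1)$ and $L(r_n)\to0$ alone.
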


The extension and its regularity follow from Morrey's Lemma (Theorem 2.1 of \cite{Mo}), and the required energy estimate is achieved by an isoperimetric inequality in \cite{PW} (Proposition 3.4).  We will also need the fact that the energies of non-constant holomorphic maps can not be arbitrarily small.

\begin{theorem}[Proposition 1.1(b) from \cite{PW}]
There is a constant $B_0 >0$ such that any $jJ$-holomorphic map $f: \Sigma \to Z$ with energy less than $B_0$ is a constant map.
\label{minimum energy requirement}
\end{theorem}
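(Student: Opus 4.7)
The plan is to argue by contradiction. Assume there exists a sequence of non-constant $jJ$-holomorphic maps $f_n : \Sigma \to Z$ with $E(f_n) \to 0$. The two ingredients are the a priori $C^0$ estimate of Proposition \ref{Bochner energy estimate}, and a monotonicity estimate derived from the same isoperimetric inequality (Proposition 3.4 of \cite{PW}) that underpins Theorem \ref{removable singularity for maps}.

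First, cover $\Sigma$ by finitely many geodesic discs $B(q_\alpha, 2r_0)$, $\alpha = 1, \ldots, N$, with $2r_0$ smaller than the injectivity radius of $(\Sigma, j)$. Once $E(f_n) \leq \epsilon_0$, applying Proposition \ref{Bochner energy estimate} on each disc gives $\sup_{B(q_\alpha, r_0)} e(f_n) \leq C r_0^{-2} E(f_n)$; taking the maximum over the finite cover yields the global bound $\sup_\Sigma |df_n|^2 \leq C r_0^{-2} E(f_n)$. In particular $\|df_n\|_\infty \to 0$, and $f_n(\Sigma)$ is contained in a geodesic ball of arbitrarily small radius in $Z$.

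Second, derive a contradiction via monotonicity. Since $f_n$ is non-constant, pick $p_n \in \Sigma$ with $df_n(p_n) \neq 0$. An isoperimetric inequality of Proposition 3.4 of \cite{PW} type, valid in any sufficiently small geodesic ball of $(Z,J)$, implies that $\int_{B(p_n, r)} e(f_n) \geq c \, r^2$ for all $r$ up to a uniform threshold $r_1$ depending only on the geometry of $(Z,J)$. Evaluating at any fixed $r \leq \min(r_0, r_1)$ produces $E(f_n) \geq c r^2 > 0$, contradicting $E(f_n) \to 0$. The desired constant is then any $B_0 \in (0, \min(\epsilon_0, c r^2))$.

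The main obstacle is ensuring that the isoperimetric (equivalently monotonicity) constant $c$ and radius $r_1$ are \emph{uniform} over all non-constant $jJ$-holomorphic maps, independently of the base point $p_n$. This requires a uniform isoperimetric inequality on small geodesic balls of $(Z, J)$; in our setting of interest ($Z = \pk$ with the Fubini--Study metric), compactness and homogeneity of the target produce such uniform constants, as is standard in the framework of \cite{PW}.
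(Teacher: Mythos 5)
The paper itself offers no proof of this statement (it is imported verbatim from Proposition 1.1(b) of \cite{PW}), so your argument has to be measured against the standard one. Your first step is correct and is exactly the right opening move: covering $\Sigma$ by finitely many geodesic discs and applying Proposition \ref{Bochner energy estimate} gives $\sup_\Sigma e(f) \leq C r_0^{-2} E(f)$ once $E(f) \leq \epsilon_0$, hence $\mathrm{diam}\, f(\Sigma) \leq \mathrm{diam}(\Sigma)\,\|df\|_{\infty}$ can be made smaller than any prescribed $\rho_1>0$ by shrinking $B_0$.

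The gap is in the second step. The inequality $\int_{B(p_n,r)} e(f_n) \geq c r^2$ for a \emph{domain} ball $B(p_n,r)\subset\Sigma$, with $c$ and $r_1$ uniform, does not follow from $df_n(p_n)\neq 0$ and is in fact false: for $f_\lambda(z)=\lambda z$ on $\mathbb{CP}^1$ with the Fubini--Study metrics one has $df_\lambda(1)\neq 0$ while $\int_{B(1,r)} e(f_\lambda)\to 0$ as $\lambda\to\infty$ (the energy concentrates at $0$). The genuine monotonicity/isoperimetric statement concerns balls $B(x,\rho)$ in the \emph{target} centered at a point $x$ of the image, and gives area at least $c\rho^2$ only for $\rho$ up to the radius at which the curve first exits the ball. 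After your step 1 that exit radius is itself bounded by $\mathrm{diam}\, f_n(\Sigma)\to 0$, so monotonicity returns only the vacuous bound $E(f_n)\geq c\,(\mathrm{diam}\, f_n(\Sigma))^2$ and no contradiction; the uniformity of constants that you flag as the main obstacle is not the real issue. The argument should instead be closed topologically: choose $\rho_1$ so small that $\omega_Z$ is exact, $\omega_Z=d\lambda$, on every geodesic ball of radius $\rho_1$ in the compact manifold $Z$ (e.g.\ $\rho_1$ below the injectivity radius). Step 1 places $f(\Sigma)$ inside such a ball, and since $\Sigma$ is closed, $E(f)=\int_\Sigma f^*\omega_Z=\int_\Sigma d(f^*\lambda)=0$ by Stokes; for a $jJ$-holomorphic map $E(f)=0$ forces $df\equiv 0$. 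Equivalently, $f$ is null-homotopic and $E(f)=\langle[\omega_Z],f_*[\Sigma]\rangle=0$ by the homotopy invariance in part (a) of the same proposition of \cite{PW}.
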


\noindent The constant $B_0$ only depends on the complex structure and metric of $\Sigma$. The expository bubbling result is the following theorem:

\begin{theorem}[Bubbling of Holomorphic Curves]
Given a sequence $\{f_s\}$ of $jJ-$holomorphic maps $\Sigma \to Z$ with uniformly bounded energies:

\[E(f_s) := \int_\Sigma e(f_s) < C,\]

\noindent there is a subsequence  still denoted by $\{f_s\}$, a finite set of points $\{p_1,\ldots,p_N\}\subset \Sigma$, and a $jJ-$ holomorphic map $f_0 : \Sigma \to Z$ such that

\begin{enumerate}[label=(\alph*)]
\item $f_s \to f_0$ in $C^1$ on $\Sigma \backslash \{p_1,\ldots,p_N\}$.
\item The energy densities $e(f_s)$ converge as measures to $e(f_0)$ plus a sum of Dirac-delta measures:

\begin{equation}
e(f_s) \to e(f_0) + \sum_{j=1}^N a_j \delta(p_j)
\label{convergence in measure}
\end{equation}

\noindent where $a_j \geq B_0 \;\forall j$. $B_0$ is the constant in Theorem \ref{minimum energy requirement}.

\end{enumerate}
\label{bubbling of maps}

\end{theorem}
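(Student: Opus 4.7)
The plan is to exploit weak-$*$ compactness of the energy measures on $\Sigma$, combined with the local $\epsilon$-regularity provided by Proposition~\ref{Bochner energy estimate}, to locate a finite bubble set $S$ outside of which $f_s$ is uniformly bounded in $C^1$. Arzel\`a--Ascoli together with elliptic bootstrapping for the first-order system $\bar\partial_J f_s=0$ then produce a $C^1_{\mathrm{loc}}$ limit on $\Sigma\setminus S$, which is extended across $S$ by Theorem~\ref{removable singularity for maps} to give $f_0$. A rescaling argument at each $p_j$ produces a nontrivial $jJ$-holomorphic sphere of energy at most $a_j$, yielding the lower bound $a_j\ge B_0$ via Theorem~\ref{minimum energy requirement}.

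Concretely, set $\mu_s := e(f_s)\,\omega_\Sigma$, a positive Radon measure of total mass $E(f_s)<C$. Banach--Alaoglu furnishes a subsequential weak-$*$ limit $\mu$, and I would define
\[
S := \left\{\, p\in\Sigma \,\middle|\, \mu(\{p\}) \geq \tfrac{\epsilon_0}{2} \,\right\} = \{p_1,\dots,p_N\},
\]
which is finite because $\mu(\Sigma)\le C$. For each $q\in\Sigma\setminus S$, pick $r>0$ with $\overline{B(q,2r)}\subset\Sigma\setminus S$ and $\mu(\overline{B(q,2r)})<\epsilon_0$; testing weak-$*$ convergence against a continuous bump supported in a slightly larger set gives $\int_{B(q,2r)}e(f_s)<\epsilon_0$ for $s$ large, and then \eqref{main energy estimate} yields a uniform bound on $\sup_{B(q,r)}|df_s|^2$. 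A diagonal Arzel\`a--Ascoli argument followed by elliptic bootstrapping for $\bar\partial_J f_s=0$ produces a $C^\infty_{\mathrm{loc}}$ limit $f_0$ on $\Sigma\setminus S$, proving (a). Since $E(f_0)\le\liminf E(f_s)<\infty$, Theorem~\ref{removable singularity for maps} extends $f_0$ $jJ$-holomorphically across each $p_j$ to all of $\Sigma$. The $C^1_{\mathrm{loc}}$ convergence on $\Sigma\setminus S$ forces $\mu$ to agree with $e(f_0)\,\omega_\Sigma$ there, so
\[
\mu = e(f_0)\,\omega_\Sigma + \sum_{j=1}^N a_j\,\delta(p_j),\qquad a_j := \mu(\{p_j\})\ge \tfrac{\epsilon_0}{2}.
\]

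The main obstacle is sharpening the bound $a_j \ge \epsilon_0/2$ to $a_j \ge B_0$, i.e.\ the minimum energy of a nonconstant $jJ$-holomorphic sphere. For this I would perform the classical rescaling at each bubble point: choose conformal coordinates centered at $p_j$, pick points $z_s\to p_j$ realizing $\lambda_s^{-2}:=e(f_s)(z_s) = \sup_{B(p_j,\rho)} e(f_s)$, which diverges since $p_j \in S$, and form the rescaled maps $\tilde f_s(w):=f_s(z_s+\lambda_s w)$. These are $jJ$-holomorphic on the expanding discs $B(0,\rho/\lambda_s)\subset\mathbb{C}$, satisfy $e(\tilde f_s)(0)=1$ with $e(\tilde f_s)\le 1$ locally, and carry energy at most $a_j + o(1)$. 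Reapplying the $\epsilon$-regularity on $\mathbb{C}$ and extracting a further diagonal subsequence yields a $C^1_{\mathrm{loc}}(\mathbb{C})$ limit $\tilde f_\infty:\mathbb{C}\to Z$ which is \emph{nonconstant} because $e(\tilde f_\infty)(0)=1$. Since $E(\tilde f_\infty)\le a_j<\infty$, Theorem~\ref{removable singularity for maps} extends $\tilde f_\infty$ across $\infty$ to a nonconstant $jJ$-holomorphic sphere $\mathbb{S}^2\to Z$, whose energy is $\ge B_0$ by Theorem~\ref{minimum energy requirement}. Hence $a_j\ge B_0$, completing (b).

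The delicate points in the final step are ensuring that the concentration point $z_s$ is not lost to the boundary of the rescaled disc and that any secondary concentration of $\tilde f_s$ inside $\mathbb{C}$ does not prevent the limit from being nonconstant; both are handled by the rescaling/iteration scheme of \cite{PW}, and the normalization $e(\tilde f_s)(0)=1$ ensures non-triviality survives to the limit.
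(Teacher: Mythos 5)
Your proposal is correct and follows essentially the same route as the paper's sketch (which is itself Parker--Wolfson's argument): $\epsilon$-regularity from Proposition \ref{Bochner energy estimate} plus Arzel\`a--Ascoli off a finite singular set, removal of singularities (Theorem \ref{removable singularity for maps}) to define $f_0$ on all of $\Sigma$, and a pointed rescaling at each $p_j$ producing a nonconstant holomorphic sphere whose energy, bounded below by $B_0$ via Theorem \ref{minimum energy requirement}, is dominated by $a_j$. The only organizational differences are that you locate the singular set as the atoms of a weak-$*$ limit of the energy measures rather than via the paper's multi-scale covering by ``good'' and ``bad'' discs of radius $2^{-m}\epsilon_0$ with a diagonal subsequence, and that your normalization $\lambda_s^{-2}=\sup e(f_s)$ is the standard one that actually yields $e(\tilde f_s)(0)=1$, whereas the paper's formula \eqref{renormalization} dilates by $b_j(s)=\sup|e(f_s)|$ itself.
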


\noindent The proof of this theorem contains critical renormalization techniques which induce holomorphic maps from $\mathbb{S}^2$ to $Z$, or "bubbles". We include a sketch here.

\begin{proof}(Sketch)
The subsequence of $\{f_s\}$ exists due to compactness of $\Sigma$, the uniform boundedness of energies, and the Theorem of Arzela-Ascoli. For convenience, we do not change indices when extracting subsequences.

Take the constant $\epsilon_0 > 0$ as in Proposition \ref{Bochner energy estimate}. For each $m \in \mathbb{N}$, it is possible to cover $\Sigma$ with a finite number of discs with radius $r_m=2^{-m}\epsilon_0$. Denote these discs by $\{B(y_\alpha,r_m)\}$. The compactness of $\Sigma$ allows us to further assume that $\{B(y_\alpha,\frac{r_m}{2})\}$ continues to cover $\Sigma$ and that $r_m < inj _{g}$, the {\it injectivity radius} of the chosen background metric. Furthermore, we may assume that each point of $\Sigma$ is covered by these discs at most $M$ times, where $M$ is uniform for all $s$ and $m$. With these choices, and the energy bound, we may conclude that for each $s$,

\[\int_{B(y_\alpha,r_m)} e(f_s) < \epsilon_0\]

\noindent for all but finitely many $m$. By passing to a subsequence we may fix these discs for each $m$ and conclude that

\[\int_{B(y_\alpha,r_m)} e(f_s)  \geq \epsilon_0\]

\noindent only at certain "bad discs" $\{B(p_{1,m},r_m),\ldots,B(p_{N,m},r_m)\}$, where $l$ is uniform over $s$. For each $m$, the estimate \eqref{main energy estimate} holds on all "good discs", namely, those $B(y_\alpha,\frac{r_m}{2})$'s disjoint from bad discs. Since $e(f_s)$ dominates the first derivatives of $f_s$, we see that $f_s$ and their first derivatives are uniformly bounded. By the Theorem of Arzela-Ascoli, we have a subsequence of $\{f_s\}$ that converge in $C^1$ on each good discs. On the other hand, as $m \to \infty$, the centers of bad discs converge (by a subsequence) to $\{p_1,\ldots,p_N\}$. Picking the diagonal subsequence from the double sequence in $s$ and $m$, we conclude the existence of $f_0: \Sigma \backslash \{p_1,\ldots,p_N\} \to Z$ in part (a). The domain of $f_0$ can be extended to $\Sigma$ by Theorem \ref{removable singularity for maps}.

Part (b) concerns energy densities on the bad discs. For each $\epsilon < \epsilon_0$, we may assume that the numbers

\[b_j(s) := \sup_{ B(p_j,\epsilon)} \{|e(f_s)|\}\]

\noindent are unbounded for all $j$. Shrinking $\epsilon$ if necessary, we may assume that $B(p_j,2\epsilon)$ are all disjoint and set

\begin{equation}
a_j := \lim_{\epsilon \to 0}\limsup_{s\to\infty} \int_{B(p_j,\epsilon)} \left||e(f_s)|-|e(f_0)|\right| \omega,
\label{energy loss at x j}
\end{equation}

\noindent the energy loss at each $p_j$ at $s=\infty$. $f_0$ and these $a_j$ then satisfy \eqref{convergence in measure}, and it remains to show that $a_j \geq B_0$.
The inequality is verified by standard renormalization of $f_s$ near each "bubble point" $p_j$. Let $\bar{p}_j^s \in B(p_j,\epsilon)$ be the point at which $|e(f_s)(\bar{p}_j^s)|=b_j(s)$. Then $\bar{p}_j^s \to p_j$ as $s\to\infty$ (up to subsequences, by compactness of $\Sigma$). Fixing an appropriate holomorphic coordinate, we rescale the geodesic ball $B(p_j,\epsilon)$ to define the renormalization

\begin{equation}
\tilde{f}_s(y)=f_s\left(\bar{p}_j^s+\frac{y}{b_j(s)}\right);\;\;\;y\in B(0,\epsilon b_j(s)).
\label{renormalization}
\end{equation}

\noindent It is then clear that $|e(\tilde{f}_s)|\leq 1$ and $|e(\tilde{f}_s)(0)|=1$. Via stereographic projection, we may regard each $\tilde{f}_s$ as a map from a domain of $\mathbb{S}^2 \backslash \{p^+\}$ to $N$, where $p^+$ is the north pole. The energy bound continues to hold as the energy is conformally invariant. We may therefore apply part (a) to extract a subsequence converging to a $jJ$-holomorphic map $\tilde{f}_{p_j}: \mathbb{S}^2 \backslash \{p^+\} \to N$. Removing the singularity we obtain the "bubble map" $\tilde{f}_{p_j}: \mathbb{S}^2 \to N$. Since $e(\tilde{f}_s)(0) \neq 0$, it is not a constant map and therefore $E(\tilde{f}_{p_j}) \geq B_0$ by Theorem \ref{minimum energy requirement}. Part (b) then follows from \eqref{energy loss at x j}.

\end{proof}

The inequality in part (b) of the theorem may be strict. That is, the bubble might not capture all the energy loss $a_j$ at infinity. The main reason for this is that the resacling factor $b_j(s)$ in \eqref{renormalization} may be too large so that too much energy is pushed toward the north pole $p^+$, forming a {\em connecting tube} with positive energy. The tube is removed when extending $\tilde{f}_{p_j}$, resulting in energy loss. To avoid such setback, we adjust the rescaling factor so that the energy on the tubes are controlled by $\epsilon$. Since energies of nonconstant holomorphic maps may not be arbitrarily small, the energies on the connecting tubes must be $0$ as $\epsilon \to 0$(cf. section 5 in \cite{PW}). The modified process introduces the "bubble tree" description of moduli space at infinity, which associates a sequence of $\mathbb{S}^2$ and corresponding holomorphic maps on $\mathbb{S}^2$ at a bubble point. These bubbles will in fact preserve all the original energy. For simplicity, we omit the subscript $j$ of the bubble point $p_j$, as well as all the variables associated to it.

\begin{theorem}[Bubble Tree]
For each bubble point $p$, there exists a finite sequence of holomorphic maps $\{\tilde{f}_p^l\}$ from $\mathbb{S}^2$ to $Z$ so that

\begin{equation}
\sum_l E(\tilde{f}_x^l) = a
\label{bubble tree energy}
\end{equation}

\label{bubble tree map}
\end{theorem}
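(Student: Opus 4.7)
The plan is to iteratively apply the renormalization construction from the proof of Theorem \ref{bubbling of maps} at each bubble point, but with a more refined choice of rescaling factor designed to prevent energy from escaping into the connecting neck. First, at a fixed bubble point $p$ with associated energy loss $a$, rather than rescaling by $b(s)=\sup_{B(p,\epsilon)}|e(f_s)|$ as in \eqref{renormalization}, I would choose $\lambda(s)\to\infty$ so that the residual energy on the annulus $B(p,\epsilon)\setminus B(\bar{p}^s,\lambda(s)^{-1})$ equals a prescribed threshold $\mu$, where $\mu\in(0,B_0/2)$ is an auxiliary parameter. Such $\lambda(s)$ exists by continuity and monotonicity of the integrated energy in the radius. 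The rescaled maps $\tilde{f}_s(y):=f_s(\bar{p}^s+y/\lambda(s))$ then converge, as in the proof of Theorem \ref{bubbling of maps} combined with Theorem \ref{removable singularity for maps}, to a $jJ$-holomorphic map $\tilde{f}_p^1:\mathbb{S}^2\to Z$ which captures essentially all of $a$ except for the $\mu$ units near $p^+$ and the energy concentrated at possible new interior bubble points.

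Next, I would examine the convergence $\tilde{f}_s\to\tilde{f}_p^1$ on the rescaled bubble sphere. By construction the neck near $p^+$ carries only $\mu$ energy, so any further loss occurs at a finite set of interior bubble points $\{q_1,\ldots,q_M\}\subset\mathbb{S}^2$ produced by applying part (a) of Theorem \ref{bubbling of maps} to $\{\tilde{f}_s\}$. I would then recursively apply the same renormalized construction at each $q_i$ to generate child bubbles $\tilde{f}_p^l$, and so on, producing a tree of $\mathbb{S}^2$'s, each carrying a holomorphic map.

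Finiteness of the tree follows from the energy gap in Theorem \ref{minimum energy requirement}: each non-constant $\tilde{f}_p^l$ contributes at least $B_0$ to the total, while the available energy is bounded above by $a$. Hence the recursion terminates after at most $\lfloor a/B_0\rfloor$ steps. Summing the energies over the resulting finite tree, all of $a$ is accounted for except for the $\mu$ units of energy distributed among the connecting necks at each level. Choosing a diagonal subsequence along $s\to\infty$ and $\mu\to 0$ then yields \eqref{bubble tree energy}.

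The principal obstacle I anticipate is the \emph{no-neck} estimate needed to legitimately take $\mu\to 0$: one must show that the conformal annuli connecting parent and child spheres truly carry vanishing energy in the limit, not merely $\mu$. This rests on an isoperimetric inequality for annuli of large conformal modulus with small boundary energy, the content of Proposition 3.4 and the arguments of Section 5 in \cite{PW}, together with a clean diagonal extraction (over the bubble-point index, the renormalization parameter $\mu$, and the source index $s$) so that the extracted tree actually realizes the limit of $\{f_s\}$ rather than that of an auxiliary renormalized sequence. Controlling this triple limit, rather than the bubble construction itself, is the delicate step.
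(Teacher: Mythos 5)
Your renormalization itself is essentially the paper's: the dilation is calibrated not by $\sup e(f_s)$ but by requiring a prescribed amount of energy, strictly less than $B_0/2$, to sit on the outer annulus (equivalently, on the northern hemisphere after stereographic projection) --- this is exactly the choice of $t_{\epsilon,s}$ in \eqref{energy on northern hemisphere}. But the final step diverges from the paper in a way that opens a genuine gap. The paper does \emph{not} send the threshold $C_0$ to zero: it keeps $C_0$ fixed, shrinks instead the radius $\epsilon$ of the initial disc (so that the contribution of the smooth limit $f_0$ vanishes and the captured energy is exactly $a$), and then invokes Lemma~5.3 of \cite{PW} --- the isoperimetric inequality for the connecting tube --- to prove that the energy loss $\tau_p$ at $p^+$ is exactly zero. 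Your plan to instead take a diagonal limit $\mu\to 0$ is not a substitute for that no-neck estimate, and it actively breaks the termination argument. With $\mu$ fixed, termination rests on Lemma \ref{ghost bubble}: a ghost bubble (constant limit map) must produce at least two children, and that dichotomy is precisely a consequence of keeping a \emph{fixed} positive amount $C_0$ of energy on the northern hemisphere together with the center-of-mass normalization $T_{\epsilon,s}$ (which you replace by recentering at the max point $\bar p^s$ --- not obviously sufficient for the two-children property). If $\mu\to 0$, nothing prevents the rescaled maps from reconcentrating all but $O(\mu)$ of the energy at the new south pole, giving a ghost bubble with a single child of energy $a-O(\mu)$ at every level; the depth of your tree is then only bounded by $a/\mu$, which blows up along your diagonal. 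Your count of ``at most $\lfloor a/B_0\rfloor$ steps'' also silently assumes every bubble is non-constant, which the ghost-bubble phenomenon contradicts.

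The fix is to abandon the $\mu\to 0$ limit entirely: keep $C_0<B_0/2$ fixed, translate by the center of mass of the measure $\bigl||e(f_s)|-|e(f_0)|\bigr|$ rather than by the max point, prove $\tau_p=0$ for each level directly from the isoperimetric bound (the neck is a connecting tube whose image area is controlled by the length of its boundary circles, which shrink with $\epsilon_s$), and then conclude $a=\sum_n a_n$ at fixed $C_0$. Termination then follows from Theorem \ref{minimum energy requirement} together with Lemma \ref{ghost bubble}: every level either produces a nonconstant bubble of energy at least $B_0$, or branches into at least two children each carrying at least $B_0$, so the total number of spheres is bounded by a function of $a/B_0$. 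You correctly located the isoperimetric inequality of \cite{PW} as the key input, but it must be used to kill the neck energy at fixed $C_0$, not to justify a triple diagonal extraction.
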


\noindent The main techniques are outlined below. For each $\epsilon$ and neighborhood $D(p,\epsilon)$ around $p$, we pull back the maps in some carefully designed ways so that they are all defined on domains of two sphere $\mathbb{S}^2_{p} \subset \mathbb{R}^3$. The renormalization takes place there. Each process gives rise to a scenario of Theorem \ref{bubbling of maps}, which results in a collection of bubble points on $\mathbb{S}^2_{p}$ with a possible energy loss concentrated at $p^+$. One repeats the process on each bubble point on $\mathbb{S}^2_p$, producing bubbles on bubbles conserving all the energy.

\begin{proof}(Outline)

Fix $\epsilon > 0$.  Each disc $B(p,\epsilon)$ in the proof of the previous theorem corresponds to a domain $B_\epsilon \subset \mathbb{S}^2_p$ under stereographic projection. Denote the north pole and south pole of $\mathbb{S}^2_p$ by $p^+$ and $p^-$, respectively. The bubble point $p$ then corresponds to the south pole $p^- \in B_\epsilon$. Let also

\begin{equation}
a(\epsilon,s)=\int_{B_\epsilon} \left||e(f_{s})|-|e(f_{0})|\right| \omega,
\label{energy concentation}
\end{equation}

\noindent be numbers so that

\[\lim_{\epsilon \to 0} \limsup_{s \to \infty} a(\epsilon,s) = a \geq B_0\]

\noindent as in Theorem \ref{bubbling of maps}.

We consider the following composition of conformal maps:

\begin{equation}
R_{\epsilon,s}: \mathbb{S}^2_p \xrightarrow{\rho_{t_{\epsilon,s}}} \mathbb{S}^2_p \xrightarrow{T_{\epsilon,s}} \mathbb{S}^2_p \xrightarrow{\sigma} T_p\Sigma \xrightarrow{exp} \Sigma.
\label{composition for renormalization}
\end{equation}

\noindent  Here, $exp$ and $\sigma$ are the ordinary exponential map and  stereographic projection, respectively. $T_{\epsilon,s}$ is a conformal transformation on $\mathbb{S}^2_p$ corresponding to the translation of $T_p\Sigma$ that translates the center of mass of the measure $\left||e(f_s)|-|e(f_0)|\right|$ to $z-$axis. Finally, $\rho_{t_{\epsilon,s}}$ is the conformal transformation corresponding to radial dilation of $T_x\Sigma$ by $t_{\epsilon,s} >0$.  Let $C_0>0$ be a constant less than $\frac{B_0}{2}$, where $B_0$ is  the lower bound of energies of nonconstant holomorphic maps described in Theorem \ref{minimum energy requirement}. The scale $t_{\epsilon,s}$ is chosen fso that

\begin{equation}
\int_{B_\epsilon^s\backslash H^-} \left||e(f_s)|-|e(f_0)|\right| = C_0,
\label{energy on northern hemisphere}
\end{equation}

\noindent where $H^-$ is the southern hemisphere and $B_\epsilon^s := R_{\epsilon,s}^* B_(p,\epsilon)$. In another words, a constant amount of energy is retained on the northern hemisphere throughout the process. Such a scale is possible by continuous dilation to continuously spread out the energy concentration at $p$ ( or $p^-$). It is also necessary that $t_{\epsilon,s} \to \infty$ as $s \to \infty$. Indeed, for each $s$, let $\bar{p}^s \in B(p,\epsilon)$ be the point where $|e(f_s)|$ achieve its supremum. These points converge to the bubble point $p$ as $s \to \infty$ and energies of $|e(f_s)|$ are arbitrarily concentrated near $p^-$. Therefore, for \eqref{energy on northern hemisphere} to be true, the scaling factors $t_{\epsilon,s}$ must be large enough so that $t_{\epsilon,s} \bar{p}^s \nrightarrow p$, which requires $t_{\epsilon,s} \to \infty$ so that a constant amount of energy to be kept away from $p^-$. The renormalized map is then

\begin{equation}
\tilde{f}_{\epsilon,s} := R_{\epsilon,s}^* (f_s|_{B(p,\epsilon)}),
\label{renormalized map bubble tree}
\end{equation}

\noindent which are holomorphic with respect to the pull back complex structures $j_s=R_{\epsilon,s}^*J$. The structures approach the standard complex structure on $\mathbb{S}_p^2$ as $s\to\infty$. One may readily verify that the domains of $\tilde{f}_{\epsilon,s}$ approach $\mathbb{S}^2_p \backslash \{p^+\}$ as $s \to \infty$. Moreover, with conformal invariance, we have $E(\tilde{f}_{\epsilon,s}) \leq C$,

\begin{equation}
|E(\tilde{f}_{\epsilon,s})-E(\tilde{f}_{\epsilon,0})|> \frac{B_0}{2},
\label{conformal invariance 1}
\end{equation}

\noindent and

\begin{equation}
\int_{H^+}\left||e(\tilde{f}_{\epsilon,s})|-|e(\tilde{f}_{\epsilon,0})|\right| = C_0,
\end{equation}

\noindent where $H^+$ is the northern hemisphere amd $\tilde{f}_{\epsilon,0}=\lim_{s\to\infty}(R_{\epsilon,s}^* f_0)$.  We may repeat the arguments in the proof of Theorem \ref{bubbling of maps} and obtain the list of bubbling points

\begin{equation}
\mathbb{B}_\epsilon:=\{y_{1,\epsilon},\ldots,y_{l,\epsilon},p^+\} \subset \mathbb{S}^2_x
\label{list of bubble point epsilon}
\end{equation}

\noindent and a $jJ-$holomorphic map $\tilde{f}_{\epsilon,p}:\mathbb{S}^2_p \to Z$ so that $\tilde{f}_{\epsilon,s} \to \tilde{f}_{\epsilon,x}$ in $C^1$ on $\mathbb{S}^2_p \backslash \mathbb{B}_{\epsilon}$. The adjusted renormalization satisfies part (b) of Theorem \ref{bubbling of maps}: For all $\epsilon$,

\begin{equation}
e(\tilde{f}_{\epsilon,s}) \to e(\tilde{f}_{\epsilon,x}) + \sum_{j=1}^l a_{j,\epsilon}  \delta(y_{j,\epsilon}) + \tau_{\epsilon,p} \delta(p^+)
\label{enregy loss renormalized}
\end{equation}

\noindent as $s\to\infty$, where $\tau_{\epsilon,p}$ is the energy loss at infinity.

Finally, we shrink the radius $\epsilon$ of the initial disc around $p$. Pick a sequence $\epsilon_s \to 0$ as $s \to \infty$. The bubble points and associated energy losses

\begin{equation}
\{y_{1,\epsilon_s},\ldots,y_{l,\epsilon_s},a_{1,\epsilon_s},\ldots,a_{l,\epsilon_s},\tau_{\epsilon_s,p}\}_s.
\label{list of bubble point}
\end{equation}

\noindent range in compact sets and possess subsequences converging to

\begin{equation}
\mathbb{B}:=\{y_1,\ldots,y_l,a_1,\ldots,a_l,\tau_p\}
\label{list of bubble point limit}
\end{equation}

\noindent as $s\to \infty$. The holomorphic maps $\tilde{f}_{\epsilon_s,p}$ correspondingly converge  to a map $\tilde{f}_p : $ in $C^1$ on $\mathbb{S}^2_p  \backslash \mathbb{B}$ which extends to the entire $\mathbb{S}^2_p$ holomorphically. Since $f_0$ is smooth on $B(p,\epsilon)$, by conformal invariance we have $e(\tilde{f}_{\epsilon,0}) \to 0$ as $\epsilon \to 0$ in measure. Consequentially, we have

\begin{equation}
e(\tilde{f}_{\epsilon_s,s}) \to e(\tilde{f}_p) + \sum_{n=1}^l a_n \delta(y_n)+\tau_p \delta (p^+)
\label{convergence in shrinking disc 1}
\end{equation}

\noindent and

\begin{equation}
\int_{H^+} |e(\tilde{f}_{\epsilon_s,s})| \to C_0
\label{convergence in shrinking disc 2}
\end{equation}

\noindent as $s \to \infty$. Shrinking the constant $C_0$ if necessary, Lemma 5.3 of \cite{PW} then implies that $\tau_p =0$. This is a property special to holomorphic maps for which the energy loss is the area of the image of certain domain $\Omega_{\epsilon_s}$ containing $p^+$ under $\tilde{f}_{\epsilon,s}$ known as the {\em connecting tube}. The holomorphicity further ensures that the area of $\tilde{f}_p(\Omega_{\epsilon_s})$ is bounded by its perimeter which is controlled by $\epsilon_s$, with an isoperimetric inequality. We therefore have the conservation of energy

\begin{equation}
a=\sum_{n=1}^l a_n.
\label{energy conservation}
\end{equation}

The enregy-preserving renormalization may be iterated. Near each $y_n \in \mathbb{B}$, we may renormalize $\tilde{f}_p$ to obtain a collection of bubble points and bubble energies

\[\mathbb{B}_n^1:=\{y_{n,1},\ldots,y_{n,l_n},a_{n,1},\ldots,a_{n,l_n}\}\]

\noindent on another sphere $\mathbb{S}^2_{y_n}$ equipped with a "bubble" $\tilde{f}_{y_n}: \mathbb{S}^2 \to N$ constructed identically as above. Of course, these new energies satisfy

\[a_n = \sum_{t=1}^{l_n} a_{n,t}.\]

\noindent The process continues and we end up with $\Sigma$ and a collection of bubbles wedging at various bubble points, forming a bubble tree.

\end{proof}

The origins of new bubbles $y_{1,\epsilon},\ldots,y_{l,\epsilon}$ in \eqref{list of bubble point epsilon}, as well as their $\epsilon$ limit \eqref{list of bubble point limit}, can be explicitly explained. The only bubble point on $B(p,\epsilon)$, namely $p$, forms due to the accumulation of $\bar{p}^s$, points where $|e(f_s)|$ achieves supremum. The renormalized map $\tilde{f}_{\epsilon,s}$ then has maximum energy density at $R_{\epsilon,s}^{-1}(\bar{p}^s)$. Therefore, the only possibilities for new bubbles to form upon renormalization is the presence of new limit points from the sequence $\{R_{\epsilon,s}^{-1}(\bar{p}^s)\}$ in $\mathbb{S}_p^2$. Since the finite energy condition is invariant under $R_{\epsilon,s}$, there are only finitely many such points $y_{1,\epsilon},\ldots,y_{l,\epsilon}$ and so are their $\epsilon$-limit points and corresponding bubbles.

It is important to point out that the limiting map $\tilde{f}_p :\mathbb{S}^2_x \to Z$ may itself be a constant map and $E(\tilde{f}_p)=0$.  That is, the energy of the bubble is entirely concentrated at the new bubble point(s) $\mathbb{B}$ on it. We refer to such sphere as a {\em ghost bubble}. A ghost bubble for the non-constant holomorphic map $\tilde{f}_p$, however, must contain at least 2 new bubble points.

\begin{lemma}[cf. Lemma 4.2 in \cite{PW}]

  For $\tilde{f}_p$ in Theorem \ref{bubble tree map}, if $E(\tilde{f}_p)=0$, the integer $l$ in $\mathbb{B}$ is at least 2.
\label{ghost bubble}
\end{lemma}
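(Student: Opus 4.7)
The plan is to derive a contradiction from the two defining properties of the renormalization $R_{\epsilon,s}$ in the proof of Theorem \ref{bubble tree map}: (i) the translation $T_{\epsilon,s}$ places the center of mass of the measure $\left||e(f_s)|-|e(f_0)|\right|$, pulled back to $T_p\Sigma$, at the origin of the tangent plane (i.e.\ on the $z$-axis), and (ii) the dilation factor $t_{\epsilon,s}$ is chosen so that a fixed amount $C_0<B_0/2$ of energy is retained on the northern hemisphere, which persists in the limit as $\int_{H^+}|e(\tilde{f}_{\epsilon_s,s})|\to C_0$ in \eqref{convergence in shrinking disc 2}.

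Suppose for contradiction that $l\leq 1$. Using the hypothesis $E(\tilde{f}_p)=0$ (so $e(\tilde{f}_p)\equiv 0$) together with $\tau_p=0$, the weak$*$ convergence \eqref{convergence in shrinking disc 1} reduces to
\begin{equation*}
e(\tilde{f}_{\epsilon_s,s}) \;\longrightarrow\; \sum_{n=1}^{l} a_n\,\delta(y_n).
\end{equation*}
If $l=0$, the right-hand side vanishes and therefore $\int_{H^+}|e(\tilde{f}_{\epsilon_s,s})|\to 0$, directly contradicting property (ii) since $C_0>0$. If $l=1$, the limit is the single Dirac mass $a_1\delta(y_1)$ with $y_1\in\mathbb{S}^2_p\setminus\{p^+\}$, and property (i), passed to the limit, forces the center of mass of $a_1\delta(y_1)$ on $T_p\Sigma$ to lie at the origin. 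Hence $\sigma(y_1)=0$ under stereographic projection from $p^+$, so $y_1=p^-$. But then the entire limiting mass sits at the south pole, giving once more $\int_{H^+}|e(\tilde{f}_{\epsilon_s,s})|\to 0$, again contradicting (ii). We conclude $l\geq 2$.

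The main technical point is to justify that property (i) descends to the limit. The pre-limit measures live on the increasing subsets $B_\epsilon^s$ of $T_p\Sigma$, which exhaust the plane as $s\to\infty$, so a priori mass could drift to infinity (i.e.\ to $p^+$). This is precisely ruled out by the vanishing of the potential energy loss $\tau_p=0$, established earlier in the proof of Theorem \ref{bubble tree map} via Lemma 5.3 of \cite{PW}: with $\tau_p=0$, for every $\delta>0$ all but $\delta$ of the total mass stays in a fixed compact subset of $T_p\Sigma$ for all large $s$, so the center-of-mass functional is continuous under the weak$*$ convergence on this family, securing the identification $y_1=p^-$ and closing the argument.
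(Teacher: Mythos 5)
The paper does not actually prove this lemma --- it is quoted from Lemma 4.2 of \cite{PW} --- and your argument is in substance exactly the Parker--Wolfson one: the two normalization conditions built into $R_{\epsilon,s}$ (center of mass on the $z$-axis, and the fixed amount $C_0$ of energy on the northern hemisphere, which survives in the limit as \eqref{convergence in shrinking disc 2}) are incompatible with the limit measure in \eqref{convergence in shrinking disc 1} being empty or a single atom once $e(\tilde{f}_p)\equiv 0$ and $\tau_p=0$. Your case analysis is sound: for $l=0$ you could equally have played \eqref{energy conservation} against $a\geq B_0$, and for $l=1$ the conclusion $y_1=p^-$ does contradict $\lim_s\int_{H^+}|e(\tilde{f}_{\epsilon_s,s})|=C_0>0$, since the limit measure then assigns mass $0$ to the closed northern hemisphere.

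The one step whose justification does not hold up as written is the passage of the center-of-mass condition to the limit. Tightness --- which is what $\tau_p=0$ buys you --- controls how much mass escapes to infinity in $T_p\Sigma$, but it does not control the first moment $\int x\,d\mu_s$ on the non-compact plane: an amount of mass $\delta$ placed at distance $\delta^{-2}$ displaces the planar center of mass by $\delta^{-1}$, so ``all but $\delta$ of the mass stays in a fixed compact set'' does not make the center-of-mass functional weak$*$ continuous on $T_p\Sigma$. The clean repair, and the way \cite{PW} actually sets this up (consistent with the phrase ``translates the center of mass to the $z$-axis'' in \eqref{composition for renormalization}), is to take the center of mass of the energy measure regarded on the compact sphere $\mathbb{S}^2_p\subset\mathbb{R}^3$: the coordinate functions are then bounded and continuous, the center of mass converges under weak$*$ convergence with no tightness discussion at all, and the single limit atom $y_1$ must lie on the $z$-axis, i.e.\ $y_1\in\{p^+,p^-\}$. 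The case $y_1=p^+$ is excluded because $\tau_p=0$ while $a_1\geq B_0>0$, and $y_1=p^-$ by the $C_0$ condition, exactly as you conclude. With that substitution your proof is correct.
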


All these newly induced objects extend both the domain of original maps and give rise to a new map $\tilde{f}$ on the extended domain consisting of a finite number of $\mathbb{S}^2$'s wedging at bubble points. We denote the union of these spheres by $\bar{\mathcal{T}}$. A more geometrical description of $\bar{\mathcal{T}}$ is through a tower of $\mathbb{S}^2$-fibrations. A complex structure on $\Sigma$ determines the complex tangent bundle $T\Sigma := F\Sigma \times_{\mathbb{C}^*} \mathbb{C}$, where $F\Sigma$ is the complex frame bundle. Stereographic projection $\sigma$, an orientation preserving conformal map $\mathbb{S}^2 \to \mathbb{R}^2$, compactifies $T\Sigma$ into $S\Sigma := F\Sigma \times_{\mathbb{C}^*} \mathbb{S}^2$. The newly formed bubble points $\{y_1,\ldots,y_l\}$ are therefore elements in the fiber of $S\Sigma$ over bubble point $p$. Renormalization on $y_n$'s therefore give rise to more bubble points on $SS\Sigma$, and so forth. We therefore have a tower of sphere fibrations

\begin{equation}
\cdots \to S^m \Sigma \to S^{m-1}\Sigma \to \cdots \to S\Sigma \to \Sigma ,
\label{tower of fibrations}
\end{equation}

\begin{definition} [\cite{PW}]
A bubble domain $B$ at level $m$ is a fiber ($\simeq\mathbb{S}^2$) at the level $m$ in \eqref{tower of fibrations}. A bubble domain tower is an extension of $\Sigma$ by a finite union of bubble domains:

\begin{equation}
\bar{\mathcal{T}}:=\Sigma \bigcup_{m=1}^N T_m,
\label{bubble domain tower}
\end{equation}

\noindent where each $T_m$ is a finite collection of bubble domains at level $m$. A {\em bubble tree} with bubble domain $\mathcal{T}$ is $\bar{\mathcal{T}} / \sim$, identifying the north pole of each bubble domain (fiber) with its base point.
\label{bubble domain and tree}
\end{definition}

\noindent $\mathcal{T}$ has a natural tree structure, where vertices are maps and bubble points form the incident edges.  The discussions above associate to every sequence of holomorphic maps a unique bubble tree $\mathcal{T}$ where $\Sigma$ and every sphere is mapped holomorphically into $Z$. The image of $\mathcal{T}$ under $\tilde{f}$ is known as a cusp curve (\cite{W}).

Gromov compactness is constructed in this extended scope. For each $s$, we may extend the domain of $f_s$ from $\Sigma$ to $\mathcal{T}$ by appropriate surgeries. The precise definition requires the following lemma.

\begin{lemma}[Extension Lemma 6.1 of \cite{PW}]
For each $A>0$, there is $\epsilon_A >0$ such that for all $\epsilon < \epsilon_A$ and continuous $L^{1,2}$ map

\[f: \Sigma \backslash B(x,\epsilon) \to N\]

\noindent with $E(f)<A$ extends to a continuous $L^{1,2}$ map $\bar{f}: \Sigma \to N$.

Moreover, the following estimate hold on $B_f=B(x,r_f)$ with $\epsilon < r_f < \sqrt{\epsilon}$:

\[\int_{B_f} |d \bar{f}|^2 \leq C |\log \epsilon|^{-1}\]

\noindent and

\[dist(\bar{f}(y),f(z)) \leq C |\log \epsilon^{-\frac{1}{2}}|\;\;\forall y \in B_f, \; z\in\p B_f.\]

\label{extension lemma}
\end{lemma}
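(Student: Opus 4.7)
The plan is to find a good concentric circle $\partial B(x, r_f)$ inside the annulus $\{\epsilon < r < \sqrt{\epsilon}\}$ on which the tangential energy of $f$ is of order $|\log \epsilon|^{-1}$, and then to extend $\bar{f}$ across the inner disc $B(x, r_f)$ by coning off to a fixed point of $N$. The scale $\sqrt{\epsilon}$ is the right one because this annulus has logarithmic modulus $\tfrac{1}{2}|\log \epsilon|$, which is precisely what converts the uniform energy bound $E(f) < A$ into a $|\log \epsilon|^{-1}$ factor after averaging.

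First I would select the good radius by a Fubini/averaging argument. Setting $\tilde{e}(r) := \int_{\partial B(x,r)} |df|^2 \, d\sigma_r$, polar coordinates give $\int_{B(x,\sqrt{\epsilon}) \setminus B(x,\epsilon)} |df|^2 \, dA = \int_{\epsilon}^{\sqrt{\epsilon}} \tilde{e}(r) \, dr$; substituting $t = \log r$ rewrites this as the integral of $r\,\tilde{e}(r)$ against $dt$ over $[\log \epsilon, \tfrac{1}{2}\log\epsilon]$, an interval of length $\tfrac{1}{2}|\log \epsilon|$. The mean value theorem then produces $r_f \in (\epsilon, \sqrt{\epsilon})$ with
\begin{equation*}
r_f \, \tilde{e}(r_f) \;\leq\; \frac{2A}{|\log \epsilon|}.
\end{equation*}
The one-dimensional Sobolev inequality on the circle of length $2\pi r_f$ then yields $\operatorname{osc}(f|_{\partial B(x, r_f)}) \leq (2\pi r_f \, \tilde{e}(r_f))^{1/2} \leq C|\log \epsilon|^{-1/2}$, so the image of this circle lies in an arbitrarily small geodesic ball of $N$ as $\epsilon \to 0$.

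Next I would extend by coning off. Choose $\epsilon_A$ small enough that the previous oscillation bound falls below the convexity radius of $N$, so that $f(\partial B(x, r_f))$ sits inside a geodesically convex normal neighborhood of any base point $p_0$ taken from its closure. Let $v(\theta) := \exp_{p_0}^{-1}(f(r_f, \theta))$ and pick a smooth cutoff $\chi \colon [0, 1] \to [0, 1]$ with $\chi \equiv 0$ on $[0, 1/2]$ and $\chi(1) = 1$. Define
\begin{equation*}
\bar{f}(r, \theta) := \exp_{p_0}\bigl(\chi(r/r_f) \, v(\theta)\bigr) \quad \text{for } r \leq r_f,
\end{equation*}
and set $\bar{f} = f$ on the complement. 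Continuity across $\partial B(x, r_f)$ is automatic from $\chi(1) = 1$, and the $L^{1,2}$ property will follow from the energy estimate below.

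Finally I would verify the two estimates. Using $|v(\theta)| \leq C|\log\epsilon|^{-1/2}$ and $\int_{0}^{2\pi}|v'(\theta)|^2 \, d\theta = r_f \, \tilde{e}(r_f) \leq 2A/|\log \epsilon|$, the energy of $\bar{f}$ on $B_f$ splits into a radial contribution controlled by $|\chi'(r/r_f)|^2 |v|^2 / r_f^2$, which integrates directly to $\leq C|\log\epsilon|^{-1}$, and a tangential contribution controlled by $\chi(r/r_f)^2 |v'(\theta)|^2 / r^2$, which integrates as $\int_{r_f/2}^{r_f} dr/r \cdot \int_{0}^{2\pi}|v'|^2 \, d\theta \leq (\log 2)(2A/|\log \epsilon|)$; summing gives the claimed $\int_{B_f}|d\bar{f}|^2 \leq C/|\log\epsilon|$. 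The distance estimate $\operatorname{dist}(\bar{f}(y), f(z)) \leq \operatorname{dist}(\bar{f}(y), p_0) + \operatorname{dist}(p_0, f(z)) \leq C|\log \epsilon|^{-1/2}$ then follows from the triangle inequality. The main obstacle is the geometric step at the transition to the cone construction: one must guarantee that the oscillation bound is small enough to sit inside a convex normal ball of $N$, which forces $\epsilon_A$ to depend on the convexity radius of $N$; this is the unique place where the smallness hypothesis $\epsilon < \epsilon_A$ is actually consumed, after which the remaining estimates reduce to explicit one-dimensional calculus.
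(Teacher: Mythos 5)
The paper does not prove this lemma---it is quoted verbatim as Lemma 6.1 of \cite{PW}---so there is no internal proof to compare against; your argument is, however, correct and is essentially the standard Sacks--Uhlenbeck/Parker--Wolfson proof: average the energy over the annulus of conformal modulus $\tfrac12|\log\epsilon|$ to find a circle with tangential energy $O(|\log\epsilon|^{-1})$, bound the oscillation by the one-dimensional Sobolev inequality, and cone off inside a convex normal ball. The only point worth flagging is that your $\bar f$ agrees with $f$ only outside $B(x,r_f)$ rather than outside $B(x,\epsilon)$, but this is exactly what the lemma intends (the stated distance estimate compares $\bar f$ on $B_f$ with $f$ on $\partial B_f$, which would be vacuous otherwise), so the construction is consistent with the statement.
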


\noindent With the lemma,  we extend the domain of $f_s$, or construct the \emph{prolongation} of $f_s$ on $\mathcal{T}$. For each holomorphic $f_s:\Sigma \to Z$ considered above, we define

\[\mathcal{P}_\epsilon (f_s): \mathcal{T} \to Z\]

\noindent separately on $\Sigma$ and the spheres as follows.

First restrict $f_s$ on $\Sigma \backslash \cup B(p,\epsilon)$ for $\epsilon$ small enough, and extend $f_s$ across the small discs around bubble points to a map $\bar{f}_s$ according to Lemma \ref{extension lemma}. We define

\[\mathcal{P}_\epsilon (f_s) (z) = \bar{f}_s (z)\;\; ; \forall z\in\Sigma.\]

\noindent Let now $z \in \mathbb{S}^2_p$, a bubble attached to a bubble point $p \in \mathcal{T}$. Each $\epsilon >0$ is associated to a disc $B_\epsilon= B(p^+,\epsilon)$ around the north pole so that for large enough $s$, the renormalized $\tilde{f}_{\epsilon,s}$ is defined outside $D_\epsilon$. The map $\tilde{f}_{\epsilon,s}$ has its bubble points $\{y_j\}$, and let $B_j := B(y_j,\epsilon)$. We then restrict $\tilde{f}_{\epsilon,s}$ on $\mathbb{S}^2 \backslash \cup_j B_j$ and extends again by Lemma \ref{extension lemma} to $\hat{f}_p : \mathbb{S}^2_p \to Z$. We define

\[\mathcal{P}_\epsilon (f_s) (z) = \hat{f}_p (z) \;\; ; \forall z\in\mathbb{S}^2_p.\]

Identical definitions apply on neighborhoods around other bubble points. We then have the rigorous sense of Gromov compactness of sequence of holomorphic maps with bounded energies.

\begin{theorem}[Gromov Compactness, Theorem 6.2 of \cite{PW}]
Let $\{f_s\}$ be a sequence of $jJ-$ holomorphic maps $\Sigma \to Z$. Then there is a bubble tree $\mathcal{T}$, and a sequence $\epsilon_s \searrow 0$ as $s\to \infty$ such that a subsequence of

\[\mathcal{P}_{\epsilon_s}(f_s) : \mathcal{T} \to Z\]

\noindent converges in $C^0 \cap  L^{1,2}$ to a $jJ-$holomorphic map $f: \mathcal{T} \to Z$. The convergence is in $C^r(K)$ for all compact set $K$ away from bubble points.

\label{Gromov compactness for maps}
\end{theorem}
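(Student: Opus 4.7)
The plan is to combine the iterative bubble tree construction of Theorem \ref{bubble tree map} with the Extension Lemma \ref{extension lemma} to upgrade the $C^1_{loc}$-convergence on complements of bubble points into $C^0\cap L^{1,2}$-convergence of the prolongations on the entire bubble tree.

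First I would run Theorem \ref{bubbling of maps} to extract a subsequence, still denoted $\{f_s\}$, a finite set of bubble points $\{p_1,\ldots,p_N\}\subset\Sigma$, and a $jJ$-holomorphic limit $f_0:\Sigma\to Z$ with $f_s\to f_0$ in $C^1_{loc}(\Sigma\setminus\{p_1,\ldots,p_N\})$. Applying Theorem \ref{bubble tree map} at each $p_j$ and iterating at every newly produced bubble point gives, vertex by vertex of the emerging tree, a renormalized sequence of $jJ$-holomorphic maps that converges in $C^1$ away from the descendant bubble points on its sphere. The renormalization scheme of Theorem \ref{bubble tree map}, anchored by the constraint \eqref{energy on northern hemisphere} and the conservation identity \eqref{energy conservation}, ensures that no energy is lost on connecting tubes. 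Collecting $\Sigma$, the bubble spheres and all their descendants yields the bubble tree $\mathcal{T}$ of Definition \ref{bubble domain and tree} and a limit $jJ$-holomorphic map $f:\mathcal{T}\to Z$ whose restriction to each vertex equals the corresponding iterated bubble map.

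Next I would select a diagonal sequence $\epsilon_s\searrow 0$ so that, uniformly over the finitely many vertices of $\mathcal{T}$, the renormalized map at each vertex is defined on the complement of an $\epsilon_s$-neighborhood of its descendant bubble points and its north pole, and so that the $C^1$ convergence of Theorem \ref{bubble tree map} holds on each such complement. On a compact set $K\subset\mathcal{T}$ disjoint from all bubble points, for $s$ large enough $\mathcal{P}_{\epsilon_s}(f_s)|_K$ agrees with either $f_s$ on the $\Sigma$-part or with an iterated renormalization $\tilde{f}_{\epsilon_s,s}^{(v)}$ on each sphere part. The $C^r$ convergence on $K$ then follows from elliptic bootstrapping of the $jJ$-holomorphic equation together with the pointwise energy estimate of Proposition \ref{Bochner energy estimate}. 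On the remaining small extension annuli $B_f\subset B(\cdot,\sqrt{\epsilon_s})$ surrounding each wedge point, Lemma \ref{extension lemma} supplies
\[\int_{B_f}|d\mathcal{P}_{\epsilon_s}(f_s)|^2 \leq C|\log\epsilon_s|^{-1}\]
together with a uniform oscillation bound of order $|\log\epsilon_s|^{-1/2}$. The oscillation estimate, combined with continuity of the limit $f$ at each wedge point, gives $C^0$-convergence across all wedge points; the Dirichlet energy estimate, combined with the interior $C^1$-convergence, gives $L^{1,2}$-convergence on all of $\mathcal{T}$.

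The main obstacle is the coordination of the three scales—the radius $\epsilon_s$ of each excised disc, the renormalization scales $t_{\epsilon_s,s}^{(v)}$ at every vertex $v$ of $\mathcal{T}$, and the subsequence index $s$—so that on every connecting tube the prolongation has simultaneously vanishing Dirichlet energy (for $L^{1,2}$ control) and vanishing image diameter in $Z$ (for $C^0$ control across the wedge points). The energy-preserving renormalization built into Theorem \ref{bubble tree map} is precisely what makes such a diagonal choice possible, and Lemma \ref{extension lemma} is the bridge that converts small tube energy into small oscillation in the target, closing the argument.
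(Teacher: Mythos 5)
The paper does not actually prove this statement---it is imported verbatim as Theorem 6.2 of \cite{PW}---but your outline assembles precisely the ingredients the paper sets up for it (the subsequence and bubble points from Theorem \ref{bubbling of maps}, the energy-conserving renormalization and $\tau_p=0$ of Theorem \ref{bubble tree map}, and the prolongation via the energy and oscillation estimates of Lemma \ref{extension lemma}) and follows the standard Parker--Wolfson argument. Your sketch is correct and takes the same route the paper intends, so there is nothing to flag.
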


\noindent We say that such $f_s$ \emph{Gromov converge} to $f$ on the bubble tree $\mathcal{T}$.

\section{Bubbling of Vortex Moduli Spaces}

We now return to vortex moduli spaces. The bubbling phenomenon we study takes place on a family of vortices $\{[D_s,\phi_s]\} \subset \nukzero$ approaching the boundary of $\nuk$ as $s \to \infty$. By Theorem \ref{generalized moduli space}, each $[D_s,\phi_s]$ projects down to a tuple of linearly equivalent divisors $(E_{s,i})_{i=0}^k \in \mathbb{E}_{k+1,0}$ that approaches the boundary of $\mathbb{E}_{k+1}$. Convergence to the boundary indicates the coalescence of these divisors at infinity, or common zeros of these tuples of sections. Precisely, we write each divisor $E_{s,i}$ into a sum of two divisors

\begin{equation}
E_{s,i} = \sum_{j=1}^N a_{ij}(s) p_{ij}(s) + \sum_{l=1}^{N_i(s)} b_{il}(s) q_{il}(s),
\label{divisor sum}
\end{equation}

\noindent where the integral coefficients above sum up to $r$. The first sum corresponds to coalescence, that is, $a_{ij}(s) \to a_j$ and $p_{ij}(s) \to p_j$ as $s\to \infty$ for all $i$. We let

 \[E := \sum_{j=1}^N a_j p_j.\]

\noindent The second sum then consists of points that remain separated as $s \to \infty$. That is, $b_{il}(s) \to b_{il}$, $N_{si}\to N_i$, $q_{il}(s) \to q_{il}$ as $s\to \infty$ and

\[\bigcap_i \{q_{il}\} = \emptyset.\]

\noindent The divisors

\[E_{0 i} := \sum_{l=1}^{N_i} b_{il} q_{il} \]

\noindent give rise to new holomorphic maps with degree $r-\sum_j a_j$. Note that $a_j(s)$, $N_{i}(s)$, and $b_{il}(s)$ in \eqref{divisor sum} are all integers and may be assumed constants in $s$.

On the other hand, the vortices $\{[D_s,\phi_s]\} \subset \nukzero$ correspond to degree $r$ maps $\{f_s\}\subset \Hol$ via the diffeomorphism $\Phi_s$ described in section 3:

\[f_s(p) := [\phi_{s,0}(p) : \cdots : \phi_{s,k}(p)].\]

\noindent Equip $\pk$ with the Fubini-Study metric, we attempt to explicitly express the energy density of $f_s$. For each $s$, we observe the energy density $e(f_s)$ defined by

\begin{equation}
e(f_s) \omega = f_s^* \omega_{FS} = \p\bar{\p} \log \left(\sum_{i=0}^k |\phi_{s,i}|^2\right).
\label{energy density equation}
\end{equation}

\noindent $f_s$'s are of uniformly bounded, in fact constant, energies:

\[E(f_s)=\int_\Sigma e(f_s) \omega = \frac{1}{2\pi} r \;\;\forall s\]

\noindent and they fit into the discussions of section 4. In particular, $e(f_s)$'s blow up at finitely many points. Straightforward computations show that the only possible blow up points are $p_1,\ldots,p_N \in supp(E)$. For each $j$, fix a normal neighborhood $B_j:=B(p_j,\epsilon)$ small enough so that $B_j \cap_{s,i} E_{s,i} = \{p_{ij}(s)\}$. Given a local trivialization, each section $\phi_{s,i}$ is locally given by

\begin{equation}
\phi_{s,i} = (z-p_{ij}(s))^{a_j}f_{s,i},
\label{local expression of section}
\end{equation}

\noindent where $p_{ij}(s) \to 0$ as $s\to \infty$ and $f_{s,i}$ are non-vanishing holomorphic functions on $B_j$. Moreover, $f_{s,j}$ converge smoothly to a non-vanishing holomorphic function on $B_j$ by Theorem \ref{Main Theorem L}. With respect to this trivilization, the globally defined $(1,1)$ form \eqref{energy density equation} is then locally given by

\begin{eqnarray}
& &f_s^* \omega_{FS} \nonumber \\
&=& e(f_s) dz\wedge d\bar{z} \nonumber \\
&=&\frac{\sqrt{-1}}{2\pi} \left(\frac{\sum_{i=0}^k |z-p_{ij}(s)|^{2a_j-2}G_{s,i}\sum_{i=0}^k |z-p_{ij}(s)|^{2a_j}|f_{s,i}|^2}{\left[\sum_{i=0}^k |z-p_{ij}(s)|^{2a_j}|f_{s,i}|^2\right]^2}\right)  \; dz\wedge d\bar{z}\nonumber \\
& &- \frac{\sqrt{-1}}{2\pi}\left(\frac{\sum_{i=0}^k |z-p_{ij}(s)|^{2a_j-2}(z-p_{ij}(s))F_{s,i}\sum_{i=0}^k |z-p_{ij}(s)|^{2a_j-2}\overline{(z-p_{ij}(s))}H_{s,i}}{\left[\sum_{i=0}^k |z-p_{ij}(s)|^{2a_j}|f_{s,i}|^2\right]^2}\right)\; dz\wedge d\bar{z}, \nonumber \\
\label{energy density in coordinate}
\end{eqnarray}

\noindent where $F_{s,i},G_{s,i},$ and $H_{s,i}$ are smooth and nonvanishing functions on $B_j$ consisting of $f_{s,i}$ and its derivatives. They converge in $C^\infty$ and the only sources of singularities are $|z-p_{ij}(s)|$'s. It is then clear that the bubbling behaviours depend crucially on the convergence behaviours of $p_{ij}(s)$ to $0$ as $s \to \infty$. We first observe the outcome when singularities are formally ignored:

\begin{theorem}[Formal Removal of Singularities]
Fix a Hermitian line bundle $(L,H)$ over $\Sigma$. Given a sequence of vortices $\{[D_s,\phi_s]\}\subset \nukzero$ approaching the boundary of $\nuk$, there exists a finite set of points $\{p_1,\ldots,p_N\}\subset \Sigma$, integers $\{a_1,\ldots,a_N\}\subset \mathbb{N}$ such that $\sum_ja_j \leq r$, and vortices $[D_s^\prime,\phi_s^\prime]$ with smooth (subsequential) limit $[D_0,\phi_0]$ on line bundle

\[L_0:=L \otimes_j \mathcal{O}(-a_j p_j),\]

\noindent such that

\begin{itemize}
\item $[D_s^\prime,\phi_s^\prime] = [D_s,\phi_s]$ on $\Sigma \backslash \{p_1,\ldots,p_N\}$ (via the isomorphism $L_0 \simeq L$ on $\Sigma \backslash \{p_1,\ldots,p_N\}$).
\item $D_s^\prime$ and $\phi_s^\prime$ satisfy the vortex equation
\begin{equation}
\begin{cases}
  D_s^{\prime (0,1)} \phi_{s,i}^\prime = 0 \;\;\forall i \\
  \sqrt{-1}\Lambda F_{D_s^\prime} + \frac{s^2}{2}\left(\sum_{i=0}^k|\phi_{s,i}^\prime|^2_H-1\right)=0.   \\
  \end{cases}
  \label{s vortex prime}
\end{equation}
on $L_0 \to \Sigma$.
\item $[D_0,\phi_0]$ satisfies
\begin{equation}
\begin{cases}
  D_0^{(0,1)} \phi_{0,i} = 0 \;\;\forall i \\
  \sum_{i=0}^k|\phi_{0,i}|^2_H-1=0.   \\
  \end{cases}
\label{infity vortex after removal of singularity}
\end{equation}
on $L_0 \to \Sigma$.
\end{itemize}
\label{Removal of Singularities}
\end{theorem}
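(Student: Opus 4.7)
The plan is to handle the theorem in three moves: extract the bubble-point data from the Gromov compactness machinery of Section 4 applied to the associated holomorphic maps, construct the twisted bundle $L_0$ together with the formal modified vortex, and then pass to the smooth $s\to\infty$ limit on $L_0$ by invoking Theorem~\ref{Main Theorem L}.

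First, via the diffeomorphism $\Phi_s^{-1}$ of \eqref{correspondence maps vortices}, the sequence $\{[D_s,\phi_s]\}$ corresponds to a sequence of degree-$r$ holomorphic maps $f_s:\Sigma\to\pk$ of constant energy $r/(2\pi)$. Theorem~\ref{bubbling of maps} combined with Theorem~\ref{removable singularity for maps} yields a finite bubble set $\{p_1,\dots,p_N\}\subset\Sigma$ and a limit $f_0:\Sigma\to\pk$ with $C^1$ convergence on $\Sigma\setminus\{p_j\}$. The integers $a_j$ are read off from the divisor decomposition \eqref{divisor sum}: since $a_{ij}(s)$ is integer-valued and converges to a common limit $a_j$ independent of $i$, one has $a_{ij}(s)=a_j$ for $s$ large, and $\sum_j a_j\leq r$ because the residual ``away'' divisors in \eqref{divisor sum} contribute nonnegative degree.

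Next I set $L_0 := L\otimes\bigotimes_j\mathcal{O}(-a_j p_j)$ and $D:=\sum_j a_j p_j$, and write $s_D$ for the canonical holomorphic section of $\mathcal{O}(D)$ with zero divisor $D$. On $\Sigma\setminus\{p_j\}$, $s_D$ is nonvanishing, yielding the natural isomorphism $L_0\simeq L$ used in the first bullet of the theorem. Under this identification I define $\phi_{s,i}^\prime:=s_D^{-1}\phi_{s,i}$; locally in a coordinate $z$ centred at $p_j$, using \eqref{local expression of section},
\[
\phi_{s,i}^\prime(z)=\left(\frac{z-p_{ij}(s)}{z}\right)^{a_j}f_{s,i}(z),
\]
which is meromorphic at $p_j$ for finite $s$---hence the word \emph{formal} in the theorem's name---but, since $p_{ij}(s)\to 0$, tends in $C^\infty_{loc}(B_j)$ to the smooth nonvanishing limit $\phi_{0,i}=f_{0,i}$. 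The connection $D_s^\prime$ on $L_0$ is then the Chern connection of the singular Hermitian metric $H_s^\prime:=|s_D|^2 H_s$ on $L_0$, which by construction coincides with $D_s$ under the isomorphism on $\Sigma\setminus\{p_j\}$. Equation~\eqref{s vortex prime} consequently holds pointwise on the complement as a direct restatement of \eqref{s-vortex}; across each $p_j$ it extends distributionally, the Dirac contribution $2\pi a_j\delta(p_j)$ coming from $\sqrt{-1}\partial\bar\partial\log|s_D|^2$ in $\sqrt{-1}\Lambda F_{D_s^\prime}$ balancing the polar singularity produced by $\frac{s^2}{2}\sum_i|\phi_{s,i}^\prime|^2_{H_s^\prime}$ at $p_j$ (Poincar\'e--Lelong).

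Finally, for the smooth limit I would apply Theorem~\ref{Main Theorem L} to the Kazdan--Warner equation \eqref{Kazdan-Warner equation} written on $L_0$ for the smooth holomorphic pair $(\bar\partial_{L_0},\phi_0^\prime)$ with a smooth background metric $H_0$. The critical point is that by construction the limiting sections $\phi_{0,i}$ have no common zeros anywhere on $\Sigma$---the coalescing zeros at each $p_j$ have been absorbed exactly into the twist by $\mathcal{O}(-a_j p_j)$---so the norm function $h_0=-e^{2\psi_0}\sum_i|\phi_{0,i}|_{H_0}^2$ from \eqref{norm function} is strictly negative on all of $\Sigma$. Theorem~\ref{Main Theorem L} then supplies uniform $H^{l,p}$ bounds and smooth convergence of the Kazdan--Warner solutions, and in the limit the constant $c(s)=2c_1-s^2/2$ forces the algebraic relation $\sum_i|\phi_{0,i}|^2_{H_0}=1$, delivering the smooth vortex $[D_0,\phi_0]$ on $L_0$ satisfying \eqref{infity vortex after removal of singularity}. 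The main obstacle is giving the consistent distributional interpretation of \eqref{s vortex prime} at the bubble points for finite $s$: the key is that the singular pieces on both sides of the equation (from $\log|s_D|^2$ in the curvature and from the pole in $|\phi_{s,i}^\prime|^2$) are Dirac masses of the same strength and opposite sign and so cancel exactly, and in the $s\to\infty$ limit both disappear in $C^\infty$ as $p_{ij}(s)\to p_j$.
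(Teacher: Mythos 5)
There is a genuine gap in the middle step of your argument, namely in the construction of $[D_s^\prime,\phi_s^\prime]$ and the verification of \eqref{s vortex prime}. You set $\phi_{s,i}^\prime=s_D^{-1}\phi_{s,i}$ with $D=\sum_j a_jp_j$ supported at the \emph{limit} points, and take $D_s^\prime$ to be the Chern connection of $H_s^\prime=|s_D|^2H_s$. Two things go wrong. First, for finite $s$ the zeros of $\phi_{s,i}$ near $p_j$ sit at $p_{ij}(s)\neq p_j$ (see \eqref{local expression of section}), so $\phi_{s,i}^\prime$ has an honest pole of order $a_j$ at $p_j$ and is not an element of $H^0(\Sigma,L_0)$; calling this ``formal'' does not rescue the first equation of \eqref{s vortex prime}. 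Second, and more seriously, the claimed cancellation of Dirac masses is false. With the twisted metric one has pointwise $|\phi_{s,i}^\prime|^2_{H_s^\prime}=|\phi_{s,i}|^2_{H_s}$, so the zeroth-order term of \eqref{s vortex prime} is exactly the smooth zeroth-order term of the original equation and carries no singular part; the Dirac term of total mass $2\pi\sum_ja_j$ contributed to the curvature by $\sqrt{-1}\,\partial\bar\partial\log|s_D|^2$ is therefore left uncancelled. (If instead you measure $\phi_{s,i}^\prime$ with the untwisted metric, the section term acquires a pole of order $2a_j$ at $p_j$, which is not locally integrable in two real dimensions and in particular is not a Dirac mass, so again nothing cancels.) Indeed no cancellation is possible: integrating \eqref{s vortex prime} over $\Sigma$ forces the mean of $\sqrt{-1}\Lambda F_{D_s^\prime}$ to be $2\pi(r-\sum_ja_j)$ while the section term is unchanged, so the same analytic data cannot satisfy both \eqref{s-vortex} on $L$ and \eqref{s vortex prime} on $L_0$.

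The paper avoids this entirely: $[D_s^\prime,\phi_s^\prime]$ is \emph{not} obtained by twisting the original special metric. Instead one strips the coalescing factors off the sections at their actual (moving) zeros, replacing $\phi_{s,i}=(z-p_{ij}(s))^{a_j}f_{s,i}$ by the nonvanishing factors $f_{s,i}$; these are genuine holomorphic sections of $L_0$ with no common zeros, defining a degree $r-\sum_ja_j$ map $f_s^\prime=[f_{s,0}:\cdots:f_{s,k}]$, and one then re-runs the map-to-vortex construction $\Phi_s$ of \eqref{correspondence maps vortices} --- i.e.\ solves a fresh Kazdan--Warner equation \eqref{Kazdan-Warner equation} on $L_0$ with the new, strictly negative norm function --- to produce a smooth vortex satisfying \eqref{s vortex prime} exactly on all of $\Sigma$. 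Your first step (extracting $\{p_j\}$ and the multiplicities $a_j$ from the divisor decomposition and Theorem \ref{bubbling of maps}) and your last step (uniform bounds and smooth convergence from Theorem \ref{Main Theorem L} because the reduced sections have no common zeros, then dividing the curvature equation by $s^2/2$) agree with the paper; it is the intermediate construction that must be replaced.
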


\begin{proof}
We continue the usage of notations introduced in \eqref{divisor sum} $\sim$ \eqref{energy density in coordinate}.

For large enough $s$, a family $\{[D_s,\phi_s]\}$ is uniquely associated with a family of tuples $\{(E_{s,0},\ldots,E_{s,k},\tau_s)\}$, where $\tau_s \in (\mathbb{C}^*)^{k+1} / \mathbb{C}^*$ as in Theorem \ref{generalized moduli space} and $E_{s,i}$ as in \eqref{divisor sum}:

\[E_{s,i} = \sum_{j=1}^N a_j p_{j}^s + \sum_{l=1}^{N_i} b_{il} q_{il}^s.\]

Let $\psi_j$ be the defining meromorphic section of $\mathcal{O}(-a_j p_j)$ and consider the holomorphic sections $\phi_s^{'} = (\phi_{s,i}^\prime)_i$, defined by

\begin{equation}
\phi_{s,i}^\prime := \phi_{s,i} \otimes_j \psi_j \in H^0(\Sigma,L_0).
\label{bubbled off section}
\end{equation}

\noindent Choose $\psi = 1$ away from $p_j$'s so that on every compact subset $K \subset \Sigma \backslash \{p_1,\ldots,p_N\}$, $L|K \simeq L_0|K$ and $\phi_{s,i}^\prime=\phi_{s,i}$. The sections $\phi_{s,i}^\prime$ do not have common zero and define a degree $r-l$ holomorphic map $f_s^\prime: \Sigma \to \pk$, where $l=\sum_j a_j$ and $(f_s^\prime)^* \mathcal{O}(1) \simeq L_0$ smoothly. $[D_s^\prime,\phi_s^\prime]$ are then vortices defined by $f_s^\prime$ that satisfy \eqref{s vortex prime} via  identical construction of $\Phi_s$ in \eqref{correspondence maps vortices}. Since $f_s^\prime = f_s$ on $\Sigma \backslash \{p_1,\ldots,p_N\}$, the first and second statements of the theorem are clear from our constructions.

Staying in the compact region of the moduli space, the limit of $[D_s^\prime,\phi_s^\prime]$ is then naturally constructed from the limit of $f_s^\prime$.  The energy densities $e(f_s^\prime)$ are uniformly bounded on the entire $\Sigma$ since they are defined by sections whose zeros $\sum_{l=1}^{N_i} b_{il} q_{il}^s$ do not coalesce as $s \to \infty$. Let $f_0 \in \mathcal{H}_{r-l,k}$ be their limit, which defines a vortex $[D_0,\phi_0]$ on $L_0$. It is clear that $[D_s^\prime,\phi_s^\prime] \to [D_0,\phi_0]$ in $C^1$.

To verify that $[D_0,\phi_0]$ satisfies \eqref{infity vortex after removal of singularity}, we note that holomorphic condition from first equation of \eqref{s vortex prime} continues to hold as $s \to \infty$ via standard elliptic regularity arguments. As for the second equation, we note that $\sqrt{-1}\Lambda F_{D_s^\prime}$ are uniformly bounded in $s$. Indeed, $F_{D_s^\prime}$ is the pullback of curvature form $F_{FS}$ on $\mathcal{O}(1)$, which is proportional to $\omega_{FS}$, via $f_s^\prime$. Therefore, $\sqrt{-1}\Lambda F_{D_s^\prime} = C e(f_s^\prime)$ for all $s$, which are uniformly bounded by the discussions above. Dividing the second equation of \eqref{s vortex prime} by $\frac{s^2}{2}$ and let $s \to \infty$, the proof is completed.

\end{proof}

Evidently, formal removal of singularities reduces the topological degree of $L$ by $l=\sum_j a_j$. The loss is a clear consequence of concentration of energy densities of $f_s$. In another words, curvature forms corresponding to $f_s$ approach a smooth form plus a Dirac delta current supported on these isolated singularities, and the extension simply ignore the singular part. These vortices $[D_s^\prime,\phi_s^\prime]$ corresponds to the map defined by

\[f_s^\prime(z) = [f_{s,0}(z):\cdots:f_{s,k}(z)]\]

\noindent from \eqref{local expression of section}, $e(f_s^\prime)=W_s$ converges to the smooth function $W_0=e(f_0)$ as $s \to \infty$.

As in Theorem \ref{bubbling of maps} and \ref{bubble tree map}, we apply the carefully designed renormalization process to smooth out the singularities and use the limiting map to obtain the limiting vortex preserving the energy ignored in Theorem \ref{Removal of Singularities} by bubbles. Part of the following theorem is in fact a specialization of observations from \cite{GS}.

\begin{theorem}[Renormalization]
For each $p_j$ in Theorem \ref{Removal of Singularities}, there exists $\epsilon_s \to 0$ as $s\to\infty$ so that the geodesic disc $B(p_j,\epsilon_s)$ is conformally equivalent to $B_s \subset \mathbb{S}^2$, an increasing family of domains with $\cup_s B_s = \mathbb{S}^2 \backslash \{p^+\}$, and  the followings hold:

\begin{itemize}
\item The pullbacked vortices $[D_s^*,\phi_s^*]$ on $B_s$, satisfies
\begin{equation}
\begin{cases}
  D_s^{*0,1} \phi_{s,i}^* = 0 \\
  \sqrt{-1}\Lambda_s^* F_{D_s^*}+\frac{s^2}{2t_j(s)^2}(\sum_{i=0}^k|\phi_{s,i}^*|^2_H-1)=0\\
  \end{cases}
  \label{pullbacked s vortex}
\end{equation}
\noindent on line bundle $L_s^* := R_j(s)^* L$ over $B_s$. They coincide with the vortices defined by holomorphic functions $\tilde{f}_s := R_j(s) \circ f_s: B_s \to \pk$ in the way of \eqref{correspondence maps vortices}. Conformal maps $R_j(s)$ and parameters $t_j(s) \to \infty$ have been introduced in the proof of Theorem \ref{bubble tree map}. (Here, we denote $R_{\epsilon_s,s}, t_{\epsilon_s,s}$ by $R_j(s)$ and $t_j(s)$.)
\item Exactly one of the followings holds true:

\begin{enumerate}[label=(\alph*)]
\item There exists a $C^1_{loc}$-convergent subsequence of $\{[D_s^*,\phi_s^*]\}$ whose limit $[D_j,\phi_j]$ satisfies

\begin{equation}
\begin{cases}
  D_j \phi_{j,i} = 0 \;\;\forall i\\
 \sum_{i=0}^k|\phi_{j,i}|^2_H-1=0\\
  \end{cases}
  \label{pullbacked infinity vortex holomorphic sphere}
\end{equation}

defined on the entire $\mathbb{S}^2$. That is, a holomorphic sphere in $\pk$ bubbles off.

\item There exsits points $\{p_j^1,\ldots,p_j^{N_j}\}\subset \mathbb{S}^2$, nonnegative integers $a_j^0,a_j^1,\ldots,a_j^{N_j}$, and a $C^1_{loc}$-convergent subsequence of $\{[D_s^*,\phi_s^*]\}$ on $\mathbb{S}^2 \backslash \{p_j^1,\ldots,p_j^{N_j},p^+\}$, whose limit $[D_j,\phi_j]$ satisfies
\begin{equation}
\begin{cases}
  D_j \phi_{j,i} = 0 \;\;\forall i\\
  \sqrt{-1}\Lambda^* F_{D_j}+\frac{1}{2}(\sum_{i=0}^k|\phi_{j,i}|^2_H-1)=0\\
  \end{cases}
  \label{pullbacked infinity vortex}
\end{equation}
\noindent on a degree $a_j^0$ line bundle $L_j$ over $\mathbb{S}^2 \backslash \{p_j^1,\ldots,p_j^{N_j},p^+\}$. Moreover, $(L_j,D_j,\phi_j)$ is the $C^1$ limit of $(L_s^*,D_s^*,\phi_s^*)$.

\end{enumerate}
\item On $\mathbb{S}^2$, $\sqrt{-1}\Lambda^*F_{D_j}$ is a distribution given by a smooth function plus $\sum_{l=1}^{N_j} a_j^l \delta(p_j^l)$.
\end{itemize}
\label{renormalization}
\end{theorem}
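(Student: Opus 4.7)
My plan is to transfer the bubble-tree analysis of Theorem~\ref{bubble tree map} for maps to the vortex side via the correspondence $\Phi_s$ of \eqref{correspondence maps vortices}. For each bubble point $p_j$, I select the conformal rescaling $R_j(s) := R_{\epsilon_s,s}$ from \eqref{composition for renormalization}, with dilation factor $t_j(s) := t_{\epsilon_s,s}$ chosen by the energy-retention condition \eqref{energy on northern hemisphere}, and $\epsilon_s \searrow 0$ extracted diagonally as in the proof of Theorem~\ref{bubble tree map}. Because $t_j(s) \to \infty$, the preimages $B_s := R_j(s)^{-1}(B(p_j,\epsilon_s))$ form an increasing exhaustion of $\mathbb{S}^2 \setminus \{p^+\}$.

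Next I derive \eqref{pullbacked s vortex}. Put $L_s^* := R_j(s)^* L$ with pullback Hermitian metric, pullback connection $D_s^*$, and pullback sections $\phi_s^* := R_j(s)^* \phi_s^\prime$. Holomorphicity of $R_j(s)$ gives $D_s^{*(0,1)} \phi_{s,i}^* = 0$. The curvature pulls back as $F_{D_s^*} = R_j(s)^* F_{D_s^\prime}$, but the contraction $\Lambda^*$ is taken against the pullback K\"ahler form, which near $p^-$ is a dilation of the round form by $t_j(s)^{-2}$; hence $\sqrt{-1}\Lambda^* F_{D_s^*} = t_j(s)^{-2}\,R_j(s)^*\bigl(\sqrt{-1}\Lambda F_{D_s^\prime}\bigr)$, and pulling back \eqref{s vortex prime} yields \eqref{pullbacked s vortex}. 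The naturality of the assignment $\Phi_s$ in \eqref{correspondence maps vortices} guarantees that these pulled-back vortices coincide with those induced by the pulled-back holomorphic maps $\tilde{f}_s := f_s \circ R_j(s) : B_s \to \pk$.

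The dichotomy (a) vs.~(b) is governed by the asymptotic behavior of the ratio $\lambda(s) := s/t_j(s)$. Applying Theorem~\ref{bubbling of maps} to $\{\tilde{f}_s\}$, I extract a finite set of secondary bubble points $\{p_j^1,\ldots,p_j^{N_j}\} \subset \mathbb{S}^2$ and a $C^1_{\mathrm{loc}}$ limit $\tilde{f}_{p_j}$, and I let $[D_j,\phi_j]$ be the vortex canonically associated to this limit map. If $\lambda(s)\to\infty$, dividing \eqref{pullbacked s vortex} through by $s^2/(2 t_j(s)^2)$ and passing to the limit forces $\sum_i |\phi_{j,i}|^2 \equiv 1$, which by Theorem~\ref{removable singularity for maps} extends $\tilde{f}_{p_j}$ holomorphically across $p^+$ and across each $p_j^l$ (these being removable under the pointwise norm constraint), yielding case~(a). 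If instead $s^2/(2 t_j(s)^2) \to \tfrac{1}{2}$ (normalized by absorbing a constant into $t_j(s)$), passing to the limit on compact subsets of $\mathbb{S}^2 \setminus (\{p_j^l\} \cup \{p^+\})$ recovers exactly \eqref{pullbacked infinity vortex}, which is case~(b); the degree $a_j^0$ of the limit bundle $L_j$ is then read off as $\deg \tilde{f}_{p_j}$.

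Finally, the distributional identity for $\sqrt{-1}\Lambda^* F_{D_j}$ combines two inputs: away from the $p_j^l$, the identity $\sqrt{-1}\Lambda^* F_{D_j} = C\cdot e(\tilde{f}_{p_j})$ (from pulling back the Fubini--Study curvature) is smooth by the $C^1_{\mathrm{loc}}$ convergence, while at each $p_j^l$, the accumulating zeros of the limiting sections contribute, via Poincar\'e--Lelong, a delta mass proportional to the order of vanishing $a_j^l$. The main obstacle will be showing that the integer $a_j^l$ simultaneously measures (i)~the mass of the delta in the weak-$*$ energy limit of Theorem~\ref{bubbling of maps}(b), (ii)~the order of vanishing of the limiting sections, and (iii)~the Chern-class deficit needed to balance the degrees $\deg L_j = a_j^0 = a_j - \sum_l a_j^l$ modulo the escape at $p^+$; a subsidiary technical point is reconciling the Kazdan--Warner control of Theorem~\ref{Main Theorem L} with the dilation scale $t_j(s)$ so that the $C^1_{\mathrm{loc}}$ limit on the vortex side is an actual solution of \eqref{pullbacked infinity vortex holomorphic sphere} or \eqref{pullbacked infinity vortex}, not merely a weak limit.
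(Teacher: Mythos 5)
Your overall strategy coincides with the paper's: renormalize by the conformal maps $R_{\epsilon_s,s}$ of \eqref{composition for renormalization}, pull the vortex equation back to obtain \eqref{pullbacked s vortex} with the factor $t_j(s)^{-2}$ arising from the rescaled contraction $\Lambda_s^*$, identify the pulled-back vortices with those induced by $\tilde f_s$, and split into cases according to the asymptotics of $s/t_j(s)$. There is, however, one concrete gap: the theorem asserts that \emph{exactly one} of (a), (b) holds, so you must dispose of the third alternative $s/t_j(s)\to 0$ (fast blow up), which your dichotomy silently omits. A priori the scale $t_j(s)$ fixed by the energy-retention condition \eqref{energy on northern hemisphere} could outrun $s$; the paper rules this out by invoking the argument of \cite{GS} based on the asphericality of $\mathbb{C}^{k+1}$, and without some such input your case analysis is not exhaustive.

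Two further points that you flag as ``obstacles'' are actually resolved in the paper and should not be left open. First, the coincidence of $[D_s^*,\phi_s^*]$ with the vortex built from $\tilde f_s$ via \eqref{correspondence maps vortices} is not a formal naturality statement: the paper proves it by observing that the two candidate special metrics (the pullback $R_j(s)^*$ of the special metric on $\Sigma$, and the metric obtained by gauging $\tilde f_s^* H_{FS}$) both solve the same Kazdan--Warner equation on $B_s$, whose solution is unique, so the two pairs agree up to unitary gauge. Second, the assertion that $(L_j,D_j,\phi_j)$ is a genuine degree-$a_j^0$ line bundle with connection over the punctured sphere, rather than merely a weak limit of maps, is established by writing the connection one-forms explicitly as in \eqref{local expression of connection one forms}, deducing uniform convergence of the transition functions from the $C^1_{loc}$ convergence of $\tilde f_s$, and passing the compatibility relation \eqref{family of transitions} to the limit; the masses $a_j^l$ in the final distributional statement then come from energy conservation $a_j^0 = a_j - \sum_l a_j^l$ (no loss at $p^+$ by the choice of $t_j(s)$), not from a separate Poincar\'e--Lelong argument. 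Your Poincar\'e--Lelong framing is plausible, but it would still require proving that the vanishing orders of the limiting sections equal the measure-theoretic masses of \eqref{convergence in measure}, which is precisely the identification you list as unresolved.
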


We make a brief digression to observe the relationship between the scales $t_j(s)$ and zeros $p_{ij}(s)$ for the simple case when $B(p_j,\epsilon)$ is Euclidean. Moreover, we assume that all $f_{s,i}$'s, the nonvanishing parts of the sections in \eqref{local expression of section}, are $1$:

\begin{equation}
  \phi_{s,i}=(z-p_{ij}(s))^{a_j}
  \label{local expression of sections simple}
\end{equation}

\noindent and $f_s(z) = [(z-p_{oj}(s))^{a_j}:\cdots:(z-p_{kj}(s))^{a_j}]$. The extension map $f_0$ in Theorem \ref{removable singularity for maps} is therefore constant and of zero energy density. The energy density is simplified considerably:

\begin{eqnarray}
  & &e(f_s) \nonumber \\
  &=& e(f_s)-e(f_0) \nonumber \\
  &=& \frac{\sqrt{-1}}{2\pi} a_j \frac{\sum_{i=0}^k |z-p_{ij}(s)|^{2a_j-2}\sum_{i=0}^k |z-p_{ij}(s)|^{2a_j}}{\left(\sum_{i=0}^k |z-p_{ij}(s)|^{2a_j}\right)^2} \nonumber \\
  &-& \frac{\sqrt{-1}}{2\pi} a_j\frac{\sum_{i=0}^k |z-p_{ij}(s)|^{2a_j-2}(z-p_{ij}(s))\sum_{i=0}^k |z-p_{ij}(s)|^{2a_j-2}\overline{(z-p_{ij}(s))}}{\left(\sum_{i=0}^k |z-p_{ij}(s)|^{2a_j}\right)^2} \nonumber \\
  \label{energy density simple case}
\end{eqnarray}

\noindent We denote by $e(f_{s,t})$ the energy density of $f_s$ pullback back by dilation $t$. By conformal invariance, we have

\begin{equation}
  \int_{B(0,t\epsilon)} e(f_{s,t}) dy_t^2 = a_j(\epsilon,s)
\end{equation}

\noindent for all $t$ and $a_j(\epsilon,s) \to a_j$ as $\epsilon \to 0$ and $s \to \infty$. Recall that $t_j(s)$ is the scaling factor so that the masses of $e(f_{s,t})$ are $C_0$ for all $s$ on the annulus $B(0,t_j(s)\epsilon) \backslash B(0,1)$ (which correspond to the $\sigma^{-1}(B(0,t_j(s)\epsilon)) \cap H^+$, where $\sigma$ is the stereographic projection.) Also recall that $C_0$ is strictly less than half of the lower bound of energies of nonconstant holomorphic maps between $\mathbb{S}^2$ and $\pk$. Consider

\begin{eqnarray}
  F_s(t)&:=& \int_{B(0,1)} e(f_{s,t})  dy_t^2  \nonumber \\
&=&\int_0^{2\pi} \int_0^{\frac{1}{t}}  r e(f_s)(r,\theta) \; drd\theta. \nonumber \\
\end{eqnarray}

\noindent The above conditions for $t_s$ then require that

\begin{equation}
  F_s(t_s) = a_j(\epsilon,s)-C_0.
  \label{constant energy requirement}
\end{equation}

\noindent Differentiating $F_s(t)$ with respect to $t$ and applying fundamental theorem of calculus, we have

\begin{equation}
  F_s^\prime(t) = \frac{K_s}{t},
  \label{differentiation of F s t}
\end{equation}

\noindent where

\begin{equation}
  K_s = \int_0^{2\pi} -e(f_s)|_{\mathbb{S}^1} d\theta.
  \label{definition of K s}
\end{equation}

\noindent $t$ is absent in $K_s$ since $e(f_{s,t})$ = $t^2 k_s(z)$, and $t^2$ cancels with the derivative of $\frac{1}{t}$.  $k_s$ is independent of $t$ and it is straightforward to check that $K_s \to 0$ as $s \to \infty$ since $k_s$ does. We have

\begin{equation}
  t_s= e^{\frac{a_j(\epsilon,s)-C_0-C}{K_s}}.
  \label{formula for t s}
\end{equation}

\noindent for some $C>0$. Since $e(f_s)$ concentrates near $0$, we see that for $s$ enough, $F_s(1)=C=a_j$. Also note that $t_s \to \infty$ as $s \to \infty$ as expected.

\noindent For the general case of metric $\omega$ and non-vanishing holomorphic functions $f_{s,i}$, we note that $\omega_t$ is asymptotically flat and the functions approach non-vanishing holomorphic functions. \eqref{formula for t s} is then asymptotically satisfied.

We begin the proof of Theorem \ref{renormalization}.

\begin{proof}

With the preparations above, we repeat the proof of Theorem \ref{bubble tree map} except removal of singularity. Recall the composition of conformal transformations in \eqref{composition for renormalization}:

\begin{equation}
R_j(s): \mathbb{S}^2 \xrightarrow{\rho_{t_j(s)}} \mathbb{S}^2 \xrightarrow{T_s} \mathbb{S}^2 \xrightarrow{\sigma} T_{p_j}\Sigma(\simeq \mathbb{C}) \xrightarrow{exp} B(p_j,\epsilon),
\label{composition for renormalization vortices}
\end{equation}

\noindent where $T_s$ and $\rho_{t_j(s)}$ are the conformal transformations on $\mathbb{S}^2$ corresponding to appropriate translations and dilations $y \to \frac{y}{t_j(s)}$ on $\mathbb{C}$, respectively. $\sigma$ and $exp$ are the usual stereographic projection and exponential map, respectively. Let $\tilde{f}_s := f_s \circ R_j(s)$ and $B_s := R_j(s)^{-1}(B(p_j,\epsilon))$, the following diagram is considered:

\begin{equation}
\begin{tikzpicture}
  \matrix (m) [matrix of math nodes,row sep=3em,column sep=4em,minimum width=2em]
  {
     L_s^* := \tilde{f}_s^*\mathcal{O}(1) & L & (\mathcal{O}(1),H_{FS}) \\
     B_s & B(p_j,\epsilon) & \pk \\};
  \path[-stealth]
    (m-1-1) edge  (m-2-1)
    (m-2-1.east|-m-2-2) edge node [above] {{\tiny $R_j(s)$}} (m-2-2)
            edge [bend right=20] node [below] {{\tiny $\tilde{f}_s$}} (m-2-3)
    (m-1-2) edge  (m-2-2)
    (m-2-2) edge node [above] {{\tiny $f_s$}} (m-2-3)
    (m-2-3) edge node [right] {{\tiny $z_0,\ldots,z_k$}}(m-1-3);
\end{tikzpicture}
\label{pull back diagram}
\end{equation}

\noindent The complex structure on $B_s$ is given by

\begin{equation}
  J_s(\xi)=j_s(\xi) J_{\mathbb{S}^2},
  \label{pullback complex structure}
\end{equation}

\noindent where $\xi$ is the standard complex coordinate on $\mathbb{S}^2$ centered at $p^+$ and $J_{\mathbb{S}^2}$ is the standard complex structure of the $2$ sphere. It is straightforward to check that $j_s \to 1$ uniformly as $s \to \infty$ so that the maps $\tilde{f_s}$ are asymptotically holomorphic with respect to standard complex structure compatible with the round metric.

The vortex equation \eqref{s-vortex} on $L \to B(p_j,\epsilon)$ is then pulled back to \eqref{pullbacked s vortex} defined on the left end of the diagram, where the bundle is equipped with background metric $H_s^* := R_j(s)^* H$. The extra factor of $\frac{1}{t_j(s)^2}$ comes from the fact that $\Lambda_s^*$ is taken with respect to the metric rescaled by $t_j(s)$. The section terms are of course invariant throughout these parametrizations. On the other hand, we may construct solutions $[D_s^\dagger,\phi_s^\dagger]$ to \eqref{pullbacked s vortex} directly from holomorphic maps $\tilde{f}_s: B_s \to \pk$ in the manner of $\Phi_s$ in section 3. Namely, we start with background metric $H_s^\dagger := \tilde{f}_s^* H_{FS}$ on $L_s^*$ and the same holomorphic sections $\phi_{s,i}^* := \tilde{f}_s z_i$ on $L_s^*$. Turning $H_s^\dagger$ into the special metric $G_s^\dagger$ via a gauge, we obtain the solutions $[D_s^\dagger,\phi_s^\dagger]$. The two solutions, $[D_s^*,\phi_s^*]$ and $[D_s^\dagger,\phi_s^\dagger]$ correspond to two special metrics, $G_s^*$ and $G_s^\dagger$, that are gauge transformed from $H_s^*$ and $H_s^\dagger$, so that the holomorphic pair $(\tilde{f}_s^* \bar{\p}, \tilde{f}_s^* \phi_s)$ satisfies the second equation of \eqref{pullbacked s vortex}. $G_s^*$ and $G_s^\dagger$ are therefore determined, up to unitary gauge, by the unique solution to one Kazdan-Warner equation, and we have $[D_s^*,\phi_s^*] = [D_s^\dagger,\phi_s^\dagger]$.

We have identified pullbacked vortices $[D_s^*,\phi_s^*]$ with holomorphic maps $\tilde{f}_s$, we wish to study the convergent behavior of vortices from the Gromov compactness of maps. As we have expected, the existences of bubble point for $\tilde{f}_s$ are determined by the relative rates of convergence of $p_{ij}(s)$ to $p_j$, and they are observed through the scale $t_j(s)$. We here observe that bubble point exists when $t_j(s)$ grows proportionally to $s$, in which the scaling gives rise to an {\em affine} vortex.

\vspace{0.2cm}
\noindent {\em $\bullet$ Case 1 (Fast Blow Up) $\frac{s}{t_j(s)} \to 0$ as $s\to\infty$}.

This case has been ruled out by arguments from \cite{GS} due to the asphericality of $\mathbb{C}^{k+1}$.

\vspace{0.2cm}
\noindent {\em $\bullet$ Case 2 (Slow Blow Up): }$\frac{s}{t_j(s)} \to \infty$ as $s\to\infty$.

For this extreme, it follows that $\sum_{i=0}^k |\phi_{s,i}^*|^2_H \to 1$ as $s\to\infty$. In another words, for $s$ large enough, we may assume that

\[\cap_{i=0}^k \phi_{s,i}^{*\;-1} (\{0\}) = \emptyset\]

\noindent and that the limiting sections have no common zero. From \eqref{energy density in coordinate}, we see that the holomorphic maps $\{\tilde{f}_s\}$ have uniformly bounded energy densities and do not have bubble point. Let $\tilde{f}_{p_j}$ be its subsequential $C^1$ limit. The map defines a vortex $[D_j,\phi_j]$, which satisfies \eqref{pullbacked infinity vortex holomorphic sphere}. Indeed, since $e(\tilde{f}_s^*) $, and therefore $\sqrt{-1}\Lambda_s^* F_{D_s^*}$ in \eqref{pullbacked s vortex}, are uniformly bounded, the curvature term there approaches $0$ after dividing by $\frac{s^2}{2(t_j(s))^2}$ and let $s \to \infty$. The holomorphic condition clearly continues to hold as $s \to \infty$ and \eqref{pullbacked infinity vortex holomorphic sphere} follows. We therefore obtain a finite energy holomorphic map from $\mathbb{C}$ to $\pk$, which extends to a map from $\mathbb{S}^2$ to $\pk$, or a holomorphic sphere in $\pk$.

\vspace{0.2cm}
\noindent {\em $\bullet$ Case 3 (Moderate Blow Up): }$\frac{s}{t_j(s)} \to \lambda \in (0,\infty)$ as $s\to\infty$.

Bubble points occur only in this case, when energy densities blow up roughly proportional to $s$. With an additional normalization if necessary, we assume that $\lambda=1$. By conformal invariance of energy, we have $E(\tilde{f}_s) = a_j$ for all $s$. Following constructions in the proof of Theorem \ref{renormalization}, let $\{p_j^1,\ldots,p_j^{N_j},p^+\}$ be the bubble points, $\tilde{f}_j$ be the $C^1_{loc}$ limit of $\tilde{f}_s$ on $\Sigma \backslash \{p_j^1,\ldots,p_j^{N_j}\}$, and $\bar{f}_j$ be the extension of $\tilde{f}_j$ to $\mathbb{S}^2$. Let $a_j^l \in \mathbb{N}$ be defined as in Theorem \ref{renormalization}:

\begin{equation}
a_j^l := \lim_{\epsilon \to 0}\limsup_{s\to\infty} \int_{B(p_j^l,\epsilon)} \left||e(\tilde{f}_s)|-|e(\bar{f}_j)|\right| \omega_s.
\label{energy loss at p j l}
\end{equation}

\noindent The vortex $[D_j,\phi_j]$ defined by $\tilde{f}_j$ therefore satisfies \eqref{pullbacked infinity vortex}, where $a_j^0 = E(\bar{f}_j)$.

We finally check that the limiting procedure is compatible with the geometric structure of line bundles $L_s^*:=\tilde{f}_s^* \mathcal{O}(1)$ over $B_s$. The degrees of $L_s^*$ are precisely the energies of $\tilde{f}_s$, which is the constant $a_j$. Each $L_s^*$ is equipped with transition functions $\{\gamma_{\alpha \beta}^s\}$ (pullbacked from those on $\mathcal{O}(1)$ via $\tilde{f}_s$) so that on the overlap, we have the compatibility condition

\begin{equation}
A_{s,\alpha} = -d\gamma_{s,\alpha \beta} \left(\gamma_{s,\alpha\beta}\right)^{-1} + \gamma_{s,\alpha\beta} A_{s,\beta}\left(\gamma_{s,\alpha\beta}\right)^{-1},
\label{family of transitions}
\end{equation}

\noindent where $A_{s,\alpha}$ and $A_{s,\beta}$ are the connection forms of the unitary connections $D_s^*$ constructed above. That is, the coefficients of these one forms are by definition algebraic expressions $\tilde{f_s}$ and its first derivatives. For a local holomorphic frame $u_\alpha$, we have

\begin{equation}
A_{s,\alpha} = d^\prime \log H_{s,\alpha}^*  = d^\prime \log \left(\frac{|u_\alpha|^2}{\sum_{i=0}^k |\tilde{f}_s (z_i)|^2}\right).
\label{local expression of connection one forms}
\end{equation}

\noindent The $C^1_{loc}$ convergence of $\tilde{f}_s$ therefore imply that $A_\alpha^s$ and $A_\beta^s$ converge uniformly (in $s$) on their domains of definitions. Smooth convergence of these transition functions then follow by standard bootstrapping arguments. As $s \to \infty$, \eqref{family of transitions} passes to the limit:

\begin{equation}
A_{j,\alpha} -d\gamma_{j,\alpha \beta} \left(\gamma_{j,\alpha\beta}\right)^{-1} + \gamma_{j,\alpha\beta} A_\beta\left(\gamma_{j,\alpha\beta}\right)^{-1}
\label{family of transitions in the limit}
\end{equation}

\noindent on $\mathbb{S}^2 \backslash \{p_j^1,\ldots,p_j^{N_j},p^+\}$. Therefore, the local transition functions and one forms patch together to give a degree $a_j^0$ holomorphic line bundle $L_j$ over the punctured sphere. This proves the second statement.

\noindent Finally, since the conformal transformations are designed so that there is no energy loss at $p^+$, and therefore

\begin{equation}
  E(\tilde{f}_j) = a_j - \sum_{l=1}^{N_j} a_j^l := a_j^0.
  \label{sum of energy conformal}
\end{equation}

\noindent The curvature current $F_{D_j}$ it defines then satisfies the last statement.

\end{proof}

Iterating the procedure at each bubble point $p_j^l \in \mathbb{S}^2$, the theorem provides an almost complete description on the root of bubble tree. We however still face the hurdle of extending the vortex across $p^+$, or removal of singularities. Extending holomorphic maps of finite energy over $p^+$ poses little difficulty by application of Theorem \ref{removable singularity for maps}, but the extensions are not necessarily compatible with the correspondence of maps and vortices. This is the stage where classical extension results for topological fields, such as those in \cite{U1} and \cite{U2}, enter. The more general case of critical points to the Yang-Mills-Higgs energy functional, which contains our case, have been discussed in \cite{S}. We briefly summarize the work here.

\section{Extension of the bundle at infinity and conical metrics}
\subsection{The smooth case}
Fix a small geodesic disc $B_{R_0}$ around $p+$ within the injectivity radius that contains no other bubble point. The removability of singularities depends on how well one controls certain norms of connection form $A$ and curvature $F_A$ on $B_{R_0} \backslash \{p^+\}$ (cf. \cite{U1},\cite{U2} for Yang-Mills fields). For critical points to $YMH_s$, Smith showed that the limiting field may be extended across $p^+$ if $F_A$ is integrable enough and there is a gauge on which

\begin{equation}
  \left\Vert A \right\Vert_p |_{B_{R_0}\backslash \{p^+\}} \leq O(\rho),
  \label{decaying condition for A smooth}
\end{equation}

\noindent where $\rho=|\xi|$ is the radial component of the polar coordinate of $B_{R_0}$.  Such an estimate allows one to apply implicit function theorem to solve the required regularity condition

\begin{equation}
  d^*(g^{-1}dg + g^{-1}A g)=0,
  \label{regularity condition}
\end{equation}

\noindent which ensures the smoothness of extended field on the entire disc $B_{R_0}$. The establishment of \eqref{decaying condition for A smooth} has been thoroughly discussed in \cite{S}. It was proved that the bound follows from certain decay condition on the holonomy of the connection $D$, called the "Condition H":

\begin{definition}[Condition $H$]
Let $D$ be a an affine connection on the bundle $L$ over $B_{R_0}$ and $\gamma_R(t): [0,1] \to \p B_{R_0}$ be a smooth positive parametrization of the circle $\p B_{R_0}$ around $p^+$. Let $g(R)$ be the holonomy of $D$ over the $\gamma (t)$. That is, for every $D$-parallel vector field $v$ over $\gamma(t)$, we have

  \[v(\gamma(1))=v(\gamma(0))\cdot g(R).\]

\noindent We say that $D$ satisfies {\em connection $H$} if

\begin{equation}
  \lim_{R \to 0} g(R) = id,
  \label{condition H}
\end{equation}

\noindent pointwise.
\end{definition}

This condition is in fact equivalent to the existence of the gauge, on which the angular component of the connection form $A$ decays to $0$.

\begin{theorem}[Theorem 1.1 in \cite{S}]
  Condition $H$ is equivalent to the existence of a unitary gauge in which

  \[A=A_\rho (\rho,\theta) d\rho + A_\theta (\rho,\theta) d\theta \in \mathfrak{u}(1)\]

  \noindent with

  \begin{equation}
    \lim_{\rho\to 0} A_\theta (\rho,\theta) = 0
  \label{decay of A radial smooth}
  \end{equation}
  \noindent in sup norm topology.
\end{theorem}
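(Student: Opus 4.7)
The plan is to prove both directions of the equivalence, with the reverse being a direct computation and the forward being a careful three-step gauge construction that separates the winding, the mean, and the oscillatory part of the angular component of $A$. For the easy direction, suppose a gauge exists with $\sup_\theta |A_\theta(\rho,\theta)| \to 0$. On each circle $\partial B_R$ the parallel transport equation is the scalar ODE $\dot v(\theta) + A_\theta(R,\theta)v(\theta) = 0$, whose $U(1)$ solution is $v(\theta) = v(0)\exp\bigl(-\int_0^\theta A_\theta(R,\theta')d\theta'\bigr)$. Hence the holonomy is $g(R) = \exp\bigl(-\int_0^{2\pi} A_\theta(R,\theta)d\theta\bigr)$, and the estimate $|g(R) - \mathrm{id}| \leq 2\pi \sup_\theta |A_\theta(R,\theta)| \to 0$ yields Condition $H$.

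For the forward direction I would first place $A$ in radial gauge on $B_{R_0}\setminus\{p^+\}$ by parallel transport along radial rays starting at $\rho = R_0$, producing a smooth gauge with $A_\rho \equiv 0$ in which the holonomy around $\partial B_R$ is exactly $\exp\bigl(-i\int_0^{2\pi} a_\theta(R,\theta)d\theta\bigr)$, writing $A_\theta = i a_\theta$. Condition $H$ forces this exponential to limit to the identity, so a continuous branch choice in $R$ gives an integer $n_0$ with $\int_0^{2\pi} a_\theta(R,\theta)d\theta \to 2\pi n_0$. I then apply the globally defined unitary gauge $g = e^{i n_0 \theta}$, which is smooth on the annulus because $n_0 \in \mathbb{Z}$ and which shifts $a_\theta \mapsto a_\theta - n_0$, so that the mean $c(\rho) := \tfrac{1}{2\pi}\int_0^{2\pi} a_\theta(\rho,\theta)d\theta$ tends to $0$.

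The final step decomposes $a_\theta(\rho,\theta) = c(\rho) + \tilde a(\rho,\theta)$ with $\int_0^{2\pi}\tilde a(\rho,\theta)d\theta = 0$ and sets
\[
\chi(\rho,\theta) := -\int_0^\theta \tilde a(\rho,\theta')\,d\theta',
\]
which is single-valued and smooth on the annulus by the zero-mean condition. Applying the unitary gauge $e^{i\chi}$ changes $a_\theta \mapsto a_\theta + \partial_\theta\chi = c(\rho)$, so the new angular component is $\theta$-independent and tends to $0$ uniformly in $\theta$ as $\rho \to 0$, giving the required sup-norm decay.

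The main obstacle I anticipate is not the algebra but the regularity and the branch selection: one must verify that $\int_0^{2\pi} a_\theta(R,\theta)d\theta$ is a continuous function of $R$ admitting a well-defined integer limit $n_0$ in the first place, and that the gauge $e^{i\chi}$ constructed from $\tilde a$ remains smooth up to $\rho = 0$ in a controlled way despite being defined by an angular integral of a quantity only estimated in sup norm. A secondary but minor point is that the last gauge transformation spoils the radial condition $A_\rho \equiv 0$; this is harmless, since the theorem's conclusion is correctly silent about $A_\rho$ and concerns only the decay of $A_\theta$.
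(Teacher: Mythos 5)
This statement is quoted verbatim from Smith \cite{S} (Theorem 1.1 there) and the paper offers no proof of its own, so there is nothing internal to compare against; your argument has to be judged on its merits, and it holds up. Both directions are correct for the abelian case actually asserted here ($A \in \mathfrak{u}(1)$): the reverse direction is the standard computation $g(R)=\exp\bigl(-\int_0^{2\pi}A_\theta(R,\theta)\,d\theta\bigr)$ with the bound $|g(R)-\mathrm{id}|\le 2\pi\sup_\theta|A_\theta|$, and your forward direction is a clean three-step reduction (radial gauge, winding correction $e^{in_0\theta}$, exact correction $e^{i\chi}$) that ends with $A_\theta$ equal to the $\theta$-independent mean $c(\rho)\to 0$. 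The obstacle you flag about the branch selection of $n_0$ is not a genuine gap: $R\mapsto\int_0^{2\pi}a_\theta(R,\theta)\,d\theta$ is continuous because $A$ is smooth away from the puncture, and a continuous real function whose distance to $2\pi\mathbb{Z}$ tends to zero must eventually remain in a single component of $\{x:\mathrm{dist}(x,2\pi\mathbb{Z})<\pi/2\}$, hence converges to one $2\pi n_0$. Likewise, smoothness of $e^{i\chi}$ up to $\rho=0$ is not required, since the gauge need only live on the punctured disc and the conclusion concerns only the limit of $A_\theta$. The one caveat worth recording is that your argument leans entirely on commutativity --- the holonomy being exactly the exponential of the angular integral, and gauge transformations acting additively on $A_\theta$ --- so it does not reproduce Smith's theorem in its original non-abelian generality, where the analogous statement requires the more delicate parallel-transport and broken-path constructions of \cite{S}; but for the $\mathfrak{u}(1)$ statement as used in this paper, your proof is complete and arguably more elementary than the cited source.
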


\noindent With \eqref{decay of A radial smooth}, one may apply further gauge transformation, or the "auxiliary gauge" to improve the decay of $A$.

\begin{theorem}[The Auxiliary Gauge, cf. Lemma 1.1 of \cite{S}]
  If the connection $D$ satisfies the $H$ condition \eqref{decay of A radial smooth}, there exists a gauge in which $D=d+A$ and the followings hold:

  \begin{equation}
    \lim_{\rho \to 0} A_\rho(\rho,0)=0,\hspace{1cm}\lim_{\rho\to 0}A_\theta (\rho,\theta)=0,\hspace{1cm} \lim_{\rho\to 0} \frac{\p}{\p \rho}A_\theta (\rho,\theta)=0.
    \label{auxiliary gauge}
  \end{equation}
\end{theorem}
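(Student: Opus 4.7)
The plan is to apply a radial (exponential) gauge transformation on top of the Condition $H$ gauge supplied by the preceding theorem. Since the structure group is abelian, gauge transformations act additively on $A$; concretely, choose a real-valued phase
\[
\chi(\rho,\theta) := i\int_0^\rho A_\rho(\rho',\theta)\,d\rho',
\]
which is real because $A_\rho\in\mathfrak{u}(1)$ and well-defined on $B_{R_0}\setminus\{p^+\}$ under the local integrability of $A$ ensured by the finite $YMH_s$ energy. The gauge $g=e^{i\chi}$ transforms $A$ to $A'=A+i\,d\chi$, which satisfies $A'_\rho\equiv 0$ on the entire punctured disc; this immediately yields the first limit $\lim_{\rho\to 0}A'_\rho(\rho,0)=0$ on the nose.

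For the remaining two limits I would use the curvature identity $F_{\rho\theta}=\partial_\rho A_\theta-\partial_\theta A_\rho$ together with the fundamental theorem of calculus to rewrite $A'_\theta=A_\theta+i\,\partial_\theta\chi$ as
\[
A'_\theta(\rho,\theta)=A_\theta(0,\theta)+\int_0^\rho F_{\rho'\theta}(\rho',\theta)\,d\rho'.
\]
The boundary term vanishes by Condition $H$, so the second decay statement reduces to radial integrability of $F_{\rho\theta}$, while differentiating once more gives $\partial_\rho A'_\theta=F_{\rho\theta}$, so the third statement reduces to pointwise decay $F_{\rho\theta}(\rho,\theta)\to 0$. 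Both estimates follow from $D$ being the limiting vortex of Theorem \ref{renormalization}, which satisfies \eqref{pullbacked infinity vortex} on a neighborhood of $p^+$ that is chosen within the injectivity radius and disjoint from all other bubble points: the curvature two-form $F$ is smooth in a Cartesian chart there, and the conversion $dx\wedge dy=\rho\,d\rho\wedge d\theta$ forces $F_{\rho\theta}=O(\rho)$ as $\rho\to 0$, which in turn yields $A'_\theta=O(\rho^2)$ and $\partial_\rho A'_\theta=O(\rho)$, producing all three limits in \eqref{auxiliary gauge}.

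The main obstacle I anticipate lies not in this gauge construction itself but in the conic case flagged in Theorem \ref{extension of vortices intro}, where the background metric on $\mathbb{S}^2$ carries a cone angle $\beta$ at $p^+$. In that setting $F$ is smooth only with respect to the conic metric, and the polar-coordinate conversion above must be redone with the conformal factor of the cone included. The resulting decay exponents of $F_{\rho\theta}$ then depend on $\beta$, but one expects them to remain strong enough to drive the three limits in \eqref{auxiliary gauge}; verifying this carefully is essentially the content of Lemma 1.1 of \cite{S}.
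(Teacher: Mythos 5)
The paper does not actually prove this statement; it is quoted verbatim from Lemma~1.1 of \cite{S}, so there is no internal argument to compare yours against. Judged on its own terms, your proposal has a genuine gap, and it is concentrated in the step that matters most. Your derivation of the second and third limits rests on the claim that $F$ is smooth in a Cartesian chart near $p^+$, so that $dx\wedge dy=\rho\,d\rho\wedge d\theta$ forces $F_{\rho\theta}=O(\rho)$. But the hypothesis of the lemma is only condition $H$ for a connection on the \emph{punctured} disc; no regularity of $F$ across $p^+$ is assumed, and in the very situation where the paper applies this lemma the available bound is $|F_j|\leq C_\epsilon'|\xi|^{-2-\epsilon/2}$ (see \eqref{curvature estimate with round metric}), which is not even bounded, let alone $O(\rho)$. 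If $F$ extended smoothly across the puncture there would be nothing to prove -- the whole point of the auxiliary gauge is that it must be constructed \emph{before} any regularity of the curvature at $p^+$ is known, using only the gauge-theoretic datum \eqref{decay of A radial smooth}. Smith's actual argument builds the gauge transformation out of $A_\theta$ itself (angular averages and a radial normalization), not out of the curvature.

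There are two further problems with the gauge construction itself. First, $\chi(\rho,\theta)=i\int_0^\rho A_\rho(\rho',\theta)\,d\rho'$ integrates from the singular point outward, and finite $YMH_s$ energy controls $F$ in $L^2$, not $A_\rho$ in $L^1$ along rays; you would need to integrate inward from $\rho=R_0$ to get a well-defined $\chi$, which still yields $A'_\rho\equiv 0$ but changes nothing else. Second, and more seriously, the identity $A'_\theta(\rho,\theta)=A_\theta(0,\theta)+\int_0^\rho F_{\rho'\theta}\,d\rho'$ with ``the boundary term vanishes by Condition $H$'' is circular: the radial gauge transformation adds $i\,\partial_\theta\chi$ to $A_\theta$, a term not controlled by \eqref{decay of A radial smooth}, so the limit $\lim_{\rho\to 0}A'_\theta(\rho,\theta)$ in the \emph{new} gauge is exactly what must be established. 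Even granting radial integrability of $F_{\rho\theta}$, the radial gauge only pins down that limit up to a function $c(\theta)$ whose angular integral is constrained modulo $2\pi$ by the holonomy; condition $H$ does not force $c\equiv 0$ pointwise, and removing $c(\theta)$ by a further single-valued gauge transformation is precisely the delicate content of Lemma~1.1 of \cite{S} that your argument skips.
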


\noindent In the auxiliary gauge, one can estimate the $L^p$ norms of $A_\rho$ and $A_\theta$ separately, as in the technical Lemmas 1.2 and 1.3 in \cite{S}, so that \eqref{decaying condition for A smooth} follows. Classical arguments in \cite{U1} and \cite{U2} imply the theorem on removability of singularity:

\begin{theorem}[Theorem M in \cite{S}, Relevant Form]\label{theoremM}
On $B_{R_0} \backslash \{p+\}$ with Euclidean metric and $L$ a line bundle over it, let $A$ be a connection form satisfying condition $H$, and its curvature form $F \in L^p(B_{R_0})$ be smooth for $p\geq 1$. Assume that $(F,\phi)$ satisfies the Euler-Lagrange equation of Yang-Mills-Higgs energy functional \ref{YMH}, and $\phi \in H_2^1(B_{R_0})$. Then, there exists a continuous gauge transformation such that $(F,A)$ is gauge equivalent to a smooth pair $(\tilde{F},\tilde{\phi})$ over $B_4^2$ and the bundle extends smoothly to $B_{R_0}$.
\label{Theorem M}
\end{theorem}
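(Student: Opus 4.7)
The plan is to follow the three-stage strategy outlined in the text preceding the statement: first exploit Condition $H$ to put the connection in a gauge with $A_\theta \to 0$; next refine via the auxiliary gauge and use the Euler--Lagrange system to derive the sharp decay $\|A\|_p \leq O(\rho)$; and finally use this decay to solve the Coulomb gauge equation via the implicit function theorem, followed by elliptic bootstrapping to produce the smooth extension.

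For the first stage, the equivalence theorem cited just above is directly applicable and places $A$ in a unitary gauge with $A = A_\rho\,d\rho + A_\theta\,d\theta$ and $\lim_{\rho \to 0}A_\theta(\rho,\theta)=0$ in sup norm. For the second stage, one applies the auxiliary gauge lemma to achieve in addition $A_\rho(\rho,0)\to 0$ and $\p_\rho A_\theta \to 0$ as $\rho\to 0$. I would then estimate $A_\rho$ and $A_\theta$ separately: integrating the identity $F_{\rho\theta} = \p_\rho A_\theta - \p_\theta A_\rho$ angularly from $\theta=0$ (where $A_\rho$ vanishes in the limit) and radially from $\rho=0$ (where $A_\theta$ vanishes), and invoking the $L^p$-bound on $F$, yields integral bounds of the form $\|A_\rho\|_p,\|A_\theta\|_p \leq C\rho\|F\|_{L^p}$ on small annuli about $p^+$. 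The section contributes only a lower-order term through the YMH equation $d^\ast F = \mathrm{Im}\,\la D\phi,\phi\ra$, controlled by the assumption $\phi \in \W{1}{2}$. This is the content of Lemmas 1.2 and 1.3 of \cite{S} specialised to the Abelian setting, and gives \eqref{decaying condition for A smooth}.

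For the third stage, with $\|A\|_p \leq O(\rho)$ in hand, I would solve the Coulomb gauge equation $d^\ast(g^{-1}dg + g^{-1}Ag) = 0$ by the implicit function theorem on a small $H^{2,p}$-neighbourhood of the identity. Since the structure group $U(1)$ is Abelian, the linearisation at $g=\mathrm{Id}$ reduces to the scalar Laplacian $\Delta$, whose invertibility on the ball (with appropriate boundary data) combines with the smallness $\|A\|_p = O(\rho)$ to provide the required contraction; the resulting $g$ is continuous up to $p^+$. In this Coulomb gauge the Euler--Lagrange system for $YMH_s$ becomes an elliptic system for $(\tilde A, \tilde\phi)$, and standard Uhlenbeck-style bootstrapping along the lines of \cite{U1} and \cite{U2} upgrades the continuous pair to a smooth one across $p^+$. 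The local transition functions, which are algebraic expressions in the smoothed connection forms, then patch together to give a smooth line bundle over the full disc $B_{R_0}$.

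The main obstacle I anticipate is the second stage: extracting the \emph{quantitative} decay $\|A\|_p = O(\rho)$ rather than the merely qualitative $o(1)$-decay provided directly by Condition $H$. This requires exploiting the full coupled structure of the YMH equations, not just integrability of $F$, and carefully tracking the boundary terms in the angular and radial integrations in the auxiliary gauge. Once this estimate is secured, the passage to Coulomb gauge and the subsequent elliptic regularity are essentially routine in the Abelian setting and follow the Uhlenbeck template verbatim.
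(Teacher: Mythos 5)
Your proposal follows essentially the same route the paper takes: the paper does not prove this statement itself but cites it as Theorem M of \cite{S}, and the surrounding discussion outlines exactly your three stages --- Condition $H$ yielding a gauge with decaying $A_\theta$, the auxiliary gauge plus Lemmas 1.2 and 1.3 of \cite{S} giving $\|A\|_p = O(\rho)$, and then the Coulomb-gauge/implicit-function-theorem argument of \cite{U1,U2} for the smooth extension. Your sketch is a faithful reconstruction of that strategy, including correctly identifying the quantitative decay estimate as the technical heart of the argument.
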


\subsection{Conical Hermitian metrics}

Unfortunately, the hypotheses of theorem \ref{theoremM} are not guaranteed for the limiting vortex $[D_j,\phi_j]$ constructed in Theorem \ref{renormalization}. The first source of failure is the integrability of curvature form $F_j$ of $D_j$. In the notations of diagram \eqref{pull back diagram}, let again $z$ and $\xi$ be natural complex coordinates of $B(p_j,\epsilon) \subset \Sigma$ and $B_s \subset \mathbb{S}^2$, centered at $p_j$ and $p^+$, respectively. The K\"ahler form $\omega$ on $B(p_j,\epsilon)$ can be expressed as

\begin{equation}
  \omega(z) = \frac{1}{2\pi \sqrt{-1}} g(z) dz \wedge d\bar{z},
  \label{Kahler form on Sigma}
\end{equation}

\noindent with $g(0) = 1$. With further holomorphic coordinate change if necessary, the pullback K\"ahler form is

\begin{equation}
\omega_s^*(\xi) := R_j(s)^* \omega(\xi) = \frac{1}{2\pi \sqrt{-1}} g \left(\frac{\xi}{t_j(s)}\right) \frac{1}{(1+|\xi|^2)^2} d \xi \wedge d \bar{\xi},
\label{Kahler form pull back}
\end{equation}

\noindent and it is clear that $\omega_s^*$ approaches the standard round metric as $s \to \infty$. In other words, the limiting vortex the restriction of an affine vortex on $B_{R_0} \backslash \{p^+\} \simeq \mathbb{C} \backslash B(0,R_0^\prime)$ for some large $R_0^\prime$. For such a vortex, the pointwise norm of $F_\infty^*$ is controlled by the estimate of energy density given in \cite{Z}:

\begin{proposition}[\cite{Z}, Corollary 1.4]\label{z}
Let $\omega = dx\wedge dy$ be the standard area form on $\mathbb C$ and assume that $(A,\phi)$ is an affine vortex with such that the images of sections have compact closure. Define the energy density by

\[e(A,\phi) := |F_A|^2 + \sum_{i=0}^k |D_A \phi_i|^2 + \frac{1}{4} \left|\sum_{i=0}^k |\phi_i|_H^2-1\right|^2.\]

\noindent Then for every $\epsilon >0$ there exists a constant $C_\epsilon$ such that:

\begin{equation}
e(A,\phi) \leq C_\epsilon \, |z|^{-4+\epsilon}
\label{energy density bound affine vortex}
\end{equation}

\noindent for $|z|\geq 1$.
\end{proposition}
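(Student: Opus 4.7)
The plan is to deduce the pointwise decay from three ingredients: finite total energy, a local mean-value inequality for $e(A,\phi)$, and a removable singularity argument at infinity that identifies the affine vortex with a smooth object on $\mathbb{S}^2$ via stereographic projection.

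First, I would establish a small-energy regime at infinity. Because the images of the $\phi_i$ lie in a fixed compact set, the moment-map term $\tfrac14|\sum_i|\phi_i|_H^2-1|^2$ is bounded; combined with the affine vortex equations and the hypothesis that the total energy is finite, this yields $\int_{\mathbb{C}} e(A,\phi)\,dx\,dy<\infty$, and hence for every $\delta>0$ there exists $R_\delta$ with $\int_{B(z,1)} e(A,\phi) < \delta$ whenever $|z|\geq R_\delta+1$.

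Second, I would prove a Bochner-type inequality. A direct calculation, using the equations $\bar\partial_A\phi_i = 0$ and $F_A+\tfrac12\mu(\phi)\,\omega=0$ and differentiating the three components of $e(A,\phi)$, yields on the compact region of the target a schematic inequality
\begin{equation*}
\Delta e(A,\phi) \geq -c_1\, e(A,\phi)^2 - c_2\, e(A,\phi).
\end{equation*}
In the small-energy regime above, the quadratic term can be absorbed, and Moser iteration produces the mean-value inequality
\begin{equation*}
e(A,\phi)(z) \leq \frac{C}{r^2}\int_{B(z,r)} e(A,\phi)
\end{equation*}
for fixed $r$ and $|z|\geq R_\delta+r$. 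This already yields \emph{some} decay of $e$, but not yet the $|z|^{-4+\varepsilon}$ rate.

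Third, and this is the crucial geometric step, I would conformally reinterpret the affine vortex near $z=\infty$ as a vortex on a punctured disc around the north pole $p^+\in\mathbb{S}^2$ via the inversion $w=1/\bar z$. The integrability $\int|F_A|^2<\infty$ ensures that the holonomy of $D_A$ around the circles $\{|z|=R\}$ tends to the identity as $R\to\infty$, which is precisely Condition $H$ at $p^+$; the bounded section hypothesis makes $\phi$ lie in $H_2^1$ on a neighbourhood of $p^+$. Theorem \ref{Theorem M} then applies and produces, after a continuous gauge transformation, a smooth extension of the bundle, connection and sections across $p^+$. In particular the energy density of the extended vortex is uniformly bounded on a neighbourhood of $p^+$ in the $w$-chart by some constant $K$.

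Finally, I would transport this bound back to $\mathbb{C}$. The relation $dw\wedge d\bar w=|z|^{-4}\,dz\wedge d\bar z$ together with the fact that $e(A,\phi)\,dx\,dy$ is the $z$-chart representative of a geometric $2$-form which in the $w$-chart is bounded by $K$ gives $e(A,\phi)(z)\leq K|z|^{-4}$ for $|z|\geq 1$. The $\varepsilon$-loss in the exponent arises because the gauge produced by Theorem \ref{Theorem M} only controls $A$ in $L^p$ for each $p<\infty$ rather than $L^\infty$; the two-dimensional Sobolev embedding costs an arbitrarily small power of $|z|$ when transferring that control into a pointwise bound on $e$.

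The main obstacle is the third step: verifying Condition $H$ for the limiting affine vortex. Finite energy alone only gives $\int_{|z|>R}|F_A|^2\to 0$, and translating this $L^2$-decay of curvature into pointwise convergence of the holonomy around shrinking loops requires a careful isoperimetric-type argument (this is where Smith's analysis in \cite{S} of the angular component $A_\theta$ in the auxiliary gauge does the real work), together with the bounded-image hypothesis to rule out phase drift of $\phi$ along those loops.
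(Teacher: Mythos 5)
First, a point of reference: the paper does not prove this proposition at all --- it is imported verbatim from Ziltener \cite{Z}, whose argument runs through a sharp isoperimetric inequality for the invariant symplectic action of loops, an ODE-type estimate for the energy tail $E(R)=\int_{|z|\ge R}e(A,\phi)$ yielding $E(R)\le C_\epsilon R^{-2+\epsilon}$, and only then a mean value inequality. Your first two steps (small energy at infinity; Bochner inequality plus Moser iteration) are sound and are indeed ingredients of any proof, but they are not where the content lies.

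The genuine gap is your third step. The vortex equation $F_A+\tfrac12\mu(\phi)\,\omega=0$ is not conformally invariant: under $w=1/\bar z$ the area form acquires the factor $|w|^{-4}$, so the pulled-back pair on the punctured disc about $p^+$ solves a vortex equation whose conformal factor degenerates at $p^+$; it is not a critical point of the Euclidean Yang--Mills--Higgs functional on $B_{R_0}\setminus\{p^+\}$, so Theorem \ref{Theorem M} does not apply as stated. Worse, the two hypotheses you invoke are precisely the ones that can fail and that the present paper is built to circumvent: for an abelian connection the holonomy around $\{|z|=R\}$ equals $\exp\left(-\int_{B_R}F_A\right)$, which converges to $\exp\left(-\int_{\mathbb C}F_A\right)$ rather than to the identity, so Condition $H$ \eqref{decay of A radial smooth} holds only when the total curvature is an integral multiple of $2\pi\sqrt{-1}$ --- exactly the failure that forces the conic metrics and the $H_\beta$ condition \eqref{H beta condition} of Section 6.2. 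Likewise, the $L^p$ integrability of the pulled-back curvature near $p^+$ is deduced in the paper \emph{from} the estimate \eqref{energy density bound affine vortex} (and even then only with respect to a conic measure), so assuming it here is circular. If your step 3 worked, you would obtain the clean bound $e\le K|z|^{-4}$ with no $\epsilon$-loss, and the entire conic-singularity apparatus of Section 6 would be unnecessary. To repair the argument one must replace the compactification step by an intrinsic decay estimate on large annuli, which is exactly what Ziltener's isoperimetric inequality for the invariant symplectic action supplies.
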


\noindent Pulling back estimate \eqref{energy density bound affine vortex} onto $\mathbb{S}^2 \backslash \{p^+\} \simeq \mathbb{C}$ as above, the estimate for pointwise norm of $F_j$ we have is

\begin{equation}
  |F_j| \leq C_\epsilon^\prime |\xi|^{-2 - \frac{\epsilon}{2}}
  \label{curvature estimate with round metric}
\end{equation}

\noindent for some $\epsilon >0$. The estimate is clearly insufficient to guarantee any integrability of $F_j$ in the usual Lebesgue measure.

For the vortex $[D_j,\phi_j]$ that does not satisfy the H condition \eqref{decay of A radial smooth}, a reasonable modification is to associate a {\em conic} singularity at $p^+$ to absorb the singularity. We recall the following definition of canical metrics in dimension ~2:
\begin{definition}\label{conicmetric}
Let $\Sigma$ be a Riemann surface and $p\in \Sigma$. A conical K\"ahler metric of angle $\beta$, conical at $p$, is a metric whose K\"ahler $(1,1)$-form in a holomorphic coordinate system centered at $p$ looks like:
$$\omega = e^u \frac{dz\wedge d\bar z}{|z|^{2-2\beta}}$$
\noindent for some $\beta \in (0,1)$ near $p$. Here $u\in C^0(\Sigma)$.
\end{definition}
Such metrics can be realized as the pullback metric of the map $z \to z ^\beta$ and also the map $w\to w^{\frac{1}{\beta}}$, where defined, pulls such metrics back to a smooth metric.
\noindent
It is an easy calculation to see that the form of a conic metric at $\infty$ looks like:
\begin{equation}
  \omega_{\infty,\beta}^* :=e^u \;\frac{|\xi|^{-2+2\beta}}{2\pi \sqrt{-1}} \frac{1}{(1+|\xi|^2)^2} d\xi \wedge d\bar{\xi}
  \label{conic metric local expression}
\end{equation}
where $\xi= \frac{1}{z}$.

We now recall some properties of the function spaces described by Donaldson \cite{D} and their behavior under the Laplacian (cf. same reference).
Introducing a background metric $g_\beta$, conical along a divisor $D$, with associated distance $d_\beta$, let:
$$ \mathcal{C}^{,\alpha,\beta}(M,D) = \left\{ f \in \mathcal C^0(M) \mbox{ s.t. } \| f \|_{,\alpha,d_\beta} < +\infty \right\},$$
$$ \mathcal{C}^{,\alpha,\beta}_0(M,D) = \left\{ f \in \mathcal C^{,\alpha,\beta}(M,D) \mbox{ s.t. } f(x)=0 \mbox{ for all } x \in D \right\},$$
where
$$ \| f \|_{,\alpha,d_\beta} = \sup_{x\in M} |f(x)| + \sup_{x,y \in M} \frac{|f(x)-f(y)|}{d_\beta(x,y)^\alpha}.$$

As the notation suggests, those spaces depend on $\beta$ but they are independent of the particular conical metric $g_\beta$ chosen.
This follows from the fact that any two metrics on $M$ which are conical of angle $\beta$ along $D$ induce equivalent distances on $M$. We will sometimes refer to functions in these spaces as {\em{$\beta$-weighted}} functions of a given regularity.

In local holomorphic coordinates $z$ centered at a point of $D$ such that $D$ is the locus $z=0$ define
$$\Phi (z) =  |z|^{\beta -1} z$$ which is clearly a homeomorphism.
As noted by Donaldson \cite{D}, a function $f$ on $M$ is of class $\mathcal C^{,\alpha,\beta}$ if and only if it is $C^{0,\alpha}$ away from $D$, and $f \circ \Phi^{-1}$ is $C^{0,\alpha}$ for any choice as above of local coordinates around $D$.

Consider the change of coordinates $z=\psi(w)$, where the map $\psi$ is defined by
$$ \psi(w)=w^\frac{1}{\beta}, $$
with $w = \rho e^{\sqrt{-1}\theta}$, $0<\theta<\frac{2\pi\beta}{1+\beta}$.
Note $\psi$ is a biholomorphism on its image, therefore $(w$ is local holomorphic coordinates around any point not lying in $D$.
One can check that $\Phi \circ \psi$ is bi-Lipschiz, whence it follows that $\mathcal C^{,\alpha,\beta}$ is constituted by functions on $M$ that are $C^{0,\alpha}$ away from $D$, and such that $\psi^*f$ is $C^{0,\alpha}$ for any choice of local (holomorphic) coordinate $z$ around $D$ as above.

Now we pass to recall the definition of one and two forms of class $\mathcal C^{,\alpha,\beta}$. A $(1,0)$-form $\xi$ on $M$ is said to be of class $\mathcal C^{,\alpha,\beta}$ if it is $C^{0,\alpha}$ away from $D$, and $\psi^*\xi$ is of class $C^{0,\alpha}$ and it satisfies $\psi^*\xi\left(\frac{\partial }{\partial w}\right) \to 0$ as $w \to 0$. Analogously, a $(1,1)$-form $\eta$ on $M$ is said to be of class $\mathcal{C}^{,\alpha,\beta}$ if it is $C^{0,\alpha}$ away from $D$ and $\psi^*\eta$ is $C^{0,\alpha}$, and both the contractions of $\psi^*\eta$ with $\frac{\p}{\p w}$ or $\frac{\p}{\p \bar w}$ go to zero as $w \to 0$. One then defines:

$$\mathcal C^{2,\alpha,\beta} = \left\{f \in C^2(M \setminus D) \cap C^0(M) \mbox{ s.t. } f,\partial f,\partial\bar{\partial} f \mbox{ are of class } \mathcal C ^{,\alpha,\beta} \right\}.$$

\noindent We also consider the space:
$$\mathcal H:= \left\{f \in \mathcal{C}^{\infty}(M)  \text{ where } \Phi^*f  \in W^{1,2}\right\}$$

\noindent Then Donaldson proves the following (cf. \cite{adl} for a strengthening) :
\begin{theorem}[S.~Donaldson \cite{D}]\label{donaldson-regularity}
Let $\omega$ be a K\"ahler metric on the ball $B_6(0)$ which is of class $\mathcal{C}^{,\alpha,\beta}$ and satisfies $a_1 \, \Omega \leq \omega \leq a_2 \, \Omega$ for some suitable constants $a_1,a_2>0$. Suppose that $\alpha < \mu:= \frac{1}{\beta}-1$ and that $f$ is a function of class $\mathcal{C}^{,\alpha,\beta}$ defined on the ball $B_6(0)$, and $v \in \mathcal{H}$ is a weak solution of the equation $\Delta_\omega v = f$. Then the restriction $v|_{B_1(0)}$ is of class $\mathcal{C}^{2,\alpha,\beta}$.
\end{theorem}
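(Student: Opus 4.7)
The plan is a three-step program: first, reduce to a model flat conical Laplacian by holomorphic change of variables; second, prove a Schauder-type estimate for the model in the $\beta$-weighted H\"older spaces; third, perturb to the given $\omega$ using the quasi-isometry hypothesis $a_1\Omega\le\omega\le a_2\Omega$. For Step~1, fix as reference the flat conical metric $\Omega_0=|z|^{2\beta-2}\sqrt{-1}\,dz\wedge d\bar z$ on $B_6(0)$. Under the map $\psi(w)=w^{1/\beta}$, $\psi^\ast\Omega_0$ is a constant multiple of the Euclidean form $\sqrt{-1}\,dw\wedge d\bar w$ on the wedge of angle $2\pi\beta$, and the definitions recalled just before the theorem are exactly tailored so that $\psi^\ast$ gives isomorphisms $\mathcal{C}^{,\alpha,\beta}\cong C^{0,\alpha}$ and $\mathcal{C}^{2,\alpha,\beta}\cong C^{2,\alpha}$ on the wedge, with the appropriate identifications across $\theta_w=0$ and $\theta_w=2\pi\beta$. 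In these coordinates $\psi^\ast f\in C^{0,\alpha}$ and $\psi^\ast v\in H^{1,2}$ solves $\Delta_{\psi^\ast\omega}(\psi^\ast v)=\psi^\ast f$ weakly, with $\psi^\ast\omega$ quasi-isometric to the Euclidean form and of class $C^{0,\alpha}$.

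For Step~2 consider first $\omega=\Omega_0$. Write $v=G_{\Omega_0}f+h$, with $G_{\Omega_0}$ the Newtonian potential pulled back to the wedge and $h$ harmonic; the expansion of $h$ at the vertex runs over the homogeneous modes $|z|^n\cos(n\theta_z)$, $|z|^n\sin(n\theta_z)$ for $n\ge 0$, while the $\log|z|$ mode and the modes with negative exponent are excluded by $v\in\mathcal{H}$. The assumption $\alpha<\mu=\tfrac{1}{\beta}-1$ is what ensures that none of the surviving harmonic modes obstructs the target regularity: pulled back by $\psi$, the mode $|z|^n$ becomes $|w|^{n/\beta}$, and the inequality shifts the first non-trivial exponent past the $\beta$-weighted $C^{2,\alpha}$ threshold. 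Standard Calder\'on--Zygmund estimates on the wedge then show that $\partial\bar\partial(\psi^\ast v)$ is bounded in $C^{0,\alpha}$ by $\psi^\ast f$, yielding
\begin{equation*}
\|v\|_{\mathcal{C}^{2,\alpha,\beta}(B_1)}\le C\bigl(\|f\|_{\mathcal{C}^{,\alpha,\beta}(B_6)}+\|v\|_{L^2(B_6)}\bigr).
\end{equation*}

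For Step~3, write $\Delta_\omega=\Delta_{\Omega_0}+E$, where $E$ is second-order with coefficients in $\mathcal{C}^{,\alpha,\beta}$ that vanish at the vertex (because the $C^0$ factor $e^u$ from Definition~\ref{conicmetric} approaches a finite limit, and the discrepancy $\omega-e^{u(0)}\Omega_0$ is $\mathcal{C}^{,\alpha,\beta}$ with value zero at the vertex). On a small enough ball $B_\rho$, the operator norm $\|E\|_{\mathcal{C}^{2,\alpha,\beta}\to\mathcal{C}^{,\alpha,\beta}}$ is as small as one wishes, so a Neumann series based on the model right inverse of Step~2 inverts $\Delta_\omega$ with the same estimate, constants depending only on $a_1,a_2,\alpha,\beta$; equivalently, continuity along $\omega_t=(1-t)\Omega_0+t\omega$ transports the estimate. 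A finite cover of $B_1(0)$ by such small balls, one centered at the vertex and the rest interior, finishes the proof. The main obstacle lies in Step~2: matching the $\beta$-weighted H\"older data precisely to the homogeneous harmonic expansion at the vertex and verifying that $\alpha<\tfrac{1}{\beta}-1$ is the sharp threshold at which the model Green potential lands in $\mathcal{C}^{2,\alpha,\beta}$. The hypothesis $v\in\mathcal{H}$ is indispensable here as the uniqueness ingredient that excludes the singular harmonics and thereby justifies the Green-potential representation.
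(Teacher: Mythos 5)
A preliminary remark: the paper does not prove Theorem \ref{donaldson-regularity} at all; it is imported verbatim from Donaldson \cite{D} (with \cite{adl} cited for a strengthening), so there is no in-paper argument to compare yours against. Judged on its own terms, your three-step outline (reduce to the flat conical model, prove a model Schauder estimate, perturb) is the correct architecture and is essentially how the result is established in the literature. But as a proof it has a genuine gap exactly where the theorem lives: Step~2. The sentence ``standard Calder\'on--Zygmund estimates on the wedge then show that $\partial\bar\partial(\psi^*v)$ is bounded in $C^{0,\alpha}$'' asserts the entire content of the theorem. The wedge is not a domain on which interior Schauder theory applies uniformly up to the vertex: the coordinate change $\psi(w)=w^{1/\beta}$ degenerates there, the two edges of the wedge are identified, and the relevant potential is not the planar Newtonian potential restricted to the wedge but the Green function of the cone itself (equivalently, in the $z$-chart one must solve $\Delta_{\mathrm{euc}}v=|z|^{2\beta-2}f$ and estimate the Newtonian potential of a source that is only in $L^p$ for $p<(1-\beta)^{-1}$). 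Establishing that this potential lands in $\mathcal{C}^{2,\alpha,\beta}$, using the H\"older continuity of $f$ in the $d_\beta$ distance and the threshold $\alpha<\mu$, is the nontrivial kernel estimate that your sketch defers; you correctly identify it as ``the main obstacle'' but do not supply it.

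A second, related inaccuracy: the claimed isomorphisms $\mathcal{C}^{,\alpha,\beta}\cong C^{0,\alpha}$ and $\mathcal{C}^{2,\alpha,\beta}\cong C^{2,\alpha}$ under $\psi^*$ are not what the definitions give. The identification is correct at the level of functions (via the bi-Lipschitz comparison of $\Phi\circ\psi$ with the identity), but $\mathcal{C}^{2,\alpha,\beta}$ is defined by requiring $\partial v$ and $\partial\bar\partial v$ to be of class $\mathcal{C}^{,\alpha,\beta}$ \emph{as forms}, which carries the additional vanishing conditions on contractions with $\partial/\partial w$ recalled in the paper just before the theorem statement. These conditions are precisely where the hypothesis $\alpha<\mu=\frac{1}{\beta}-1$ bites (the first harmonic mode $z=w^{1/\beta}$ has $\partial_w$-derivative $\sim w^{1/\beta-1}$, H\"older of exponent exactly $\mu$), so they cannot be absorbed into a formal pullback of Euclidean Schauder theory; they must be checked for the Green potential and for the surviving harmonic modes separately. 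Your mode analysis for the harmonic part $h$ and your use of $v\in\mathcal{H}$ to exclude $\log|z|$ and negative modes are correct and are the right ingredients; Step~3 is routine once Step~2 is in place. To make this a complete proof you would need to either carry out the conical potential estimate or, more economically, do what the paper does and cite \cite{D} (or \cite{adl}) for it.
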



  With $\beta > \frac{\epsilon}{2}$, we see that $F_j \in L^1_\beta (B_{R_0})$, the space of integrable functions with measure defined by $\omega_{\infty,\beta}^*$ above. Following identical arguments in the proof of Theorem 4.1 in \cite{S} (with the condition that $\phi_j$ is smooth), we conclude that $F_j \in L^p_\beta (B_{R_0})$ for all $p \geq 1$.

Next, we naturally generalize the condition H above with conic metric. A Hermitian metic $h$ on a line bundle is conic if $h(s)= |z|^{2\beta}$ for holomorphic frame $s$.
The associated connection is
$$D= d+ i\beta d\theta$$
or, in holomorphic coordinates:
$$d+\beta \frac{dz}{z}$$
We view this as the standard model.

 For $\beta \in (0,1)$, an extra factor of $\rho^{2\beta-2}$ appears in the volume measure and to achieve the decay condition

\begin{equation}
  \left\Vert A \right\Vert_{p,\beta} |(B_{R_0}\backslash \{p^+\}) \leq O(\rho),
\end{equation}

\noindent the corresponding "$H_\beta$ condition" for integral estimates to hold true is then

\begin{equation}
   \lim_{\rho\to 0} \rho^{2\beta-2} A_\theta (\rho,\theta) = 0.
   \label{H beta condition}
\end{equation}

\noindent The analogous theorem for removal of singularity is then the following, whose proof is entirely identical to Theorem \ref{Theorem M} except the measure on the $L^p$ space is replaced by the one defined by conic metric. However, with \eqref{H beta condition}, all the estimates in \cite{S} remain valid and making substantial use of Theorem \ref{donaldson-regularity} we have

\begin{theorem}[Theorem $M_\beta$]
On $B_{R_0} \backslash \{p+\}$ with conic metric of angle $2\pi\beta$ at $p^+$, $L$ a line bundle over it, let $A$ be a connection form satisfying condition $H_\beta$, and its curvature form $F \in L^1_\beta(B_{R_0})$ be smooth. Assume that $(F,\phi)$ satisfies the Euler-Lagrange equation of Yang-Mills-Higgs energy functional \ref{YMH}, and $\phi \in \mathcal H (B_{R_0})$. Then, there exists a continuous gauge transformation such that $(F,A)$ is gauge equivalent to a smooth conic pair $(\tilde{F},\tilde{\phi})$ over $B_{R_0}$, the bundle extends smoothly to $B_{R_0}$ and:
$$\tilde A=d+ i\beta d\theta + \mathfrak a$$
where $\mathfrak a$ is smooth.
\label{Theorem M beta}
\end{theorem}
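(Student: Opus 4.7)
The plan is to follow Smith's proof of Theorem~\ref{Theorem M} in \cite{S} step by step, substituting every Lebesgue-type norm with its $\beta$-weighted counterpart defined by the conic metric, and finally invoking Donaldson's regularity theorem (Theorem~\ref{donaldson-regularity}) in place of the classical elliptic regularity used in the smooth setting. Because the skeleton of Smith's argument is linear in structure, the main obstacle is verifying that the weighted integral estimates analogous to Lemmas~1.2 and 1.3 of \cite{S} still close up with the extra Jacobian factor $\rho^{2\beta-1}$; once those are in place, the $L^p_\beta$ bootstrap for $F$ and the promotion of regularity via Donaldson's theorem are essentially black boxes.

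First I would upgrade the hypothesis $F\in L^1_\beta(B_{R_0})$ to $F\in L^p_\beta(B_{R_0})$ for every $p\geq 1$ by running the Euler--Lagrange bootstrap exactly as in Theorem~4.1 of \cite{S}, as the text already indicates just after \eqref{H beta condition}; the only change is that all integrals are now taken against the conic volume element $\rho^{2\beta-1}d\rho\wedge d\theta$. The condition $H_\beta$ is designed precisely so that integration by parts in $\rho$ against this weight produces no boundary contribution at $p^+$, playing the role that the vanishing of $A_\theta$ played in the smooth setting.

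Next I would construct the weighted auxiliary gauge. Starting from a gauge in which $H_\beta$ holds, one solves radial ODEs in $\rho$ to arrange
\[
\lim_{\rho\to 0}A_\rho(\rho,0)=0,\qquad \lim_{\rho\to 0}\rho^{2\beta-2}A_\theta(\rho,\theta)=0,\qquad \lim_{\rho\to 0}\rho^{2\beta-2}\frac{\partial}{\partial\rho}A_\theta(\rho,\theta)=0,
\]
which is the $\beta$-weighted analogue of \eqref{auxiliary gauge}. Using the $L^p_\beta$ control on $F$, Fubini in the conic measure, and Hardy-type inequalities valid in that measure, one rederives the decay
\[
\|A\|_{L^p_\beta(B_{R_0}\setminus\{p^+\})}\leq O(\rho),
\]
the conic version of \eqref{decaying condition for A smooth}. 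This is where the bulk of the adaptation lives: every Fubini slice and every Hardy bound in \cite{S} must be re-examined with the weight $\rho^{2\beta-1}$ inserted, and the exponents tracked so that no divergence at $\rho=0$ is introduced; this is the main technical obstacle.

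With weighted decay of $A$ in hand, I would fix $\alpha>0$ with $\alpha<\mu=\tfrac{1}{\beta}-1$ and solve the Coulomb-type gauge-fixing equation \eqref{regularity condition} by the implicit function theorem in Donaldson's spaces. The linearization at $g=\mathrm{id}$ is the conic Laplacian $\Delta_\omega$, which is invertible between the appropriate $\mathcal{C}^{2,\alpha,\beta}$ and $\mathcal{C}^{,\alpha,\beta}$ spaces by Theorem~\ref{donaldson-regularity}; that same theorem, applied iteratively, promotes the resulting $g$ to be smooth in the conic sense. After the gauge change, the holonomy around $p^+$ is the standard conic holonomy $e^{2\pi i\beta}$ preserved by $H_\beta$, so subtracting the explicit model $i\beta\,d\theta$ leaves a smooth $\beta$-conic remainder $\mathfrak{a}$. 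The section $\tilde{\phi}$ is smooth by the gauge-transformed Euler--Lagrange system combined once more with Theorem~\ref{donaldson-regularity} applied to $|\phi|^2_H$, the curvature $\tilde{F}=d\tilde{A}$ is smooth in the conic sense, and the bundle extends across $p^+$ with the conic holonomy captured by the canonical piece $i\beta\,d\theta$, completing the proof.
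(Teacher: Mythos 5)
Your proposal follows exactly the route the paper takes: the paper's entire ``proof'' of Theorem $M_\beta$ is the remark preceding its statement, namely that the argument is identical to Smith's Theorem \ref{Theorem M} with every Lebesgue $L^p$ norm replaced by its conic $L^p_\beta$ counterpart, condition $H_\beta$ substituted for condition $H$ so that the weighted estimates of Lemmas 1.2--1.3 of \cite{S} still close, and Theorem \ref{donaldson-regularity} supplying the elliptic regularity in the conic setting. Your write-up is a more explicit version of that same adaptation (the $L^1_\beta\to L^p_\beta$ bootstrap, the weighted auxiliary gauge, the Coulomb gauge via the implicit function theorem in Donaldson's spaces), so it matches the paper's approach.
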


For the vortex $[D_j,\phi_j]$ in particular, it remains to construct a trivialization in which the connection form of the limiting vortex $(D_j,\phi_j)$ on $B_{R_0} \backslash \{p^+\}$ satisfies condition $H_\beta$.

\begin{theorem}
  For the line bundle $L_j$ and the limiting gauge $[D_j,\phi_j]$ in the second statement of Theorem \ref{renormalization} failing to satisfy the H condition \eqref{decay of A radial smooth}, we may continuously extended it to $(\mathbb{S}^2,g_\beta)$, where $g_\beta$ is the conic metric of angle $\beta$ at the north pole.
  \label{removal of singularity on sphere}
\end{theorem}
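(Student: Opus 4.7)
The plan is to reduce the problem to verifying the hypotheses of Theorem~$M_\beta$ on a small disc around $p^+$, with the conic angle $\beta$ determined by the asymptotic holonomy of $D_j$ around $p^+$. First, I would localize: work in the coordinate $\xi = \rho e^{\sqrt{-1}\theta}$ on a geodesic disc $B_{R_0}$ around $p^+$ obtained from stereographic projection, and observe that since $\omega_s^*$ approaches the round metric as $s\to\infty$ (see \eqref{Kahler form pull back}), the Ziltener estimate, Proposition~\ref{z}, transfers to the pointwise bound $|F_j|(\xi)\leq C_\epsilon |\xi|^{-2-\epsilon/2}$ for every $\epsilon>0$, via the identification of the limit with an affine vortex on $\mathbb{C}\setminus B(0,R_0')$ as in \eqref{curvature estimate with round metric}.

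Second, extract the conic angle from asymptotic holonomy. For each small $R$, let $g(R)\in U(1)$ be the holonomy of $D_j$ along the positively oriented circle $\partial B_R$; by $C^1_{loc}$ convergence $D_s^*\to D_j$ from Theorem~\ref{renormalization} each $g(R)$ is well defined. Applying Stokes on annuli $B_{R_1}\setminus B_{R_2}$ together with the Ziltener bound (which, for any $\beta>\epsilon/2$, gives $F_j\in L^1_\beta$), the values $g(R)$ form a Cauchy net and converge to some $e^{2\pi\sqrt{-1}\beta_*}$ with $\beta_*\in[0,1)$. Since we assume that the $H$ condition fails, $\beta_*\neq 0$; pick $\beta\in(0,1)$ with $\beta>\epsilon/2$ and $\beta\equiv\beta_*\pmod{\mathbb{Z}}$.

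Third, gauge-transform to the conic model. On the punctured disc, the multivalued gauge $g(\xi)=\xi^\beta$ converts $D_j$ into
\[
\tilde A_j = \sqrt{-1}\beta\, d\theta + \mathfrak{a},
\]
where, by the choice of $\beta$, $\mathfrak{a}$ has trivial asymptotic holonomy and hence descends to a single-valued one-form on $B_{R_0}\setminus\{p^+\}$. Running the auxiliary gauge of \cite{S} on $\mathfrak{a}$, the decay conditions analogous to \eqref{auxiliary gauge} hold, which is precisely the $H_\beta$ condition \eqref{H beta condition}. Smith's bootstrap (Theorem 4.1 of \cite{S}) then upgrades $F_j\in L^1_\beta(B_{R_0})$ to $F_j\in L^p_\beta(B_{R_0})$ for all $p\geq 1$. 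For the Higgs field, use that each $\phi_{j,i}$ is the pullback of a hyperplane section under the holomorphic map $\tilde f_j$, which by Theorem~\ref{removable singularity for maps} extends holomorphically across $p^+$, so that $\phi_j\in\mathcal{H}(B_{R_0})$ in the sense required for Theorem~$M_\beta$.

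Finally, apply Theorem~$M_\beta$ to obtain a continuous gauge transformation identifying $[D_j,\phi_j]|_{B_{R_0}\setminus\{p^+\}}$ with a smooth conic pair on $(B_{R_0},g_\beta)$, and glue with the global structure on $\mathbb{S}^2\setminus\{p^+\}$ to produce the continuous extension to $(\mathbb{S}^2,g_\beta)$. The main obstacle is the circular coupling in Step~2: the existence of $\lim_{R\to 0}g(R)$ already requires curvature integrability against a conic measure, which depends on $\beta$ itself; resolving this requires exploiting that the Ziltener estimate holds for \emph{every} $\epsilon>0$, so that any $\beta\in(\epsilon/2,1)$ is admissible for the preliminary integrability needed to define $\beta_*$, after which $\beta$ can be fixed by the asymptotic holonomy.
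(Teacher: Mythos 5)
Your route is genuinely different from the paper's, and it has a gap at its central step. The paper does not extract the cone angle from the asymptotic holonomy at all: it fixes $\beta$ essentially freely (subject to the integrability constraint $\beta>\epsilon/2$ and to $3-2\beta>1$), then solves the explicit gauge-fixing equation \eqref{ODE for decaying gauge} so that in the new gauge the connection form is literally $\frac{\sqrt{-1}}{|z|^\alpha}dz$ with $\alpha>3-2\beta$; pulling back to the sphere gives $(A^\gamma_j)^*_\theta=\rho^{\alpha-1}G$ as in \eqref{decaying connection form polar on sphere}, and condition $H_\beta$ \eqref{H beta condition} is then a one-line verification. Your proposal instead follows the Uhlenbeck--Smith template: define $\beta$ as the limit of the holonomies $g(R)$, gauge away the model part $\sqrt{-1}\beta\,d\theta$, and run the auxiliary gauge on the remainder.

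The gap is in your Step 2. The Cauchy property of $g(R)$ via Stokes requires $\int_{B_{R_1}\setminus B_{R_2}}F_j\to 0$, and this is the integral of the curvature \emph{two-form}, i.e.\ taken against the ordinary area element; it is not controlled by membership in $L^1_\beta$, which is integrability against the conic measure. From the only available bound \eqref{curvature estimate with round metric}, namely $|F_j|\leq C_\epsilon'\rho^{-2-\epsilon/2}$, one gets $\int_{B_{R_1}\setminus B_{R_2}}|F_j|\,\rho\,d\rho\,d\theta\lesssim R_2^{-\epsilon/2}$, which diverges as $R_2\to 0$. So the net $g(R)$ need not converge, $\beta_*$ need not exist, and your proposed resolution of the ``circular coupling'' (exploiting that Ziltener's estimate holds for every $\epsilon>0$) addresses integrability against the conic measure, which is not the quantity entering Stokes. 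This failure of absolute integrability of $F_j$ in the flat measure is exactly the obstruction that allows condition $H$ \eqref{decay of A radial smooth} to fail in the first place, and it is why the paper bypasses holonomy entirely and prescribes the gauge directly. (Your final appeal to Theorem \ref{Theorem M beta} and your treatment of the Higgs field via holomorphic extension of $\tilde f_j$ are consistent with the paper's intent; the defect is confined to the holonomy step that your whole construction of $\beta$ rests on.)
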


\begin{proof}
Let $A_j = A_j(z) dz$, where $A_j(z) \in \mathfrak{u}(1)$. On the domain $\mathbb{C} \backslash B(0,R_0^\prime)$, consider, for some $\alpha > 3-2\beta>1$, the differential equation

\begin{equation}
  A^\gamma_j=\frac{\p}{\p z} \log \gamma + A_j(z) = \frac{\sqrt{-1}}{|z|^\alpha}dz.
  \label{ODE for decaying gauge}
\end{equation}

The gauge $\gamma$ can be easily expressed by integration: (Note that the right hand side is integrable since $\alpha >1$.)

\begin{equation}
  \gamma = \exp\left(\sqrt{-1}\int_{z_0}^z \left(\frac{1}{|z|^\alpha}-A_j(z')\right)dz'\right) \in U(1).
  \label{decaying gauge}
\end{equation}

\noindent In polar coordinate $(r,\theta)$ of $B_{R'}$, $A_\gamma$ can be written as

\begin{equation}
A^\gamma_j = e^{\sqrt{-1}\theta}\left(\frac{\sqrt{-1}}{r^\alpha} dr - \frac{1}{r^{\alpha-1}} d\theta\right).
\label{decaying connection form polar}
\end{equation}

\noindent Pulling back to $\mathbb{S}^2$ via $R_j(s)$ and let $s\to\infty$, we have, in polar coordinate $(\rho,\theta)$ of $\mathbb{S}^2$ near $p^+$, that

\begin{equation}
  (A^\gamma_j)^*_\theta:=R_j(\infty)^* (A^\gamma_j)_\theta = \rho^{\alpha-1} G(\rho,\theta),
  \label{decaying connection form polar on sphere}
\end{equation}

\noindent where $G$ is a smooth nonvanishing function on $B_{R_0}$. Since $\alpha-1-(2-2\beta)>0$, \eqref{H beta condition} holds and we are done.

\end{proof}
\subsection{Bubble trees with cones}

The description of bubble tree at the root level is now complete, and the entire bubble tree is essentially a finite number of iterations of these renormalization processes. To unify the notation, we denote $\Sigma$ by $T_0$ and relabel the bubble points by

\begin{equation}
  \mathbb{B}^0:= \{p_1^0,\ldots,p_{N_0}^0,q_1^0,\ldots,q_{N_0^\prime}^0,r_1^0,\ldots,r_{N_0^{\prime\prime}}^0\}\subset T_0,
  \label{bubble point at T 0}
\end{equation}

\noindent where $p$,$q$, and $r$ are bubble points where round spheres, conic spheres (raindrops), and holomorphic spheres that are bubbled off respectively. The first level $T_1$, is then a disjoint union of these $N_0+N_0^\prime + N_0^{\prime\prime}$ bubbles with various types:

\begin{equation}
  T_1 := \left(\bigsqcup_{i=1}^{N_0} \mathbb{S}^2_{p_i^0}\right) \sqcup \left(\bigsqcup_{j=1}^{N_0^\prime} \mathbb{S}^2_{q_j^0}\right) \sqcup \left(\bigsqcup_{l=1}^{N_0^{\prime\prime}} \mathbb{S}^2_{r_l^0}\right),
  \label{bubble tree first level}
\end{equation}

\noindent  Each sphere is wedged to its designated bubble point at its north pole, and those spheres from the first two components above may contain new bubble points. We similarly classify them by the types of new bubbles they form, as in $\mathbb{B}^0$:

\begin{equation}
  \mathbb{B}^1:=\{p_1^1,\ldots,p_{N_1}^1,q_1^1,\ldots,q_{N_1^\prime}^1,r_1^1,\ldots,r_{N_1^{\prime\prime}}^1\}\subset T_1.
  \label{bubble point at T 1}
\end{equation}

\noindent $\mathbb{B}^1$ give rise to a new set of $N_1+N_1^\prime +N_1^{\prime\prime}$ bubbles whose disjoint union is $T_2$ with new bubble points $\mathbb{B}_2$. The process is iterated and we have the main theorem of this article:

\begin{theorem}[Bubble Tree]
The vortices $V_s:=\{[D_s,\phi_s]\}$ on a degree $r$ line bundle $L$ over $\Sigma$ Gromov converge to a vortex $\mathcal{V}:=[\mathcal{D},\varPhi]$ over a degree $r$ line bundle $\mathcal{L}$ over a bubble tree $\mathcal{T}$ defined by

\begin{equation}
  \mathcal{T}:= T_0 \vee T_1 \vee \cdots  \vee T_{N_{\mathcal{V}}},
  \label{bubble tree wedges}
\end{equation}

\noindent where $T_0 = \Sigma$.

\end{theorem}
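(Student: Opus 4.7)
The plan is to iterate the root-level constructions of Theorems~\ref{Removal of Singularities}, \ref{renormalization}, and \ref{removal of singularity on sphere} to build the levels $T_0, T_1, \ldots$ of the bubble tree, check that this recursion terminates after finitely many steps, and then transport the Gromov compactness of Theorem~\ref{Gromov compactness for maps} for the associated holomorphic maps back to the vortex side via the diffeomorphism \eqref{correspondence maps vortices}.

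First, at the root level apply Theorem~\ref{Removal of Singularities} to the given sequence $\{[D_s,\phi_s]\}\subset\nukzero$: this produces the bubble set $\mathbb{B}^0\subset T_0=\Sigma$ with multiplicities $\{a_j\}$, $\sum_j a_j\leq r$, and a smooth limit vortex $[D_0,\phi_0]$ on the reduced bundle $L_0 = L\otimes_j \mathcal{O}(-a_j p_j)$ satisfying \eqref{infity vortex after removal of singularity}. Then at each $p_j\in\mathbb{B}^0$ apply Theorem~\ref{renormalization}: after excluding the fast-blow-up case, one obtains either a holomorphic sphere in $\pk$ (slow case) or a vortex on a punctured $\mathbb{S}^2$ of residual degree $a_j^0\leq a_j$ with a further bubble set $\{p_j^l\}_{l=1}^{N_j}$ (moderate case), and in the moderate case Theorem~\ref{removal of singularity on sphere} extends the limit across $p^+$, possibly endowing that sphere with the conic metric $g_\beta$ of Definition~\ref{conicmetric}. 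The disjoint union of these bubbles, each wedged to its base point in $T_0$ at its north pole, is the level $T_1$. The same three-theorem cycle applied to every component of $T_1$ that carries a vortex with nonempty bubble set produces $T_2$, and so on recursively.

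The main obstacle is termination of this recursion. Energy conservation \eqref{energy conservation} at every bubble point gives $a_j = E(\tilde{f}_{p_j}) + \sum_l a_j^l$, so whenever a genuine vortex bubble is produced the offspring multiplicities $\{a_j^l\}$ are positive integers strictly smaller than $a_j$; Lemma~\ref{ghost bubble} forces at least two offspring at ghost-bubble vertices and thereby excludes infinite chains of mass-preserving descendants. Combining this strict decrease with the uniform lower bound $B_0>0$ of Theorem~\ref{minimum energy requirement} and the integrality of the multiplicities, every branch terminates either in a holomorphic-sphere bubble or in a bubble with empty bubble set after at most $N_{\mathcal{V}}\leq r/B_0$ levels. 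An inductive use of \eqref{sum of energy conformal} at each level shows that the line bundle $\mathcal{L}\to\mathcal{T}$ obtained by patching $L_0$ and the $L_j$'s across wedge points has total degree $r$.

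Finally, to phrase the statement as Gromov convergence in the sense of Theorem~\ref{Gromov compactness for maps}, apply the inverse diffeomorphism of \eqref{correspondence maps vortices} to obtain the associated sequence $\{f_s\}\subset\Hol$, form the prolongations $\mathcal{P}_{\epsilon_s}(f_s):\mathcal{T}\to\pk$ of Section~4 using the Extension Lemma~\ref{extension lemma}, and invoke Theorem~\ref{Gromov compactness for maps} to obtain $C^0\cap L^{1,2}$ convergence on $\mathcal{T}$ to a limit map $f:\mathcal{T}\to\pk$. Applying the forward diffeomorphism of \eqref{correspondence maps vortices} component by component produces the limit vortex $[\mathcal{D},\varPhi]$ on $\mathcal{L}$; consistency of the limit transition data across each wedge point is precisely \eqref{family of transitions in the limit} of Theorem~\ref{renormalization}, while smoothness (with possibly conic singularities at each north pole) follows from Theorem~\ref{removal of singularity on sphere}.
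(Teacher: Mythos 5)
Your proposal is correct and follows essentially the same route as the paper: inductive construction of the levels $T_n$ by iterating Theorems \ref{Removal of Singularities}, \ref{renormalization} and \ref{removal of singularity on sphere}, termination via the ghost-bubble Lemma \ref{ghost bubble} and the energy quantum of Theorem \ref{minimum energy requirement}, degree $r$ from energy conservation, and Gromov convergence of the vortices defined through the prolongations and convergence of the associated holomorphic maps. The only cosmetic difference is that the paper phrases the termination bound in terms of the constant $C_0$ rather than your $r/B_0$, which is the same argument.
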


\begin{proof}

$T_n$ is constructed inductively as above, with associated bubble points $\mathbb{B}_n$. $T_n$ and $T_{n+1}$ are then wedged at $\mathbb{B}_n \subset T_n$ and the north poles of each sphere in $T_{n+1}$ designated to its bubble point. By Lemma \ref{ghost bubble} and  Theorem \ref{minimum energy requirement}, each renormalization reduces the total energy by at least $C_0 > 0$ and therefore the bubble tower consists of at most finite number ($N_\mathcal{V}$) of levels.

$\mathcal{L}$ is the line bundle over $\mathcal{T}$ whose restriction to each sphere $\mathbb{S}^2_{p_i^n} \backslash \{p^+\}$ is the holomorphic line bundle determined by the connection $D_{p_i^n}$ that is determined by iterations of Theorem \ref{renormalization}. $\varPhi$ is constructed identically. Allowing the possibility of conic singularity, the discussions above conclude that the extensions of vortices are possible whenever the corresponding holomorphic maps extend. Therefore, we may define prolongation of vortices $V_s$, $\mathcal{V}_s$, identically by their corresponding maps. The Gromov convergence of $V_s$ to $\mathcal{V}$ are then defined by the $C^1$ convergence from $\mathcal{V}_s$ to $\mathcal{V}$ on $\mathcal{T}$, which follow from convergence of their corresponding maps. Conic singularities are appropriately introduced at points where energy density does not decay fast enough. Since energy is conserved throughout the entire process, the degree of $\mathcal{L}$, or the sum of the energies of $\mathcal{V}$ on $\Sigma$ and all bubbles, is precisely the original degree $r$.

\end{proof}

The Gromov limit of a sequence of vortices have been constructed. The natural follow up construction is the moduli space whose boundary includes all these bubble trees. Furthermore, we expect some kind of dynamics, or $L^2$ metric on the space. The rate of convergence, or equivalently the rate of coalescence of zeros, will play an important role in this metric. We are eager to purse, or learn any possible progress in this direction.

\section{Acknowledgement}

The second author is supported by grant 105-2115-M-006-012, Ministry of Science and Technology of Taiwan.

\end{document}